\newtheorem{theorem}{Theorem}
\newtheorem{proposition}{Proposition}
\newtheorem{lemma}{Lemma}
\newtheorem{corollary}{Corollary}
\newtheorem{example}{Example}
\newtheorem{remark}{Remark}
\begin{document}

\title{Towards Robustness in Residue Number Systems}

\author{\mbox{Li Xiao, Xiang-Gen Xia, and Haiye Huo}
\thanks{L. Xiao and X.-G. Xia are with the Department of
Electrical and Computer Engineering,
University of Delaware, Newark, DE 19716, U.S.A. (e-mail:
\{lixiao, xxia\}@ee.udel.edu).}

\thanks{H. Huo is with the School of Mathematical Sciences and LPMC, Nankai University, Tianjin 300071, China. (e-mail:
hyhuo@mail.nankai.edu.cn).}
}
\maketitle

\begin{abstract}
The problem of robustly reconstructing a large number from its erroneous remainders with respect to several moduli, namely the robust remaindering problem, may occur in many applications including phase unwrapping, frequency detection from several undersampled waveforms, wireless sensor networks, etc. Assuming that the dynamic range of the large number is the maximal possible one, i.e., the least common multiple (lcm) of all the moduli, a method called robust Chinese remainder theorem (CRT) for solving the robust remaindering problem has been recently proposed. In this paper, by relaxing the assumption that the dynamic range is fixed to be the lcm of all the moduli, a trade-off between the dynamic range and the robustness bound for two-modular systems is studied. It basically says that a decrease in the dynamic range may lead to an increase of the robustness bound. We first obtain a general condition on the remainder errors and derive the exact dynamic range with a closed-form formula for the robustness to hold. We then propose simple closed-form reconstruction algorithms. Furthermore, the newly obtained two-modular results are applied to the robust reconstruction for multi-modular systems and generalized to real numbers. Finally, some simulations are carried out to verify our proposed theoretical results.
\end{abstract}
\begin{IEEEkeywords}
Chinese remainder theorem, dynamic range, frequency estimation from undersamplings, residue number systems, robust reconstruction.
\end{IEEEkeywords}
\newpage

\section{Introduction}
The Chinese remainder theorem (CRT) also known as \textit{Sunzi Theorem} provides a reconstruction formula for a large nonnegative integer from its remainders with respect to several moduli, if the large integer is less than the least common multiple (lcm) of all the moduli. The CRT has applications in many fields, such as computing, cryptograph, and digital signal processing \cite{CRT3,CRT1,CRT2}. Note that all the remainders in the CRT reconstruction formula have to be error-free, because a small error in a remainder may cause a large reconstruction error.
In this work, we consider a problem of robustly reconstructing a large nonnegative integer when the remainders have errors, called the robust remaindering problem, and its applications can be found in phase unwrapping in radar signal processing \cite{wxu1994,jorgensen2000,gwang2004,mruegg2007,gli1-2007,gli2-2008,yimin2008,yimin2015,wkqi2009,xwli1-2011,grslet2013,aa2015}, multiwavelength optical measurement \cite{fala2-2011,fala1-2013,tang}, wireless sensor networks \cite{dad3-2003,dad4-2012,deng,chessa2012,dad1-2013,yishengsu}, and computational neuroscience \cite{haft,fiete,fiete2,stemm}. In this robust remaindering problem, two fundamental questions are of interested: $1)$ What is the dynamic range of the large integer and how large can the remainder errors be for the robustness to hold? $2)$ How can the large integer be robustly reconstructed from the erroneous remainders? Here, the dynamic range is defined as the minimal positive number of the large integer such that the robustness does not hold.
For the first question, the larger the dynamic range and the remainder errors can be, the better the reconstruction is. It is not hard to see that the maximal possible dynamic range is the lcm of all the moduli. For the second question, it is the reconstruction algorithm.

When the dynamic range is assumed to be the maximal possible one, i.e., the lcm of all the moduli, a robust CRT method for solving the robust remaindering problem has been investigated in \cite{xia1-2007,xwli2-2009,wjwang2010,guangwu,binyang2014,xiaoxia1,xiaoxia2}. In these papers, the folding integers (i.e., the quotients of the large integer divided by the moduli) are first accurately determined, and a robust reconstruction is then given by the average of the reconstructions obtained from the folding integers. In \cite{xia1-2007,xwli2-2009,wjwang2010,guangwu}, a special case when the remaining integers of the moduli factorized by their greatest common divisor (gcd) are pairwise co-prime was considered. It basically says that the reconstruction error is upper bounded by the remainder error bound $\tau$ if $\tau$ is smaller than a quarter of the gcd of all the moduli (see Proposition \ref{pr2} in Section \ref{sec2}). Notably, a necessary and sufficient condition for accurate determination of the folding integers (see Proposition \ref{pr1} in Section \ref{sec2}) and their closed-form determination algorithm were presented in \cite{wjwang2010}. Recently, an improved version of robust CRT, called multi-stage robust CRT, was proposed in \cite{binyang2014,xiaoxia1}, where the remaining integers of the moduli factorized by their gcd are not necessarily pairwise co-prime. It is shown in \cite{xiaoxia1} that the remainder error bound may be above the quarter of the gcd of all the moduli. By relaxing the assumption that the dynamic range is fixed to the maximum, i.e., the lcm of all the moduli,
another method of position representation on the remainder plane was proposed for solving
the robust remaindering problem with only two moduli $m_1,m_2$ and $m_1<m_2$ in \cite{new}. Different from the robust CRT, all the nonnegative integers less than the dynamic range are connected by the slanted lines with the slope of $1$ on the two dimensional remainder plane and a robust reconstruction is obtained by finding the closest point to the erroneous remainders on one of the slanted lines in \cite{new}. As the dynamic range increases, the number of the slanted lines increases, and thereby, the distance between the slanted lines decreases, that is, the remainder error bound becomes small. In \cite{new},
an exact dynamic range was first presented, provided that the remainder error bound is smaller than a quarter of the remainder of $m_2$ modulo $m_1$ (see Proposition \ref{pr3} in Section \ref{sec2}). When the remainder of $m_2$ modulo $m_1$ does not equal the gcd of $m_1$ and $m_2$, an extension with a smaller remainder error bound and a larger dynamic range was further obtained, and as
the dynamic range increases to the lcm of the moduli, the remainder error bound will decrease to the quarter of the gcd of the moduli (see Proposition \ref{pr4} in Section \ref{sec2}). In \cite{new}, however, no closed-form reconstruction algorithms were proposed, and in the extension result, only lower and upper bounds of the dynamic range were provided, while the exact one was not derived or given.
In some practical applications, considering that an unknown is real-valued in general, the robust remaindering problem and the above two different solutions were naturally generalized to real numbers in \cite{wjwang2010,wenchao2013,wenjie2015} and \cite{new}.

Different from robustly reconstructing a large integer from its erroneous remainders in the robust remaindering problem, another technique to resist remainder errors, i.e.,
the Chinese remainder code as an error-correcting code based on Redundant Residue Number Systems, has been studied extensively in \cite{kls1-1992,kls2-1992,vgoh2008,lle2-2000,lly2-2002,lly3-2002,lle-2006,sundaram,goldreich,guruswa,boneh,shparlinski}.
When only a few of the remainders are allowed to have errors and most of the remainders have to be error-free, there has been a series of results on unique decoding of the Chinese remainder code in \cite{kls1-1992,kls2-1992,vgoh2008,lle2-2000,lly2-2002,lly3-2002,lle-2006,sundaram}, where the large integer is
accurately recovered as a unique output in the decoding algorithm. If the number of the remainder errors is larger, i.e., the error rate is larger, list decoding of the Chinese remainder code has been investigated as a generalization of unique decoding in \cite{goldreich,guruswa,boneh,shparlinski}, where the decoding algorithm outputs a small list of possibilities one of which is accurate.

In this paper, we are interested in the robust remaindering problem with only two moduli as in \cite{new} and consider the relationship between the dynamic range and the remainder errors. Motivated from the robust CRT in \cite{wjwang2010}, we want to accurately determine the folding integers from the erroneous remainders in this paper. Compared with the condition that the remainder error bound is smaller than a quarter of the remainder of $m_2$ modulo $m_1$ (see Proposition \ref{pr3} in Section \ref{sec2}), we first present a general condition on the remainder errors such that the folding integers can be accurately determined, and a simple closed-form determination algorithm is proposed in this paper. We then extend this result, if the remainder of $m_2$ modulo $m_1$ does not equal the gcd of $m_1$ and $m_2$. Compared with the corresponding result (see Proposition \ref{pr4} in Section \ref{sec2}) in \cite{new}, we give the exact dynamic range with a closed-form formula, and we also present
a general condition on the remainder errors and a closed-form algorithm for accurate determination of the folding integers. Finally, the newly obtained results are applied to multi-modular systems by using cascade architectures, and generalized to real numbers in this paper.

The rest of the paper is organized as follows. In Section \ref{sec2}, we briefly state the robust remaindering problem and review two existing different solving methods obtained in \cite{wjwang2010,new}. In Section \ref{sec3}, compared with the result (see Proposition \ref{pr3} in Section \ref{sec2}) in \cite{new}, we present a simple closed-form algorithm for accurate determination of the folding integers and derive a general condition on the remainder errors. In Section \ref{sec4}, we extend the result obtained in Section \ref{sec3}, and furthermore, the exact dynamic range is derived and a closed-form determination algorithm is also proposed. In Section \ref{sec5}, we study robust reconstruction for multi-modular systems and a generalization to real numbers based on the newly obtained results. In Section \ref{sec6}, we present some simulation results to demonstrate the performance of our proposed algorithms. In Section \ref{sec7}, we conclude the paper.

\textit{Notations:} The gcd and the lcm of two or more positive integers $a_1,a_2,\cdots,a_L$ are denoted by $\mbox{gcd}(a_1,a_2,\cdots,a_L)$ and $\mbox{lcm}(a_1,a_2,\cdots,a_L)$, respectively. Two positive integers are said to be co-prime, if their gcd is $1$. Given two positive integers $a$ and $b$, the remainder of $a$ modulo $b$ is denoted as $|a|_b$. It is well known that $\lfloor\ast\rfloor$, $\lceil\ast\rceil$, and $[\ast]$ stand for the floor, ceiling, and rounding functions, respectively. To distinguish from integers, we use boldface symbols to denote the real-valued variables.

\section{Preliminaries}\label{sec2}
Let $N$ be a nonnegative integer, $1<m_1<m_2<\cdots<m_L$ be $L$ moduli, and $r_1,r_2,\cdots,r_L$ be the corresponding remainders of $N$, i.e.,
\begin{equation}\label{fold}
r_i\equiv N\mbox{ mod }m_i\quad\mbox{ or }\quad N=n_im_i+r_i,
\end{equation}
where $0\leq r_i<m_i$, and $n_i$ is an unknown integer which is called folding integer, for $1\leq i\leq L$. It is well known that when $N$ is less than the lcm of all the moduli, $N$ can be uniquely reconstructed from its remainders via the CRT \cite{CRT1,CRT2,CRT3} as
\begin{equation}\label{reconCRT}
N=\left|\sum_{j=1}^{L}r_jD_jM_j\right|_{\mbox{lcm}(m_1,m_2,\cdots,m_L)},
\end{equation}
where $M_j=\mbox{lcm}(m_1,m_2,\cdots,m_L)/\mu_j$, $D_j$ is the modular multiplicative inverse of $M_j$ modulo $\mu_j$ (i.e.,
$1\equiv D_jM_j\mbox{ mod }\mu_j$), if $\mu_j\neq1$, else $D_j=0$, and $\{\mu_1,\mu_2,\cdots,\mu_L\}$ is a set of $L$ pairwise co-prime positive integers such that $\prod_{i=1}^{L}\mu_i=\mbox{lcm}(m_1,m_2,\cdots,m_L)$ and $\mu_i$ divides $m_i$ for each $1\leq i\leq L$. In particular, when moduli $m_i$ are pairwise co-prime, we can let $\mu_i=m_i$ for $1\leq i\leq L$, and then the above reconstruction formula in (\ref{reconCRT}) reduces to the traditional CRT with pairwise co-prime moduli.

The problem we are interested is to robustly reconstruct $N$ when the remainders $r_i$ have errors:
\begin{equation}
0\leq\tilde{r}_i<m_i\quad\mbox{ and }\quad |\tilde{r}_i-r_i|\leq\tau,
\end{equation}
where $\triangle r_i\triangleq \tilde{r}_i-r_i$ is the remainder error, and $\tau$ is an error level, also called remainder error bound. Now we want to reconstruct $N$ from the known moduli and the erroneous remainders such that the reconstruction error is linearly bounded by the remainder error bound $\tau$. This problem has two aspects. The first aspect is what the dynamic range of $N$ is and how large the remainder errors can be for the robustness to hold. Clearly, the larger the dynamic range and the remainder error bound $\tau$ are for the robustness, the better the reconstruction is.
The second aspect is the reconstruction algorithm.

In what follows, we briefly describe two different methods for solving the robust remaindering problem, respectively introduced in \cite{wjwang2010} and \cite{new}.

\subsection{Method of Robust CRT}

Suppose that the dynamic range of $N$ is the maximal possible one, i.e., the lcm of all the moduli. A robust reconstruction method, i.e., robust CRT, has been studied in
\cite{xia1-2007,xwli2-2009,wjwang2010,guangwu,binyang2014,xiaoxia1,xiaoxia2}, where the basic idea is to accurately determine the unknown folding integers $n_i$ for $i=1,2,\cdots,L$ in (\ref{fold}) that may cause large errors in the reconstruction if they are erroneous. Once the folding integers are accurately found, an estimate of $N$ can be given by
\begin{equation}\label{reconN}
\begin{split}
\hat{N}&=\left[\frac{1}{L}\sum_{i=1}^{L}(n_im_i+\tilde{r}_i)\right]\\
&=N+\left[\frac{1}{L}\sum_{i=1}^{L}\triangle r_i\right].
\end{split}
\end{equation}
Recall that $[\ast]$ denotes the rounding function, i.e., for any real number $\textbf{x}$, $[\textbf{x}]$ is an integer subject to
\begin{equation}
-\frac{1}{2}\leq \textbf{x}-[\textbf{x}]<\frac{1}{2}.
\end{equation}
In fact, $[\textbf{x}]=\lfloor\textbf{x}+0.5\rfloor$.
From $|\triangle r_i|\leq\tau$ for $1\leq i\leq L$, one can see that
\begin{equation}
|\hat{N}-N|\leq\tau,
\end{equation}
i.e., $\hat{N}$ in (\ref{reconN}) is a robust estimate of $N$.

Write $m_i=m\Gamma_i$ for $1\leq i\leq L$, where $m$ is the gcd of all the moduli, i.e., $m=\mbox{gcd}(m_1,m_2,\cdots,m_L)$.
When the remaining integers $\Gamma_i$ of the moduli factorized by the gcd are pairwise co-prime, an integer $N$ with $0\leq N<\mbox{lcm}(m_1,m_2,\cdots,m_L)$ can be robustly reconstructed with the reconstruction error upper bounded by the remainder error bound $\tau$, if $\tau$ is smaller than a quarter of the gcd of all the moduli \cite{xia1-2007,xwli2-2009,wjwang2010}. In particular, a necessary and sufficient condition for accurate determination of the folding integers and their closed-form determination algorithm were obtained in \cite{wjwang2010}.

\begin{proposition}[\cite{wjwang2010}]\label{pr1}
Let $m_i=m\Gamma_i$ for $1\leq i\leq L$ and $0\leq N<\mbox{lcm}(m_1,m_2,\cdots,m_L)$.
Assume that $\Gamma_i$ for $1\leq i\leq L$ are pairwise co-prime. Then, the folding integers $n_i$ for $1\leq i\leq L$ can be accurately determined, if and only if
\begin{equation}\label{ccc}
-\frac{1}{2}\leq\frac{\triangle r_i-\triangle r_1}{m}<\frac{1}{2}\quad\mbox{for all }2\leq i\leq L.
\end{equation}
\end{proposition}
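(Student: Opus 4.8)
The plan is to reduce the problem of determining the folding integers $n_i$ to a system of modular congruences and then show that accurate determination is equivalent to being able to recover certain residues from rounded, noisy data. First I would subtract the $i=1$ equation from the $i$-th equation in $(\ref{fold})$: since $N = n_1 m_1 + r_1 = n_i m_i + r_i$, we get $n_i m \Gamma_i - n_1 m \Gamma_1 = r_1 - r_i$, hence $n_i \Gamma_i - n_1 \Gamma_1 = (r_1 - r_i)/m$; in particular $m \mid r_1 - r_i$. The true quantity $(r_1 - r_i)/m$ is therefore an integer, call it $q_i$. The key observation is that from the erroneous remainders we can form $(\tilde r_1 - \tilde r_i)/m = q_i + (\triangle r_1 - \triangle r_i)/m$, and rounding this recovers $q_i$ exactly precisely when $-\tfrac12 \le (\triangle r_i - \triangle r_1)/m < \tfrac12$, which is condition $(\ref{ccc})$. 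So the "if" direction amounts to: once all the $q_i$ are known exactly, the $n_i$ are uniquely and constructively determined.

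For that last step I would use the pairwise coprimality of the $\Gamma_i$. Knowing $q_i = n_i \Gamma_i - n_1 \Gamma_1$ for all $i$ gives, modulo $\Gamma_i$, the congruence $n_1 \Gamma_1 \equiv -q_i \pmod{\Gamma_i}$, and since $\gcd(\Gamma_1,\Gamma_i)=1$ this pins down $n_1 \bmod \Gamma_i$ for each $i \ge 2$. Because the $\Gamma_i$ ($i \ge 2$) are pairwise coprime, CRT assembles these into $n_1 \bmod \mathrm{lcm}(\Gamma_2,\dots,\Gamma_L) = \Gamma_2 \cdots \Gamma_L$. Combined with the range constraint $0 \le N < \mathrm{lcm}(m_1,\dots,m_L) = m\,\Gamma_1 \cdots \Gamma_L$, which forces $0 \le n_1 < \Gamma_2 \cdots \Gamma_L$, this determines $n_1$ uniquely, and then $n_i = (q_i + n_1 \Gamma_1)/\Gamma_i$ gives the rest. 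This is essentially the closed-form algorithm referenced in the statement.

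For the "only if" direction I would argue contrapositively: if $(\ref{ccc})$ fails for some $i$, then $(\tilde r_1 - \tilde r_i)/m$ rounds to $q_i \pm 1$ rather than $q_i$, and I would exhibit that the same erroneous remainders $(\tilde r_1,\dots,\tilde r_L)$ are consistent with a different integer $N'$ in the dynamic range whose folding integers differ — i.e., show the map from $N$ to the rounded differences is not injective on the valid range when the noise exceeds the bound. Concretely one perturbs $n_1$ (and correspondingly the $n_i$) to absorb the off-by-one in $q_i$ while keeping all remainders within $\tau$ of admissible values; the pairwise coprimality guarantees such a consistent alternative exists inside the range.

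The main obstacle I anticipate is the "only if" direction: carefully constructing the ambiguous second preimage $N'$ and verifying it genuinely lies in $[0, \mathrm{lcm})$ with remainders within the error bound, rather than just asserting non-injectivity. The "if" direction is largely bookkeeping with CRT and the range bound once the rounding identity is in place; the delicate point there is only to check that the recovered $n_1 \bmod \Gamma_2\cdots\Gamma_L$ together with $0 \le n_1 < \Gamma_2\cdots\Gamma_L$ leaves no slack — which follows from $\mathrm{lcm}(m_1,\dots,m_L) = m \Gamma_1 \cdots \Gamma_L$ under pairwise coprimality of the $\Gamma_i$.
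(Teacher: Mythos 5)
This proposition is imported from \cite{wjwang2010}; the paper gives no proof of it, so I am judging your sketch against the argument in that reference. Your ``if'' direction is essentially the standard one and is sound: recover the integers $q_{i1}=(r_i-r_1)/m=n_1\Gamma_1-n_i\Gamma_i$ by rounding, read off $n_1\bmod\Gamma_i$ using $\gcd(\Gamma_1,\Gamma_i)=1$, assemble $n_1\bmod\Gamma_2\cdots\Gamma_L$ by CRT, and use $0\le n_1<\Gamma_2\cdots\Gamma_L$ to remove the ambiguity. One small orientation slip: with the convention $-\tfrac12\le \mathbf{x}-[\mathbf{x}]<\tfrac12$, the quantity whose rounding is exact precisely under (\ref{ccc}) is $(\tilde r_i-\tilde r_1)/m$, not $(\tilde r_1-\tilde r_i)/m$; rounding the latter corresponds to the mirrored condition $-\tfrac12<(\triangle r_i-\triangle r_1)/m\le\tfrac12$, which differs from (\ref{ccc}) exactly at the endpoints. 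Since the asymmetric half-open interval is the whole content of the boundary case, you should round the difference in the stated order.

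The genuine gap is in your ``only if'' direction. You propose to show that when (\ref{ccc}) fails there exists a second integer $N'$ in the dynamic range, with different folding integers, whose remainders are also within $\tau$ of the observed $(\tilde r_1,\dots,\tilde r_L)$. That is an information-theoretic non-identifiability claim, and it is false in general: take $L=2$, $m=4$, $\Gamma_1=2$, $\Gamma_2=3$, $N=0$, $\triangle r_1=0$, $\triangle r_2=2$, $\tau=2$. Then $(\triangle r_2-\triangle r_1)/m=\tfrac12$ violates (\ref{ccc}), yet every $N'\in[0,24)$ consistent with $\tilde r_1=0$, $\tilde r_2=2$ within error $2$ lies in $\{0,1,2\}$ and has $n'_1=n'_2=0$, so the folding integers are uniquely determined by the data even though the condition fails. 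The ``only if'' in Proposition \ref{pr1} is a statement about the specific closed-form estimator of \cite{wjwang2010}: when (\ref{ccc}) fails for some $i$, the rounded estimate $\hat q_{i1}=\left[(\tilde r_i-\tilde r_1)/m\right]$ differs from $q_{i1}$, and one checks that this forces either $\hat n_1\ne n_1$ (if $\hat q_{i1}\not\equiv q_{i1}\bmod\Gamma_i$) or $\hat n_i\ne n_i$ (if the discrepancy is a nonzero multiple of $\Gamma_i$). That two-line case analysis is the necessity proof; the second-preimage construction you anticipate as the main obstacle is not merely hard but unprovable as stated.
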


For the closed-form determination algorithm of Proposition \ref{pr1}, we refer the reader to \cite{wjwang2010}. Moreover, with the condition (\ref{ccc}) in Proposition \ref{pr1}, the following result becomes various.
\begin{proposition}[\cite{xia1-2007,xwli2-2009,wjwang2010}]\label{pr2}
Let $m_i=m\Gamma_i$ for $1\leq i\leq L$ and $0\leq N<\mbox{lcm}(m_1,m_2,\cdots,m_L)$.
Assume that $\Gamma_i$ for $1\leq i\leq L$ are pairwise co-prime. Then, the folding integers $n_i$ for $1\leq i\leq L$ can be accurately determined, if the remainder error bound $\tau$ satisfies
\begin{equation}
|\triangle r_i|\leq\tau<\frac{m}{4}.
\end{equation}
\end{proposition}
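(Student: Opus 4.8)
The plan is to derive Proposition \ref{pr2} directly from Proposition \ref{pr1} by showing that the hypothesis $|\triangle r_i|\leq\tau<m/4$ forces the condition \eqref{ccc} to hold. This is the natural route since Proposition \ref{pr1} already gives a necessary and sufficient criterion for the folding integers to be accurately determined, so all that remains is a routine bound-chasing argument. First I would fix an index $i$ with $2\leq i\leq L$ and examine the quantity $(\triangle r_i-\triangle r_1)/m$. Since both $\triangle r_i$ and $\triangle r_1$ lie in the interval $[-\tau,\tau]$, their difference satisfies $|\triangle r_i-\triangle r_1|\leq|\triangle r_i|+|\triangle r_1|\leq 2\tau$. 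Combining this with the assumption $\tau<m/4$ gives $|\triangle r_i-\triangle r_1|<m/2$, hence
\begin{equation}
-\frac{1}{2}<\frac{\triangle r_i-\triangle r_1}{m}<\frac{1}{2}.
\end{equation}

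This is slightly stronger than \eqref{ccc} (strict on both sides rather than half-open), so in particular \eqref{ccc} holds for every $i$ with $2\leq i\leq L$. Invoking the sufficiency direction of Proposition \ref{pr1}, the folding integers $n_1,n_2,\cdots,n_L$ can then be accurately determined, which is exactly the claim of Proposition \ref{pr2}. I would also remark that the closed-form determination algorithm referenced after Proposition \ref{pr1} applies verbatim here, so nothing further needs to be constructed.

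I do not anticipate a genuine obstacle: the only mild subtlety is making sure the triangle-inequality bound $2\tau$ is compared against the right threshold, namely that the half-open condition \eqref{ccc} needs $|\triangle r_i - \triangle r_1| \le m/2$ on one side and $< m/2$ on the other, and the strict inequality $\tau < m/4$ comfortably covers both. One could ask whether $\tau = m/4$ is also admissible; it is not in general, because then $\triangle r_i - \triangle r_1$ could equal $m/2$ exactly, violating the right (strict) half of \eqref{ccc} — this is why the statement uses a strict inequality $\tau<m/4$, and I would note this to explain the boundary case rather than treat it as part of the proof.
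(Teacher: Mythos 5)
Your proposal is correct and matches the paper's intent: the paper states Proposition \ref{pr2} as a cited result and remarks that it follows immediately from the condition (\ref{ccc}) in Proposition \ref{pr1}, which is precisely your triangle-inequality argument $|\triangle r_i-\triangle r_1|\leq 2\tau<m/2$. Nothing further is needed.
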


Furthermore, considering a general set of moduli, i.e., the remaining integers of the moduli factorized by their gcd are not necessarily pairwise co-prime, an improved version of robust CRT, called multi-stage robust CRT, was recently proposed in \cite{binyang2014,xiaoxia1}. In \cite{xiaoxia1}, it is shown that the remainder error bound $\tau$ may be above the quarter of the gcd of all the moduli.

\subsection{Method of Integer Position Representation on The Remainder Plane}
In \cite{new}, the author considered the robust remaindering problem in the case of two moduli by using its distinctive method of position representation on the two dimensional remainder plane. Different from the robust CRT in Propositions \ref{pr1} and \ref{pr2} where the dynamic range is fixed to the maximum, i.e., the lcm of all the moduli, the relationship between the dynamic range of $N$ and the remainder error bound $\tau$ was investigated in \cite{new}, and it is shown that if the dynamic range becomes less than the lcm of the moduli, the robust reconstruction may hold even when the remainder error bound is over the quarter of the gcd of the moduli.

Let us first give an intuitive explanation with an example in the following.
Let $m_1=2\cdot12=24$ and $m_2=2\cdot19=38$. We represent the integers from $0$ to $76$ with respect to their remainders $r_1$ and $r_2$, as depicted in Fig. \ref{figone}. One can see that these integers are connected by the lines with the slope of $1$ in Fig. \ref{figone}. Let $R\triangleq(r_1,r_2)$ be an integer. When the remainders have errors, i.e., $|\tilde{r}_i-r_i|\leq\tau$ for $i=1,2$, the point $R^{\prime}\triangleq(\tilde{r}_1,\tilde{r}_2)$ may not locate on one of the slanted lines in Fig. \ref{figtwo}. The idea is to find the closest point $R^{\prime\prime}$ to $R^{\prime}$ on one of the slanted lines. It is readily seen from Fig. \ref{figtwo} that as long as the remainder error bound $\tau$ is smaller than $d/4$, the closest line to $R^{\prime}$ would be the same line that contains the true integer $R\triangleq(r_1,r_2)$. This is because the coordinates of $R^{\prime\prime}$ in Fig. \ref{figtwo} are $\left(r_1+(\triangle r_1+\triangle r_2)/2,r_2+(\triangle r_1+\triangle r_2)/2\right)$, which makes $|R^{\prime\prime}-R|\leq\tau$ hold. Then, let us see the example in Fig. \ref{figone} again. When the dynamic range is $48$, i.e., $0\leq N<48$, all the integers are connected by the three red slanted lines, and it is easy to see that the remainder error bound $\tau$ can be smaller than $14/4$. When the dynamic range is $76$, i.e., $0\leq N<76$, all the integers are connected by the three red and two blue slanted lines, and it is easy to see that the remainder error bound $\tau$ can be smaller than $10/4$. Obviously, as the dynamic range increases, the distance between the new set of slanted lines decreases, i.e., the remainder error bound becomes smaller.

\begin{figure}[H]
\begin{minipage}[t]{0.5\linewidth}
\centering
\hspace*{-2.5cm}\includegraphics[width=10in]{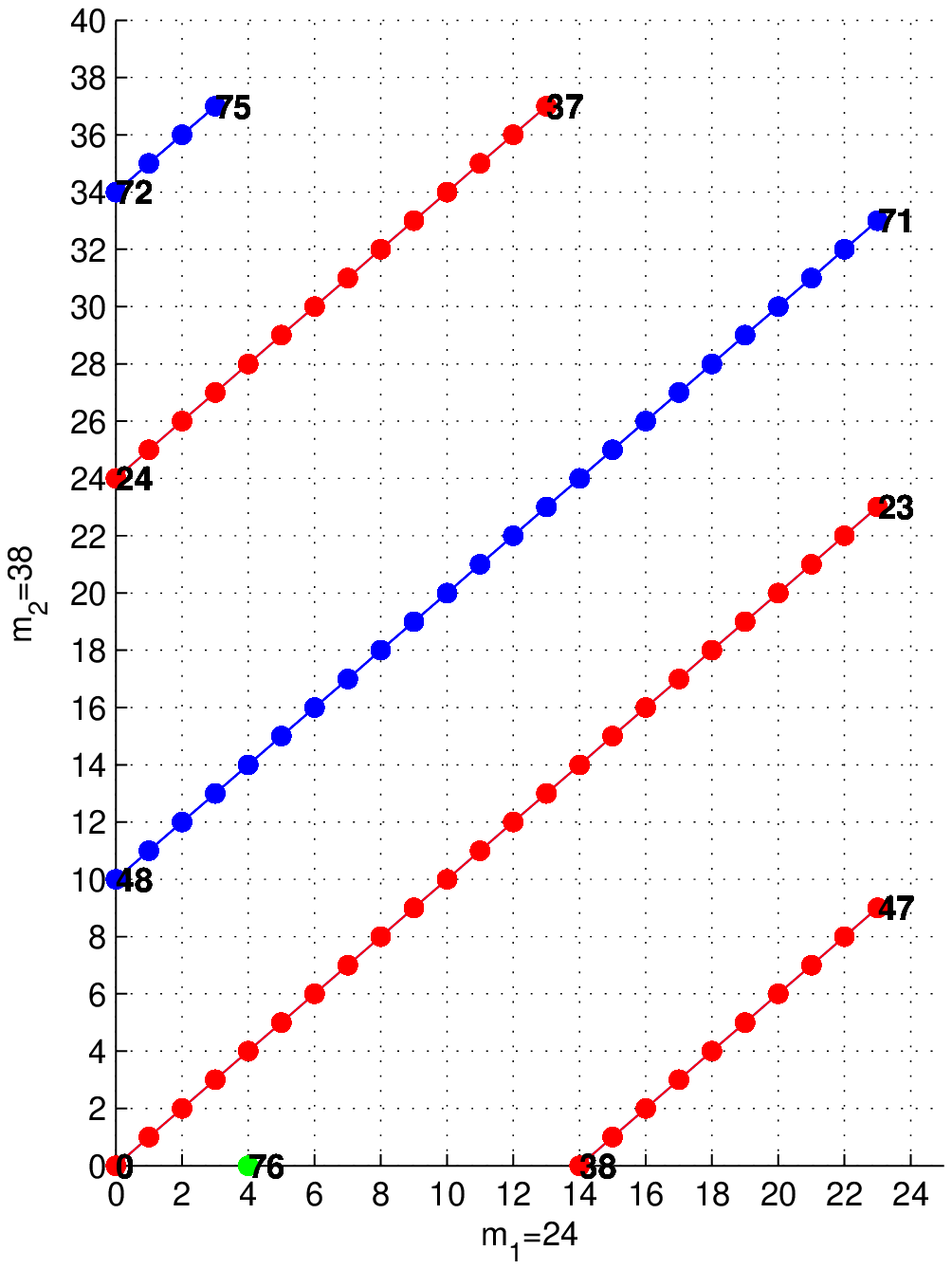}
\caption{Position representation with moduli $24$ and $38$.}
\label{figone}
\end{minipage}%
\begin{minipage}[t]{0.5\linewidth}
\hspace*{-1.3cm}\includegraphics[width=6.5in]{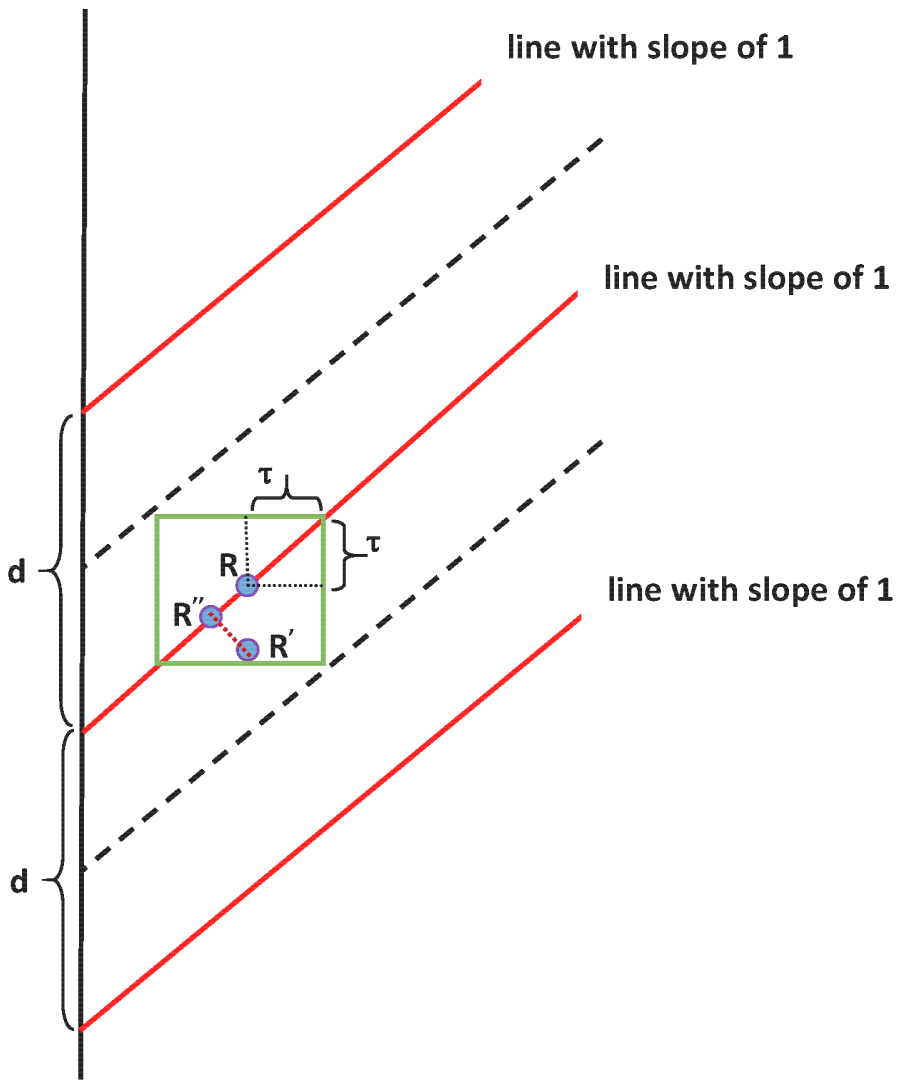}
\caption{Finding the closest point $R^{\prime\prime}$ to $R^{\prime}$ on one of the slanted lines.}
\label{figtwo}
\end{minipage}
\end{figure}

Let $m_1=m\Gamma_1,m_2=m\Gamma_2$, where $\Gamma_1$ and $\Gamma_2$ are co-prime and $1<\Gamma_1<\Gamma_2$. Let $\delta_{-1}=m_2,\delta_0=m_1,\delta_1=|m_2|_{m_1}$, and for $i\geq1$,
\begin{equation}
\delta_{i+1}=\mbox{min}(|\delta_{i-1}|_{\delta_i},\delta_i-|\delta_{i-1}|_{\delta_i}).
\end{equation}
One can see that $m$ divides each $\delta_{i}$, and there exists the largest index $G$ such that $\delta_G=m$.

\begin{proposition}[\cite{new}]\label{pr3}
If the remainder error bound $\tau$ satisfies
\begin{equation}
\tau<\frac{|m_2|_{m_1}}{4}=\frac{m|\Gamma_2|_{\Gamma_1}}{4}=\frac{\delta_{1}}{4},
\end{equation}
then the dynamic range of $N$ is $m_1\left(1+\left\lfloor m_2/m_1\right\rfloor\left\lfloor m_1/|m_2|_{m_1}\right\rfloor\right)$, i.e.,
an integer $N$ with $0\leq N<m_1\left(1+\left\lfloor m_2/m_1\right\rfloor\left\lfloor m_1/|m_2|_{m_1}\right\rfloor\right)$ can be robustly reconstructed.
\end{proposition}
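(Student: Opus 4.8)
The plan is to carry the problem onto the two–dimensional remainder plane of \cite{new} and, after a rotation of coordinates, turn it into a one–dimensional separation statement. For $N\ge 0$ set $r(N)=(|N|_{m_1},|N|_{m_2})$, $u(N)=|N|_{m_2}-|N|_{m_1}$ and $v(N)=\tfrac12\bigl(|N|_{m_1}+|N|_{m_2}\bigr)$. On a maximal block of consecutive integers whose interior contains no multiple of $m_1$ or $m_2$ --- a ``slanted segment'' --- $u$ is constant (its \emph{level}) and $v$ grows by $1$ per unit of $N$; moreover $\|r(N)-r(N')\|_\infty=|\triangle v|+\tfrac12|\triangle u|$, so the distance from a perturbed point $(\tilde u,\tilde v)$ to a segment of level $d$ with $v$-interval $I_d$ equals $\tfrac12|\tilde u-d|+\operatorname{dist}(\tilde v,I_d)$. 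Write $q=\lfloor m_2/m_1\rfloor$, $p=\lfloor m_1/\delta_1\rfloor$, $\delta_1=|m_2|_{m_1}$, $\delta_2=|m_1|_{\delta_1}$, $D=m_1(1+qp)$. First I would record the bookkeeping (including $D\le\operatorname{lcm}(m_1,m_2)$, which follows elementarily): the segments inside $[0,D)$ are the one through $0$ (level $0$), the one starting at $km_1$ for $1\le k\le qp$ (level $|km_1|_{m_2}$), and the one starting at $km_2$ with $km_2<D$ (level $-|km_2|_{m_1}=-k\delta_1$, using $m_2\equiv\delta_1\ (\mathrm{mod}\ m_1)$); let $\Delta$ be this set of levels, whose negative part is the arithmetic progression $0,-\delta_1,-2\delta_1,\dots$ with step $\delta_1$.

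The core is the separation lemma $\min_{d\ne d'\in\Delta}|d-d'|=\delta_1$. Since the negative levels already realise consecutive gaps $\delta_1$, the nonnegative levels lie in $[0,m_2)$, the negative ones in $(-m_1,0]$, and the two families meet only at $0$, it suffices to show the residues $|km_1|_{m_2}$, $0\le k\le qp$, are pairwise at circular distance $\ge\delta_1$ in $\mathbb Z/m_2\mathbb Z$; that is, no $1\le\ell\le qp$ and $j\in\mathbb Z$ satisfy $0<|\ell m_1-jm_2|<\delta_1$. I would prove this by substituting $m_2=qm_1+\delta_1$, $m_1=p\delta_1+\delta_2$ and setting $s=\ell-jq$: the two–sided bound pins down $j=sp+\lfloor s\delta_2/\delta_1\rfloor$ (with a possible extra $+1$ according to the sign of $\ell m_1-jm_2$), whence $\ell=s(1+qp)+q\lfloor s\delta_2/\delta_1\rfloor$ (with a possible extra $+q$); then $s\le 0$ is incompatible with $\ell\ge 1$, while $s\ge 1$ forces $\ell\ge 1+qp$, contradicting $\ell\le qp$. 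This is in essence a best–approximation fact for $m_1/m_2$, arranged so that $qp$ is exactly the last index before the approximation error drops below $\delta_1$, and I expect it to be the main obstacle; the degenerate case $\delta_2=0$ (where $D$ is a multiple of $m_2$ and one negative level is dropped) needs a short separate line.

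Granting the lemma, the forward claim --- every $N$ with $0\le N<D$ is robustly reconstructible whenever $\tau<\delta_1/4$ --- is short. If $N$ lies on a segment $S$ of level $d$ and $\tilde r$ is any admissible erroneous remainder, then $\operatorname{dist}(\tilde r,S)\le\|\tilde r-r(N)\|_\infty\le\tau$, whereas any other segment $S'$ of level $d'$ has $\operatorname{dist}(\tilde r,S')\ge\tfrac12|\tilde u-d'|\ge\tfrac12\bigl(|d-d'|-2\tau\bigr)\ge\tfrac12\delta_1-\tau>\tau$, using $\tau<\delta_1/4$ --- this last inequality is precisely the role of the hypothesis. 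Hence $S$ is the closest segment, the reconstruction is the projection of $\tilde r$ onto $S$, and in the $v$-coordinate this moves $\tilde v$ by at most $|\tilde v-v(N)|\le\tau$ toward $I_d\ni v(N)$; since $N$ equals $v$ up to a constant along $S$, the output $\widehat N$ satisfies $|\widehat N-N|\le\tau$ (up to the standard rounding).

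Finally I would argue that $D$ is exact --- the content of ``the dynamic range is $D$''. The minimum level gap $\delta_1$ is realised, e.g.\ by the levels $0$ and $-\delta_1$; for such a pair, the perturbed remainder of a suitable $N$ on one of the two segments can be pushed, with errors of size $\tau$, to a point equidistant from both segments exactly when $\tau=\delta_1/4$, so for $\tau\ge\delta_1/4$ they become confusable and the reconstruction can be off by a whole $v$-range --- hence $\delta_1/4$ cannot be relaxed at range $D$. Conversely, enlarging the scheme to $[0,D+1)$ appends the segment starting at the multiple $D$ of $m_1$, whose level is $|D|_{m_2}=m_1-p\delta_1=\delta_2<\delta_1$; since $0$ is already a level with nothing strictly between it and $\delta_2$, the enlarged level set has a gap $\le\delta_2<\delta_1$ (two equal levels if $\delta_2=0$), so any larger range forces a strictly smaller admissible error bound. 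Thus $(D,\delta_1/4)$ sits on the range–versus–robustness frontier, which together with the forward claim proves Proposition~\ref{pr3}.
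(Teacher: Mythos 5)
Your proof is correct, but it follows a genuinely different route from the paper's. The paper never proves Proposition~\ref{pr3} on the remainder plane: it proves Theorem~\ref{th2} by algebraically pinning down the folding integers from $\textbf{q}_{21}=(\tilde{r}_1-\tilde{r}_2)/m$ (Lemmas~\ref{lem1}--\ref{lem2} and \textbf{Algorithm~\ref{alg:Framwork2}}), and then notes that $\tau<m|\Gamma_2|_{\Gamma_1}/4$ implies the difference condition (\ref{concon1}), so Proposition~\ref{pr3} falls out as a corollary. You instead work geometrically in the spirit of \cite{new}: after the $(u,v)$ rotation, everything reduces to your separation lemma that the segment levels $\{|km_1|_{m_2}\}_{0\le k\le qp}\cup\{-k\delta_1\}_{1\le k\le p}$ are pairwise at distance at least $\delta_1$, which you prove by the substitution $s=\ell-jq$, $t=j-sp$. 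That separation fact is precisely the content of the paper's identity $(r_1-r_2)/m=n_2|\Gamma_2|_{\Gamma_1}$ in Lemma~\ref{lem3} and of the $S_{2,n},d_{2,n}$ machinery of Section~\ref{sec4} specialized to $j=1$, where the paper establishes it by an interval-subdivision induction (Lemma~\ref{cal}); your best-approximation computation is a clean closed-form alternative, and your tightness argument (the new level $|D|_{m_2}=\delta_2<\delta_1$ appearing at $N=D$) mirrors the paper's counterexample with $\triangle r_1=\lfloor|\Gamma_1|_{|\Gamma_2|_{\Gamma_1}}/2\rfloor m$. What the paper's route buys is an explicit rounding reconstruction algorithm and the weaker hypothesis on $\triangle r_1-\triangle r_2$ rather than on $\tau$; what yours buys is a self-contained justification of the closest-line decoder of \cite{new}. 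Two points deserve a line of care and you flag both: the degenerate case $\delta_2=0$ (which forces $|\Gamma_2|_{\Gamma_1}=1$ and $D=\mbox{lcm}(m_1,m_2)$), and the fact that your failure construction at $N=D$ needs errors of size exceeding $\delta_2/4$, which is admissible under $\tau<\delta_1/4$ because $\delta_2<\delta_1$.
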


Note that when $|\Gamma_2|_{\Gamma_1}=1$, i.e., the remainder of $m_2$ modulo $m_1$ equals the gcd $m$ of $m_1$ and $m_2$, we have $m_1\left(1+\left\lfloor m_2/m_1\right\rfloor\left\lfloor m_1/|m_2|_{m_1}\right\rfloor\right)=m_1\Gamma_2=\mbox{lcm}(m_1,m_2)$, and thus
Proposition \ref{pr3} coincides with Proposition \ref{pr2}. When $|\Gamma_2|_{\Gamma_1}>1$, i.e., $\delta_1>m$ or $G\geq2$, as described in Fig. \ref{figone}, a larger dynamic range associated with a smaller remainder error bound for robust reconstruction is possible, and in this case,
an extension of Proposition \ref{pr3} was also proposed in \cite{new}.

\begin{proposition}[\cite{new}]\label{pr4}
If $\delta_1>m$ and the remainder error bound $\tau$ satisfies
\begin{equation}
\tau<\frac{\delta_i}{4},\quad\mbox{for some }i, 2\leq i\leq G,
\end{equation}
then the dynamic range of $N$ is lower bounded by
$m_1\left(1+\left\lfloor m_2/m_1\right\rfloor\left\lfloor m_1/|m_2|_{m_1}\right\rfloor\right)\lfloor\delta_1/\delta_2\rfloor\cdots\lfloor\delta_{i-1}/\delta_i\rfloor$ and upper bounded by $\mbox{max}(m_1\lfloor m_2/\delta_i\rfloor,m_2\lfloor m_1/\delta_i\rfloor)$.
\end{proposition}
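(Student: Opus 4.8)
The plan is to work in the two-dimensional remainder plane used in \cite{new}: to each integer $N$ attach the point $P(N)=(|N|_{m_1},|N|_{m_2})$, and recall that $P(N),P(N+1),P(N+2),\dots$ trace out segments of slope-$1$ lines, a new segment starting exactly when $N+1$ is a multiple of $m_1$ or of $m_2$. The closest-line decoder of \cite{new} recovers $N$ correctly from any $(\tilde r_1,\tilde r_2)$ with $|\tilde r_i-r_i|\le\tau$ as soon as the segment carrying $P(N)$ is strictly closer to $(\tilde r_1,\tilde r_2)$ than every other segment generated by $\{0,1,\dots,D-1\}$. Since the perpendicular distance from $(\tilde r_1,\tilde r_2)$ to its own segment is $|\triangle r_2-\triangle r_1|/\sqrt2\le\sqrt2\,\tau$, this holds whenever the minimal separation $g(D)$ between distinct such segments --- measured by the difference of their $r_2-r_1$ intercepts, so a ``gap $\delta$'' means perpendicular distance $\delta/\sqrt2$ --- satisfies $g(D)>4\tau$. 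Thus the proposition reduces to two statements about $g$: (i) $g(D)\ge\delta_i$ for every $D\le D_i:=m_1\bigl(1+\lfloor m_2/m_1\rfloor\lfloor m_1/\delta_1\rfloor\bigr)\lfloor\delta_1/\delta_2\rfloor\cdots\lfloor\delta_{i-1}/\delta_i\rfloor$, which yields the lower bound over the whole range $\tau<\delta_i/4$; and (ii) some pair of integers in $\{0,1,\dots,U_i\}$ with $U_i:=\max\bigl(m_1\lfloor m_2/\delta_i\rfloor,\,m_2\lfloor m_1/\delta_i\rfloor\bigr)$ has segment separation strictly less than $\delta_i$, so that once $\tau$ is pushed close enough to $\delta_i/4$ the decoder confuses that pair and the dynamic range has already collapsed to at most $U_i$. (One could instead phrase everything via accurate determination of the folding integers $n_1,n_2$ in the spirit of Proposition \ref{pr1}, but the geometric picture makes the role of the $\delta$-recursion most transparent.)

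For (i), the lower bound, I would induct on $i$, the base case $i=1$ being Proposition \ref{pr3}: there the threshold $|m_2|_{m_1}/4=\delta_1/4$ and dynamic range $m_1(1+\lfloor m_2/m_1\rfloor\lfloor m_1/|m_2|_{m_1}\rfloor)$ are exactly the $i=1$ instances of $\delta_i/4$ and $D_i$ (the trailing product being empty). The inductive engine is the three-distance (Steinhaus) phenomenon: as $D$ grows, the $r_2-r_1$ intercepts of the segments generated by $0,\dots,D-1$ behave like the orbit of a circle rotation, so at every stage there are at most three distinct gap lengths, and when the current largest gap $\delta_{i-1}$ is first subdivided the two new lengths are precisely $|\delta_{i-2}|_{\delta_{i-1}}$ and $\delta_{i-1}-|\delta_{i-2}|_{\delta_{i-1}}$, whose minimum is $\delta_i$; the number of integers that must be swept in before that first subdivision is what manufactures the factor $\lfloor\delta_{i-1}/\delta_i\rfloor$ (and the initial factor $1+\lfloor m_2/m_1\rfloor\lfloor m_1/\delta_1\rfloor$). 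Concretely one shows that for $0\le N<D_i$ every segment intercept is an integer combination governed by the partial quotients $\lfloor m_2/m_1\rfloor,\lfloor m_1/\delta_1\rfloor,\lfloor\delta_1/\delta_2\rfloor,\dots$, so that no two intercepts lie within $\delta_i$ of one another; this is a finite Ostrowski-type bookkeeping on the recursion $\delta_{j+1}=\min(|\delta_{j-1}|_{\delta_j},\,\delta_j-|\delta_{j-1}|_{\delta_j})$ together with $m\mid\delta_j$ for all $j$.

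For (ii), the upper bound, I would exhibit the collision explicitly. Take $N=0$ together with $N'=m_1\lfloor m_2/\delta_i\rfloor$, or in the other case $N'=m_2\lfloor m_1/\delta_i\rfloor$, whichever is admissible; note $N'\le U_i$, and since $N'$ is a multiple of a modulus its segment touches a coordinate axis, so the two segments are separated by $\min\bigl(|m_1\lfloor m_2/\delta_i\rfloor|_{m_2},\,m_2-|m_1\lfloor m_2/\delta_i\rfloor|_{m_2}\bigr)$ (respectively with $m_1$ and $m_2$ exchanged). The number-theoretic content is that this quantity is strictly smaller than $\delta_i$: the $\delta$-sequence is precisely the centered continued-fraction / Euclidean data of $(m_1,m_2)$, so $\lfloor m_2/\delta_i\rfloor$ is, up to the centering, a best-approximation denominator, whence $m_1$ times it lands within $\delta_i$ of a multiple of $m_2$. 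For any $\tau$ with that separation $\le4\tau<\delta_i$ the closest-line decoder cannot separate $N'$ from $0$, so the dynamic range is at most $U_i$.

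The step I expect to be the genuine obstacle is (i): keeping the three-distance/Ostrowski bookkeeping exact through the transition from the ``$(m_1,m_2)$-level'' factor $1+\lfloor m_2/m_1\rfloor\lfloor m_1/\delta_1\rfloor$ to the later ``$(\delta_{j-1},\delta_j)$-level'' factors $\lfloor\delta_{j-1}/\delta_j\rfloor$, and certifying $g(D)\ge\delta_i$ with exactly the stated $D_i$ rather than a value off by one step of the recursion. This is also where the slack in the proposition lives: the argument only locates when the minimal gap first drops below $\delta_i$ (lower side) and when one designated pair collides (upper side), without tracking the intermediate gap lengths --- the analogue of the length ``$10$'' in the $m_1=24,m_2=38$ example --- so one obtains the interval $[D_i,U_i]$ rather than the exact dynamic range, precisely the gap the present paper sets out to close.
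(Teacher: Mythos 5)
First, a point of reference: the paper never proves Proposition~\ref{pr4}. It is imported verbatim from \cite{new}, and Section~\ref{sec4} exists precisely to supersede it with the exact dynamic range of Theorem~\ref{th1}, so there is no in-paper proof to compare against and your proposal must stand on its own. Your reduction of robustness to the condition that the slope-$1$ lines generated by $\{0,1,\dots,D-1\}$ have minimal intercept gap $g(D)>4\tau$ is sound, and the three-distance/continued-fraction strategy for the lower bound (i) is the right idea; but, as you yourself concede, (i) is only a plan. The inductive bookkeeping that would certify $g(D)\ge\delta_i$ for all $D\le D_i$ --- in particular that the transition from the initial factor $1+\lfloor m_2/m_1\rfloor\lfloor m_1/\delta_1\rfloor$ to the later factors $\lfloor\delta_{j-1}/\delta_j\rfloor$ comes out exactly right --- is never carried out, and that bookkeeping \emph{is} the lower bound; nothing in the proposal would let a reader reconstruct it.

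Part (ii) as written is wrong, not merely incomplete. You assert that the pair $\bigl(0,\ m_1\lfloor m_2/\delta_i\rfloor\bigr)$ (or its mirror) collides, i.e.\ that $\min\bigl(\bigl|m_1\lfloor m_2/\delta_i\rfloor\bigr|_{m_2},\ m_2-\bigl|m_1\lfloor m_2/\delta_i\rfloor\bigr|_{m_2}\bigr)<\delta_i$, on the grounds that $\lfloor m_2/\delta_i\rfloor$ is essentially a best-approximation denominator. It is not, and the paper's own running example refutes the claim: for $m_1=24$, $m_2=38$ one has $\delta_1=14$ and $\delta_2=\min(|24|_{14},14-|24|_{14})=4$, yet $m_1\lfloor m_2/\delta_2\rfloor=216$ has remainder pair $(0,26)$, so your quantity is $\min(26,12)=12\not<4$, and the mirror choice $m_2\lfloor m_1/\delta_2\rfloor=228$ gives $\min(12,12)=12\not<4$ as well. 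The multiple of $m_1$ that actually lands within $\delta_2$ of the identity line is $24\cdot 8=192$ (with $8$ a convergent denominator of $12/19$, not $\lfloor 38/4\rfloor=9$), and while the line of $216$ does come within $\delta_2$ of an earlier line, it is the line of intercept $24$, not the identity line. So the conclusion of (ii) happens to hold, but your designated pair, the quantity you bound, and its number-theoretic justification all fail; repairing it requires essentially the same continued-fraction analysis you deferred in (i), which is exactly the gap Theorem~\ref{th1} closes by computing the first collision point $\min(m_2(1+\ddot{n}_{2,j}),m_1(1+\ddot{n}_{1,j}))$ exactly ($=192$ in this example, squarely between the stated bounds $144$ and $228$).
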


Note that Proposition \ref{pr4} only provides lower and upper bounds of the dynamic range, while the exact dynamic range of $N$ was not derived or given in \cite{new}. Moreover, no closed-form reconstruction algorithms for Propositions \ref{pr3} and \ref{pr4} were proposed in \cite{new}. The solution for the robust remaindering problem was naturally generalized to real numbers both in \cite{wjwang2010} and \cite{new}, and we will discuss it later in this paper.

\section{Robust Remaindering with Two Moduli}\label{sec3}
Motivated from Proposition \ref{pr1} in \cite{wjwang2010}, we first present a general condition on the remainder errors such that the folding integers of $N$ with the dynamic range given in Proposition \ref{pr3} can be accurately determined from the erroneous remainders. We then propose a simple closed-form determination algorithm to solve for the folding integers, and thus robustly reconstruct $N$ by (\ref{reconN}) in this section.

Let $m_1=m\Gamma_1,m_2=m\Gamma_2$, where $\Gamma_1$ and $\Gamma_2$ are co-prime and $1<\Gamma_1<\Gamma_2$. Before giving the result, let us introduce some necessary lemmas as follows.

\begin{lemma}\label{lem1}
Let $N$ be an integer with $0\leq N<m_1\left(1+\left\lfloor m_2/m_1\right\rfloor\left\lfloor m_1/|m_2|_{m_1}\right\rfloor\right)$ and $|\Gamma_2|_{\Gamma_1}\geq2$. Then, we have
\begin{equation}\label{nen2}
0\leq n_2\leq\left\lfloor \frac{\Gamma_1}{|\Gamma_2|_{\Gamma_1}}\right\rfloor\quad\mbox{and}\quad 0\leq n_1\leq\left\lfloor\frac{\Gamma_2}{\Gamma_1}\right\rfloor\left\lfloor \frac{\Gamma_1}{|\Gamma_2|_{\Gamma_1}}\right\rfloor.
\end{equation}
Moreover, when $n_2=\left\lfloor \Gamma_1/|\Gamma_2|_{\Gamma_1}\right\rfloor$, we have $r_1>r_2$.
\end{lemma}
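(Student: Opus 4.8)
The plan is to carry out the argument with the factored moduli and two rounds of division. First I would set $\Gamma_2 = q\Gamma_1 + s$ with $q = \lfloor \Gamma_2/\Gamma_1 \rfloor$ and $s = |\Gamma_2|_{\Gamma_1}$, and $\Gamma_1 = ts + s'$ with $t = \lfloor \Gamma_1/s \rfloor$ and $s' = |\Gamma_1|_s$. Since $m_i = m\Gamma_i$ and $|m_2|_{m_1} = ms$, we have $\lfloor m_2/m_1 \rfloor = q$ and $\lfloor m_1/|m_2|_{m_1} \rfloor = t$, so the hypothesis on $N$ reads $0 \le N < D$ with $D := m_1(1 + qt)$. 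I would record at the start the elementary facts $q \ge 1$ and $t \ge 1$ (from $1 < \Gamma_1 < \Gamma_2$ and $s < \Gamma_1$), and $1 \le s' \le s - 1$: the latter holds because $\gcd(\Gamma_1,\Gamma_2) = 1$ forces $\gcd(\Gamma_1,s) = 1$, so with $s \ge 2$ we get $s \nmid \Gamma_1$. In particular $s' < s \le ts$, a strict inequality needed at the very end.

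For the bound on $n_1$, nothing is needed beyond $n_1 = \lfloor N/m_1 \rfloor$ and $N < D = m_1(1+qt)$, which give $0 \le n_1 \le qt$ at once. For $n_2 = \lfloor N/m_2 \rfloor$ I would first prove the inequality $D \le (t+1)m_2$; granting it, $n_2 \le N/m_2 < D/m_2 \le t+1$, and since $n_2$ is an integer, $n_2 \le t$. The inequality $D \le (t+1)m_2$, that is $m_1(1+qt) \le (t+1)m_2$, reduces --- divide through by $m$ and expand both sides --- to $0 \le (q-1)\Gamma_1 + (t+1)s$, which is clear because $q \ge 1$. This yields the two displayed bounds $0 \le n_2 \le \lfloor \Gamma_1/|\Gamma_2|_{\Gamma_1} \rfloor$ and $0 \le n_1 \le \lfloor \Gamma_2/\Gamma_1 \rfloor \lfloor \Gamma_1/|\Gamma_2|_{\Gamma_1} \rfloor$.

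For the last assertion I would specialize to $n_2 = t$, so that $N = tm_2 + r_2$ with $r_2 \ge 0$, and compute the gap
\[
D - tm_2 = m_1(1+qt) - tm_2 = m\bigl(\Gamma_1(1+qt) - t\Gamma_2\bigr) = m(\Gamma_1 - ts) = ms';
\]
combined with $N < D$ this gives $0 \le r_2 < ms'$. On the other side, $tm_2 \le N < D$ becomes $m(tq\Gamma_1 + ts) \le N < m(qt+1)\Gamma_1$, and dividing by $m_1 = m\Gamma_1$, using $0 < ts = \Gamma_1 - s' < \Gamma_1$, gives $tq < N/m_1 < tq+1$, hence $n_1 = \lfloor N/m_1 \rfloor = tq$. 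Therefore $r_1 = N - tq\,m_1 \ge tm_2 - tq\,m_1 = mts$. Combining the two estimates, $r_2 < ms' < mts \le r_1$, the middle inequality being exactly $s' < ts$; hence $r_1 > r_2$.

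The argument is essentially careful bookkeeping with two divisions, and the only step that needs real attention is the boundary case $n_2 = t$: one must notice that this forces $n_1 = tq$ exactly, not merely $n_1 \le tq$, and then that $r_1$ and $r_2$ are trapped in the disjoint intervals $[mts, m\Gamma_1)$ and $[0, ms')$. That disjointness --- equivalently $s' < ts$ --- is precisely where the hypotheses $\gcd(\Gamma_1,\Gamma_2) = 1$ and $|\Gamma_2|_{\Gamma_1} \ge 2$ really enter: together they guarantee $1 \le s' < s \le ts$, and incidentally that the case $n_2 = t$ is non-vacuous since $D - tm_2 = ms' > 0$.
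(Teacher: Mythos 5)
Your proof is correct and follows essentially the same route as the paper's: the bound on $n_1$ is immediate from $N<D$, the bound on $n_2$ rests on the decomposition $D=tm_2+ms'$ with $0<ms'<m_2$ (your inequality $D\le(t+1)m_2$ is the same fact), and the coprimality of $\Gamma_1$ and $s$ enters in the same place to guarantee $s'\ge1$. The only cosmetic difference is the last step, where the paper lower-bounds $(r_1-r_2)/m=n_2\Gamma_2-n_1\Gamma_1\ge t\Gamma_2-qt\Gamma_1=ts>0$ directly from $n_1\le qt$, whereas you pin down $n_1=qt$ exactly and trap $r_1$ and $r_2$ in the disjoint intervals $[mts,m\Gamma_1)$ and $[0,ms')$ --- slightly more work for the same conclusion.
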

\begin{proof}
From $N=n_1m_1+r_1$ and $0\leq N<m_1\left(1+\left\lfloor m_2/m_1\right\rfloor\left\lfloor m_1/|m_2|_{m_1}\right\rfloor\right)$, it is easy to see that $0\leq n_1\leq\left\lfloor m_2/m_1\right\rfloor\left\lfloor m_1/|m_2|_{m_1}\right\rfloor=\left\lfloor\Gamma_2/\Gamma_1\right\rfloor\left\lfloor \Gamma_1/|\Gamma_2|_{\Gamma_1}\right\rfloor$.
Next, according to $m_2=\left\lfloor m_2/m_1\right\rfloor m_1+|m_2|_{m_1}$,
we can equivalently write $m_1\left(1+\left\lfloor m_2/m_1\right\rfloor\left\lfloor m_1/|m_2|_{m_1}\right\rfloor\right)$ as
\begin{equation}\label{aaa}
\begin{split}
m_1\left(1+\left\lfloor\frac{m_2}{m_1}\right\rfloor\left\lfloor\frac{m_1}{|m_2|_{m_1}}\right\rfloor\right)
&=\left(\left\lfloor\frac{m_2}{m_1}\right\rfloor m_1+|m_2|_{m_1}\right)\left\lfloor \frac{m_1}{|m_2|_{m_1}}\right\rfloor+\left(m_1-|m_2|_{m_1}\left\lfloor \frac{m_1}{|m_2|_{m_1}}\right\rfloor\right)\\
&=m_2\left\lfloor\frac{m_1}{|m_2|_{m_1}}\right\rfloor+\left(m_1-|m_2|_{m_1}\left\lfloor \frac{m_1}{|m_2|_{m_1}}\right\rfloor\right).
\end{split}
\end{equation}
Also, since $\Gamma_2=\left\lfloor\Gamma_2/\Gamma_1\right\rfloor\Gamma_1+|\Gamma_2|_{\Gamma_1}$,
we can obtain that $\Gamma_1$ and $|\Gamma_2|_{\Gamma_1}$ are co-prime when $|\Gamma_2|_{\Gamma_1}\neq1$. It is due to the fact that $\Gamma_1$ and $\Gamma_2$ are co-prime.
So, we have $m_2>m_1-|m_2|_{m_1}\left\lfloor m_1/|m_2|_{m_1}\right\rfloor>0$ in (\ref{aaa}). Thus,
we have
\begin{equation}
0\leq n_2\leq\left\lfloor \frac{m_1}{|m_2|_{m_1}}\right\rfloor=\left\lfloor \frac{\Gamma_1}{|\Gamma_2|_{\Gamma_1}}\right\rfloor,
\end{equation}
when $0\leq N<m_1\left(1+\left\lfloor m_2/m_1\right\rfloor\left\lfloor m_1/|m_2|_{m_1}\right\rfloor\right)$.

Furthermore, due to $N=n_im_i+r_i$ for $i=1,2$, we get
\begin{equation}
n_2\Gamma_2-n_1\Gamma_1=\frac{r_1-r_2}{m}.
\end{equation}
When $n_2=\left\lfloor \Gamma_1/|\Gamma_2|_{\Gamma_1}\right\rfloor$, we have
\begin{equation}
\begin{split}
\frac{r_1-r_2}{m}&=\left\lfloor \frac{\Gamma_1}{|\Gamma_2|_{\Gamma_1}}\right\rfloor\Gamma_2-n_1\Gamma_1\\
&\geq \left\lfloor \frac{\Gamma_1}{|\Gamma_2|_{\Gamma_1}}\right\rfloor\Gamma_2-\left\lfloor\frac{\Gamma_2}{\Gamma_1}\right\rfloor\left\lfloor \frac{\Gamma_1}{|\Gamma_2|_{\Gamma_1}}\right\rfloor\Gamma_1\\
&=\left(\Gamma_2-\Gamma_1\left\lfloor\frac{\Gamma_2}{\Gamma_1}\right\rfloor\right)\left\lfloor \frac{\Gamma_1}{|\Gamma_2|_{\Gamma_1}}\right\rfloor>0.
\end{split}
\end{equation}
So, we obtain $r_1>r_2$ when $n_2=\left\lfloor \Gamma_1/|\Gamma_2|_{\Gamma_1}\right\rfloor$.
\end{proof}

Let
\begin{equation}
\textbf{q}_{21}\triangleq\frac{\tilde{r}_1-\tilde{r}_2}{m}.
\end{equation}
Then, we have the following result.

\begin{lemma}\label{lem3}
Let $N$ be an integer with $0\leq N<m_1\left(1+\left\lfloor m_2/m_1\right\rfloor\left\lfloor m_1/|m_2|_{m_1}\right\rfloor\right)$, $|\Gamma_2|_{\Gamma_1}\geq2$, and the remainder errors satisfy
\begin{equation}\label{con11}
-\frac{|\Gamma_2|_{\Gamma_1}}{2}\leq\frac{\triangle r_1-\triangle r_2}{m}<\frac{|\Gamma_2|_{\Gamma_1}}{2}.
\end{equation}
We can obtain the following three cases:
\begin{enumerate}
  \item if $\textbf{q}_{21}\geq|\Gamma_2|_{\Gamma_1}/2$, we have $r_1>r_2$;
  \item if $\textbf{q}_{21}<-|\Gamma_2|_{\Gamma_1}/2$, we have $r_1<r_2$;
  \item if $-|\Gamma_2|_{\Gamma_1}/2\leq \textbf{q}_{21}<|\Gamma_2|_{\Gamma_1}/2$, we have $r_1=r_2$.
\end{enumerate}
\end{lemma}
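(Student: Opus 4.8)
The plan is to reduce all three cases to locating a single integer on the real line. Set $q_{21}\triangleq(r_1-r_2)/m$. From $N=n_1m_1+r_1=n_2m_2+r_2$ we get $r_1-r_2=m(n_2\Gamma_2-n_1\Gamma_1)$, so $q_{21}=n_2\Gamma_2-n_1\Gamma_1$ is an integer, and the three conclusions $r_1>r_2$, $r_1=r_2$, $r_1<r_2$ are \emph{exactly} the three alternatives $q_{21}\ge1$, $q_{21}=0$, $q_{21}\le-1$. Writing $\textbf{e}\triangleq(\triangle r_1-\triangle r_2)/m$ we have $\textbf{q}_{21}=q_{21}+\textbf{e}$, where by hypothesis (\ref{con11}) $-\gamma/2\le\textbf{e}<\gamma/2$ with $\gamma\triangleq|\Gamma_2|_{\Gamma_1}\ge2$. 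So the whole statement is about locating the integer $q_{21}=\textbf{q}_{21}-\textbf{e}$.

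Cases (1) and (2) are then one-line sign estimates and do not use the dynamic-range bound. If $\textbf{q}_{21}\ge\gamma/2$, then $\textbf{e}<\gamma/2$ gives $q_{21}>\gamma/2-\gamma/2=0$, hence the integer $q_{21}\ge1$ and $r_1-r_2=mq_{21}>0$. If $\textbf{q}_{21}<-\gamma/2$, then $\textbf{e}\ge-\gamma/2$ gives $q_{21}<-\gamma/2+\gamma/2=0$, hence $q_{21}\le-1$ and $r_1<r_2$.

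The substance is case (3). Combining $-\gamma/2\le\textbf{q}_{21}<\gamma/2$ with $-\gamma/2\le\textbf{e}<\gamma/2$ yields $-\gamma<q_{21}<\gamma$. Now I decompose $q_{21}$ via Lemma \ref{lem1}: since $\Gamma_1,\Gamma_2$ are co-prime we have $\gcd(\Gamma_1,\gamma)=1$, and from $\Gamma_2=\lfloor\Gamma_2/\Gamma_1\rfloor\Gamma_1+\gamma$ one gets $q_{21}=k\Gamma_1+n_2\gamma$ with $k\triangleq n_2\lfloor\Gamma_2/\Gamma_1\rfloor-n_1$ an integer; moreover the bound $0\le n_2\le\lfloor\Gamma_1/\gamma\rfloor$ from Lemma \ref{lem1}, together with $\gcd(\Gamma_1,\gamma)=1$, forces $0\le n_2\gamma\le\Gamma_1-|\Gamma_1|_\gamma<\Gamma_1$, so $k$ and $n_2\gamma$ are the quotient and remainder of $q_{21}$ on division by $\Gamma_1$. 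I then eliminate $k\neq0$ against $-\gamma<q_{21}<\gamma$: if $k\ge1$ then $q_{21}\ge\Gamma_1>\gamma$; if $k\le-2$ then $q_{21}\le-2\Gamma_1+n_2\gamma<-\Gamma_1<-\gamma$ (using $\Gamma_1>\gamma$). The only delicate sub-case is $k=-1$, where $q_{21}=-\Gamma_1+n_2\gamma$ and $q_{21}>-\gamma$ forces $n_2>\Gamma_1/\gamma-1$; since $\Gamma_1/\gamma$ is not an integer, this with $n_2\le\lfloor\Gamma_1/\gamma\rfloor$ pins $n_2=\lfloor\Gamma_1/\gamma\rfloor$, whereupon the ``moreover'' clause of Lemma \ref{lem1} gives $r_1>r_2$, i.e.\ $q_{21}\ge1$, contradicting $q_{21}=-\Gamma_1+\lfloor\Gamma_1/\gamma\rfloor\gamma=-|\Gamma_1|_\gamma<0$. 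Hence $k=0$, so $q_{21}=n_2\gamma$, and $|q_{21}|<\gamma$ forces $n_2=0$, i.e.\ $q_{21}=0$ and $r_1=r_2$.

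The main obstacle is exactly this $k=-1$ sub-case of (3): it is the only place where the dynamic-range assumption is actually used (through Lemma \ref{lem1}'s bound $n_2\le\lfloor\Gamma_1/\gamma\rfloor$ and its ``moreover'' clause), and it is what rules out the otherwise-admissible value $q_{21}=-|\Gamma_1|_\gamma$. The hypothesis $\gamma\ge2$, together with $\gcd(\Gamma_1,\gamma)=1$, is used only to guarantee $|\Gamma_1|_\gamma\ge1$ and that $\Gamma_1/\gamma$ is non-integral; apart from these bookkeeping steps I expect no computational difficulty.
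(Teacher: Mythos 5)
Your proof is correct and rests on the same essential ingredients as the paper's: the identity $(r_1-r_2)/m=n_2\Gamma_2-n_1\Gamma_1$, the bound $0\le n_2|\Gamma_2|_{\Gamma_1}<\Gamma_1$ coming from Lemma \ref{lem1}, and that lemma's ``moreover'' clause to rule out $n_2=\lfloor\Gamma_1/|\Gamma_2|_{\Gamma_1}\rfloor$ in the problematic sub-case. The only difference is organizational: the paper proves the three forward implications ($r_1>r_2\Rightarrow\textbf{q}_{21}\ge|\Gamma_2|_{\Gamma_1}/2$, etc.) and relies on the trichotomy, whereas you prove the stated converses directly (with your $k=-1$ analysis playing the role of the paper's lower bound $(r_2-r_1)/m>|\Gamma_2|_{\Gamma_1}$); this is not a genuinely different route.
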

\begin{proof}
From Lemma \ref{lem1}, one can see that $\Gamma_1$ and $|\Gamma_2|_{\Gamma_1}$ are co-prime and $0\leq n_2\leq\left\lfloor \Gamma_1/|\Gamma_2|_{\Gamma_1}\right\rfloor$ if $|\Gamma_2|_{\Gamma_1}\geq2$. So, we have
\begin{equation}\label{daxiao}
0\leq n_2|\Gamma_2|_{\Gamma_1}<\Gamma_1.
\end{equation}
Then, modulo $\Gamma_1$ in both sides of $n_2\Gamma_2-n_1\Gamma_1=(r_1-r_2)/m$, we get
\begin{equation}\label{eq3}
n_2|\Gamma_2|_{\Gamma_1}\equiv\frac{r_1-r_2}{m}\mbox{ mod }\Gamma_1.
\end{equation}
When $r_1>r_2$, we have $(r_1-r_2)/m=n_2|\Gamma_2|_{\Gamma_1}$ with $n_2\geq1$ from (\ref{daxiao}) and (\ref{eq3}). Based on (\ref{con11}), we have
\begin{equation}
\frac{\tilde{r}_1-\tilde{r}_2}{m}=\frac{r_1-r_2}{m}+\frac{\triangle r_1-\triangle r_2}{m}\geq\frac{|\Gamma_2|_{\Gamma_1}}{2}.
\end{equation}
When $r_1<r_2$, we first know $0\leq n_2\leq\left\lfloor \Gamma_1/|\Gamma_2|_{\Gamma_1}\right\rfloor-1$ from Lemma \ref{lem1}. Then, from (\ref{eq3}), we get
\begin{equation}\label{e23}
\begin{split}
\frac{r_2-r_1}{m}&=k\Gamma_1-n_2|\Gamma_2|_{\Gamma_1}\mbox{ with }k\geq1\\
&\geq k\Gamma_1-\left(\left\lfloor \frac{\Gamma_1}{|\Gamma_2|_{\Gamma_1}}\right\rfloor-1\right)|\Gamma_2|_{\Gamma_1}\\
&=\left(k\Gamma_1-\left\lfloor \frac{\Gamma_1}{|\Gamma_2|_{\Gamma_1}}\right\rfloor|\Gamma_2|_{\Gamma_1}\right)+|\Gamma_2|_{\Gamma_1}\\
&>|\Gamma_2|_{\Gamma_1}.
\end{split}
\end{equation}
Based on (\ref{con11}), we have
\begin{equation}
\frac{\tilde{r}_1-\tilde{r}_2}{m}=\frac{r_1-r_2}{m}+\frac{\triangle r_1-\triangle r_2}{m}<-\frac{|\Gamma_2|_{\Gamma_1}}{2}.
\end{equation}
When $r_1=r_2$, we have $\tilde{r}_1-\tilde{r}_2=\triangle r_1-\triangle r_2$. Based on (\ref{con11}), we have
\begin{equation}
-\frac{|\Gamma_2|_{\Gamma_1}}{2}\leq\frac{\tilde{r}_1-\tilde{r}_2}{m}<\frac{|\Gamma_2|_{\Gamma_1}}{2}.
\end{equation}
Therefore, we can obtain the above three cases and complete the proof.
\end{proof}

\begin{lemma}\label{lem2}
Let $N$ be an integer with $0\leq N<m_1\left(1+\left\lfloor m_2/m_1\right\rfloor\left\lfloor m_1/|m_2|_{m_1}\right\rfloor\right)$, $|\Gamma_2|_{\Gamma_1}\geq2$, and the remainder errors satisfy (\ref{con11}).
When $\textbf{q}_{21}<-|\Gamma_2|_{\Gamma_1}/2$, we can obtain that if
\begin{equation}
\frac{\left|\Gamma_2\right|_{\Gamma_1}}{2}\leq \textbf{q}_{21}-\left\lfloor\frac{\textbf{q}_{21}}{\Gamma_1}\right\rfloor\Gamma_1<\left\lfloor\frac{\Gamma_1}{\left|\Gamma_2\right|_{\Gamma_1}}\right\rfloor\left|\Gamma_2\right|_{\Gamma_1}-\frac{\left|\Gamma_2\right|_{\Gamma_1}}{2},
\end{equation}
we have $1\leq n_2\leq\left\lfloor\Gamma_1/\left|\Gamma_2\right|_{\Gamma_1}\right\rfloor-1$, otherwise $n_2=0$.
\end{lemma}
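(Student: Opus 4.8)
The plan is to reduce everything modulo $\Gamma_1$ and to read off $n_2$ from the single observable quantity $\rho\triangleq\textbf{q}_{21}-\left\lfloor\textbf{q}_{21}/\Gamma_1\right\rfloor\Gamma_1$, which is the representative of $\textbf{q}_{21}$ in $[0,\Gamma_1)$. First I would collect what the hypotheses already give. Since $\textbf{q}_{21}<-|\Gamma_2|_{\Gamma_1}/2$, case 2) of Lemma \ref{lem3} yields $r_1<r_2$; then Lemma \ref{lem1} (and its proof) gives that $\Gamma_1$ and $|\Gamma_2|_{\Gamma_1}$ are co-prime and that $0\leq n_2\leq\left\lfloor\Gamma_1/|\Gamma_2|_{\Gamma_1}\right\rfloor-1$, the ``$-1$'' coming from the fact that $n_2=\left\lfloor\Gamma_1/|\Gamma_2|_{\Gamma_1}\right\rfloor$ forces $r_1>r_2$. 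Co-primality together with $|\Gamma_2|_{\Gamma_1}\geq2$ also gives the strict inequality $\left\lfloor\Gamma_1/|\Gamma_2|_{\Gamma_1}\right\rfloor|\Gamma_2|_{\Gamma_1}<\Gamma_1$, which will be the crucial quantitative fact.

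Next I would set up the modular reduction. Because $m$ divides both $m_1$ and $m_2$ we have $r_1\equiv N\equiv r_2\ \mathrm{mod}\ m$, so $(r_1-r_2)/m$ is an integer, and (\ref{eq3}) says $n_2|\Gamma_2|_{\Gamma_1}\equiv(r_1-r_2)/m\ \mathrm{mod}\ \Gamma_1$. Writing $\textbf{q}_{21}=(r_1-r_2)/m+(\triangle r_1-\triangle r_2)/m$, I conclude that $\textbf{q}_{21}$ and $n_2|\Gamma_2|_{\Gamma_1}+(\triangle r_1-\triangle r_2)/m$ differ by an integer multiple of $\Gamma_1$; hence $\rho$ is precisely the unique number in $[0,\Gamma_1)$ that is congruent modulo $\Gamma_1$ to $n_2|\Gamma_2|_{\Gamma_1}+(\triangle r_1-\triangle r_2)/m$.

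Then I would locate that representative using (\ref{con11}), i.e.\ $-|\Gamma_2|_{\Gamma_1}/2\leq(\triangle r_1-\triangle r_2)/m<|\Gamma_2|_{\Gamma_1}/2$, splitting into three sub-cases. (i) If $n_2=0$ and $(\triangle r_1-\triangle r_2)/m\geq0$, then the value is already in $[0,|\Gamma_2|_{\Gamma_1}/2)\subseteq[0,\Gamma_1)$, so $\rho\in[0,|\Gamma_2|_{\Gamma_1}/2)$. (ii) If $n_2=0$ and $(\triangle r_1-\triangle r_2)/m<0$, the value lies in $[-|\Gamma_2|_{\Gamma_1}/2,0)$ and must be shifted up by $\Gamma_1$, giving $\rho\in[\Gamma_1-|\Gamma_2|_{\Gamma_1}/2,\Gamma_1)$. (iii) If $1\leq n_2\leq\left\lfloor\Gamma_1/|\Gamma_2|_{\Gamma_1}\right\rfloor-1$, then $n_2|\Gamma_2|_{\Gamma_1}+(\triangle r_1-\triangle r_2)/m$ lies in $[\,|\Gamma_2|_{\Gamma_1}/2,\ \left\lfloor\Gamma_1/|\Gamma_2|_{\Gamma_1}\right\rfloor|\Gamma_2|_{\Gamma_1}-|\Gamma_2|_{\Gamma_1}/2\,)\subseteq[0,\Gamma_1)$, so $\rho$ equals it. Finally I would check that these three target intervals are pairwise disjoint; the only nonobvious check is that the right endpoint of the interval in (iii) is at most $\Gamma_1-|\Gamma_2|_{\Gamma_1}/2$, the left endpoint of the interval in (ii), which is exactly the strict inequality $\left\lfloor\Gamma_1/|\Gamma_2|_{\Gamma_1}\right\rfloor|\Gamma_2|_{\Gamma_1}<\Gamma_1$ recorded above. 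Disjointness makes the correspondence exact: $\rho=\textbf{q}_{21}-\left\lfloor\textbf{q}_{21}/\Gamma_1\right\rfloor\Gamma_1$ lies in $[\,|\Gamma_2|_{\Gamma_1}/2,\ \left\lfloor\Gamma_1/|\Gamma_2|_{\Gamma_1}\right\rfloor|\Gamma_2|_{\Gamma_1}-|\Gamma_2|_{\Gamma_1}/2\,)$ if and only if $1\leq n_2\leq\left\lfloor\Gamma_1/|\Gamma_2|_{\Gamma_1}\right\rfloor-1$, and otherwise $n_2=0$, which is the claim.

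I expect the main obstacle to be the wrap-around in sub-case (ii): it is the reason the complement of the displayed interval inside $[0,\Gamma_1)$ is a union $[0,|\Gamma_2|_{\Gamma_1}/2)\cup[\Gamma_1-|\Gamma_2|_{\Gamma_1}/2,\Gamma_1)$ of two pieces rather than one interval, and the separation of these pieces from the $n_2\geq1$ interval relies on using $\left\lfloor\Gamma_1/|\Gamma_2|_{\Gamma_1}\right\rfloor|\Gamma_2|_{\Gamma_1}<\Gamma_1$ strictly, i.e.\ on the co-primality of $\Gamma_1$ and $|\Gamma_2|_{\Gamma_1}$ together with $|\Gamma_2|_{\Gamma_1}\geq2$. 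Everything else is bookkeeping with the rounding/floor identities.
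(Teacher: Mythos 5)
Your proposal is correct and follows essentially the same route as the paper: reduce modulo $\Gamma_1$, identify $\textbf{q}_{21}-\lfloor\textbf{q}_{21}/\Gamma_1\rfloor\Gamma_1$ as the representative of $n_2|\Gamma_2|_{\Gamma_1}+(\triangle r_1-\triangle r_2)/m$ in $[0,\Gamma_1)$, and split into the cases $n_2=0$ (with wrap-around) versus $1\leq n_2\leq\lfloor\Gamma_1/|\Gamma_2|_{\Gamma_1}\rfloor-1$. Your explicit verification that the three target intervals are pairwise disjoint (via $\lfloor\Gamma_1/|\Gamma_2|_{\Gamma_1}\rfloor|\Gamma_2|_{\Gamma_1}<\Gamma_1$, from co-primality and $|\Gamma_2|_{\Gamma_1}\geq2$) is a detail the paper leaves implicit, and it is needed for the ``otherwise $n_2=0$'' direction.
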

\begin{proof}
Since $\textbf{q}_{21}<-|\Gamma_2|_{\Gamma_1}/2$, we have $r_1<r_2$ from Lemma \ref{lem3}, and thus $0\leq n_2\leq\left\lfloor\Gamma_1/\left|\Gamma_2\right|_{\Gamma_1}\right\rfloor-1$ from Lemma \ref{lem1}.
From (\ref{e23}), we have
\begin{equation}\label{qq}
\begin{split}
\textbf{q}_{21}-\left\lfloor\frac{\textbf{q}_{21}}{\Gamma_1}\right\rfloor\Gamma_1&=n_2|\Gamma_2|_{\Gamma_1}-k\Gamma_1+\frac{\triangle r_1-\triangle r_2}{m}-\left\lfloor\frac{\textbf{q}_{21}}{\Gamma_1}\right\rfloor\Gamma_1\quad\mbox{with }k\geq1\\
&=n_2|\Gamma_2|_{\Gamma_1}-k\Gamma_1+\frac{\triangle r_1-\triangle r_2}{m}-\left\lfloor-k+\frac{n_2|\Gamma_2|_{\Gamma_1}+(\triangle r_1-\triangle r_2)/m}{\Gamma_1}\right\rfloor\Gamma_1\\
&=n_2|\Gamma_2|_{\Gamma_1}+\frac{\triangle r_1-\triangle r_2}{m}-\left\lfloor\frac{n_2|\Gamma_2|_{\Gamma_1}+(\triangle r_1-\triangle r_2)/m}{\Gamma_1}\right\rfloor\Gamma_1.
\end{split}
\end{equation}
When $1\leq n_2\leq\left\lfloor\Gamma_1/\left|\Gamma_2\right|_{\Gamma_1}\right\rfloor-1$, it follows from (\ref{con11}) and (\ref{daxiao}) that
\begin{equation}
\left\lfloor\frac{n_2|\Gamma_2|_{\Gamma_1}+(\triangle r_1-\triangle r_2)/m}{\Gamma_1}\right\rfloor=0.
\end{equation}
Then,
$\textbf{q}_{21}-\left\lfloor \textbf{q}_{21}/\Gamma_1\right\rfloor\Gamma_1=n_2|\Gamma_2|_{\Gamma_1}+(\triangle r_1-\triangle r_2)/m$. So,
\begin{equation}
\frac{\left|\Gamma_2\right|_{\Gamma_1}}{2}\leq \textbf{q}_{21}-\left\lfloor\frac{\textbf{q}_{21}}{\Gamma_1}\right\rfloor\Gamma_1<\left\lfloor\frac{\Gamma_1}{\left|\Gamma_2\right|_{\Gamma_1}}\right\rfloor|\Gamma_2|_{\Gamma_1}-\frac{\left|\Gamma_2\right|_{\Gamma_1}}{2}.
\end{equation}
When $n_2=0$, we have
\begin{equation}
\textbf{q}_{21}-\left\lfloor\frac{\textbf{q}_{21}}{\Gamma_1}\right\rfloor\Gamma_1=\frac{\triangle r_1-\triangle r_2}{m}-\left\lfloor\frac{(\triangle r_1-\triangle r_2)/m}{\Gamma_1}\right\rfloor\Gamma_1.
\end{equation}
So, one can see that $0\leq\textbf{q}_{21}-\left\lfloor\textbf{q}_{21}/\Gamma_1\right\rfloor\Gamma_1<\left|\Gamma_2\right|_{\Gamma_1}/2$ or $\Gamma_1-\left|\Gamma_2\right|_{\Gamma_1}/2\leq \textbf{q}_{21}-\left\lfloor\textbf{q}_{21}/\Gamma_1\right\rfloor\Gamma_1<\Gamma_1$.
Therefore, the final result is derived.
\end{proof}

We next propose a simple determination algorithm for the folding integers $n_i$ from the erroneous remainders $\tilde{r}_i$ for $i=1,2$ of an integer $N$ with $0\leq N<m_1\left(1+\left\lfloor m_2/m_1\right\rfloor\left\lfloor m_1/|m_2|_{m_1}\right\rfloor\right)$ as follows.
\begin{algorithm}[H]
  \caption{\!:}
  \label{alg:Framwork2}
  \begin{algorithmic}[1]
  \State Calculate $\textbf{q}_{21}\triangleq(\tilde{r}_1-\tilde{r}_2)/m$.
   \label{code:11}
  \State If $\textbf{q}_{21}\geq|\Gamma_2|_{\Gamma_1}/2$, let
  \begin{equation}
  \hat{n}_2=\left[\frac{\textbf{q}_{21}}{|\Gamma_2|_{\Gamma_1}}\right].
  \end{equation}
  If $\textbf{q}_{21}<-|\Gamma_2|_{\Gamma_1}/2$ and
  $\frac{\left|\Gamma_2\right|_{\Gamma_1}}{2}\leq \textbf{q}_{21}-\left\lfloor\frac{\textbf{q}_{21}}{\Gamma_1}\right\rfloor\Gamma_1<\left\lfloor\frac{\Gamma_1}{\left|\Gamma_2\right|_{\Gamma_1}}\right\rfloor\left|\Gamma_2\right|_{\Gamma_1}-\frac{\left|\Gamma_2\right|_{\Gamma_1}}{2}$,
  let
  \begin{equation}
 \hat{n}_2=\left[\frac{\textbf{q}_{21}-\left\lfloor\textbf{q}_{21}/\Gamma_1\right\rfloor\Gamma_1}{|\Gamma_2|_{\Gamma_1}}\right].
  \end{equation}
  Otherwise, let $\hat{n}_2=0$.
   \label{code:21}
   \State Let
   \begin{equation}
   \hat{n}_1=\left[\frac{\hat{n}_2m_2+\tilde{r}_2-\tilde{r}_1}{m_1}\right].
   \end{equation}
  \end{algorithmic}
\end{algorithm}

Then, we have the following result.
\begin{theorem}\label{th2}
If the remainder errors satisfy
\begin{equation}\label{concon1}
-\frac{|\Gamma_2|_{\Gamma_1}}{2}\leq\frac{\triangle r_1-\triangle r_2}{m}<\frac{|\Gamma_2|_{\Gamma_1}}{2},
\end{equation}
then the dynamic range of $N$ is $m_1\left(1+\left\lfloor m_2/m_1\right\rfloor\left\lfloor m_1/|m_2|_{m_1}\right\rfloor\right)$, and the folding integers can be accurately determined in \textbf{Algorithm \ref{alg:Framwork2}}, i.e., $\hat{n}_i=n_i$ for $i=1,2$.
\end{theorem}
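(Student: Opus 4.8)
The plan is to first dispose of the degenerate case $|\Gamma_2|_{\Gamma_1}=1$ and then treat the main case $|\Gamma_2|_{\Gamma_1}\ge 2$ by showing that \textbf{Algorithm \ref{alg:Framwork2}} returns the true folding integers. If $|\Gamma_2|_{\Gamma_1}=1$, then $|m_2|_{m_1}=m$, so $m_1(1+\lfloor m_2/m_1\rfloor\lfloor m_1/|m_2|_{m_1}\rfloor)=m_1\Gamma_2=\mbox{lcm}(m_1,m_2)$ and (\ref{concon1}) reduces to the $L=2$ instance of condition (\ref{ccc}); the statement is then the two-modulus case of Proposition \ref{pr1}. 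So assume henceforth $|\Gamma_2|_{\Gamma_1}\ge 2$, which is the standing hypothesis of Lemmas \ref{lem1}--\ref{lem2}. Granting that \textbf{Algorithm \ref{alg:Framwork2}} outputs $\hat n_i=n_i$ for every $N$ with $0\le N<m_1(1+\lfloor m_2/m_1\rfloor\lfloor m_1/|m_2|_{m_1}\rfloor)$, formula (\ref{reconN}) gives $|\hat N-N|\le\tau$ whenever $|\triangle r_i|\le\tau$, so $N$ is robustly reconstructible; moreover this dynamic range cannot be enlarged, because (\ref{concon1}) is weaker than $\tau<|m_2|_{m_1}/4=m|\Gamma_2|_{\Gamma_1}/4$ and $m_1(1+\lfloor m_2/m_1\rfloor\lfloor m_1/|m_2|_{m_1}\rfloor)$ is already the exact dynamic range under that stronger condition by Proposition \ref{pr3}. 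Hence everything comes down to verifying the algorithm.

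For $\hat n_2$ I align the three branches of Step \ref{code:21} with the three cases of Lemma \ref{lem3}, which split according to the value of $\textbf{q}_{21}$. If $\textbf{q}_{21}\ge|\Gamma_2|_{\Gamma_1}/2$ then $r_1>r_2$, and (as in the proof of Lemma \ref{lem3}: reduce $n_2\Gamma_2-n_1\Gamma_1=(r_1-r_2)/m$ modulo $\Gamma_1$ and use that both $n_2|\Gamma_2|_{\Gamma_1}$ — by Lemma \ref{lem1} and $\mbox{gcd}(\Gamma_1,|\Gamma_2|_{\Gamma_1})=1$ — and $(r_1-r_2)/m$ lie in $[0,\Gamma_1)$) one gets $(r_1-r_2)/m=n_2|\Gamma_2|_{\Gamma_1}$; hence $\textbf{q}_{21}/|\Gamma_2|_{\Gamma_1}=n_2+(\triangle r_1-\triangle r_2)/(m|\Gamma_2|_{\Gamma_1})$ with the last term in $[-1/2,1/2)$ by (\ref{concon1}), so the rounding returns $\hat n_2=n_2$. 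If $\textbf{q}_{21}<-|\Gamma_2|_{\Gamma_1}/2$ then $r_1<r_2$, and Lemma \ref{lem2} shows the congruence test in Step \ref{code:21} holds exactly when $1\le n_2\le\lfloor\Gamma_1/|\Gamma_2|_{\Gamma_1}\rfloor-1$; in that case (\ref{qq}) (together with the vanishing of the floor term there) gives $\textbf{q}_{21}-\lfloor\textbf{q}_{21}/\Gamma_1\rfloor\Gamma_1=n_2|\Gamma_2|_{\Gamma_1}+(\triangle r_1-\triangle r_2)/m$, so again dividing by $|\Gamma_2|_{\Gamma_1}$ and rounding returns $\hat n_2=n_2$, while if the test fails Lemma \ref{lem2} forces $n_2=0=\hat n_2$. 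The remaining range $-|\Gamma_2|_{\Gamma_1}/2\le\textbf{q}_{21}<|\Gamma_2|_{\Gamma_1}/2$ is the case $r_1=r_2$: then $n_2\Gamma_2=n_1\Gamma_1$, so $\mbox{gcd}(\Gamma_1,\Gamma_2)=1$ gives $\Gamma_1\mid n_2$, while $|\Gamma_2|_{\Gamma_1}\ge 2$ gives $0\le n_2\le\lfloor\Gamma_1/|\Gamma_2|_{\Gamma_1}\rfloor<\Gamma_1$, so $n_2=0=\hat n_2$.

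With $\hat n_2=n_2$ in hand, the last step is routine. From $n_2m_2+r_2=N=n_1m_1+r_1$ we get $\hat n_2m_2+\tilde r_2-\tilde r_1=n_1m_1-(\triangle r_1-\triangle r_2)$, hence $(\hat n_2m_2+\tilde r_2-\tilde r_1)/m_1=n_1-(\triangle r_1-\triangle r_2)/(m\Gamma_1)$; since $|\Gamma_2|_{\Gamma_1}<\Gamma_1$, condition (\ref{concon1}) places $(\triangle r_1-\triangle r_2)/(m\Gamma_1)$ strictly inside $(-1/2,1/2)$, so the rounding returns $\hat n_1=n_1$. This yields $\hat n_i=n_i$ for $i=1,2$, and, together with (\ref{reconN}), a robust reconstruction, which completes the argument.

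The delicate point is the case $r_1<r_2$. When $r_1>r_2$ the quantity $(r_1-r_2)/m$ already equals $n_2|\Gamma_2|_{\Gamma_1}$ and lies in a single period $[0,\Gamma_1)$, so $n_2$ is read off directly; when $r_1<r_2$, $\textbf{q}_{21}$ is negative and differs from $-n_2|\Gamma_2|_{\Gamma_1}$ by an unknown multiple $k\Gamma_1$ with $k\ge1$, so recovering $n_2$ hinges on the reduction modulo $\Gamma_1$ in Step \ref{code:21} together with the range estimates of Lemma \ref{lem2}, which are exactly what separates $n_2=0$ from $1\le n_2\le\lfloor\Gamma_1/|\Gamma_2|_{\Gamma_1}\rfloor-1$ and keeps $n_2|\Gamma_2|_{\Gamma_1}$ within one period so that the final rounding is unambiguous. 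It is also worth flagging the two quiet uses of $|\Gamma_2|_{\Gamma_1}\ge 2$ (equivalently $|\Gamma_2|_{\Gamma_1}<\Gamma_1$ and $|\Gamma_2|_{\Gamma_1}\nmid\Gamma_1$): closing the $r_1=r_2$ branch and bounding the perturbation in the final step.
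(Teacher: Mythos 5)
Your verification that \textbf{Algorithm \ref{alg:Framwork2}} returns $\hat n_i=n_i$ for every $N$ in the stated range is correct and follows the same route as the paper: the three branches of Step \ref{code:21} are matched to the three cases of Lemma \ref{lem3}, Lemma \ref{lem2} separates $n_2=0$ from $1\le n_2\le\lfloor\Gamma_1/|\Gamma_2|_{\Gamma_1}\rfloor-1$ in the negative branch, and the final rounding for $\hat n_1$ is justified exactly as the paper intends (your explicit use of $|\Gamma_2|_{\Gamma_1}<\Gamma_1$ to bound the perturbation, and your closing of the $r_1=r_2$ branch via $\gcd(\Gamma_1,\Gamma_2)=1$, are in fact spelled out more carefully than in the paper). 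Where you genuinely diverge is the converse claim that the dynamic range cannot exceed $m_1\left(1+\lfloor m_2/m_1\rfloor\lfloor m_1/|m_2|_{m_1}\rfloor\right)$: you obtain it by monotonicity of the dynamic range in the error condition together with the exactness asserted in Proposition \ref{pr3}, whereas the paper proves it from scratch by exhibiting an explicit failure instance, namely $N=m_1\left(1+\lfloor m_2/m_1\rfloor\lfloor m_1/|m_2|_{m_1}\rfloor\right)$ (so $r_1=0$, $r_2=m|\Gamma_1|_{|\Gamma_2|_{\Gamma_1}}$, $n_2=\lfloor\Gamma_1/|\Gamma_2|_{\Gamma_1}\rfloor\neq0$) with $\triangle r_2=0$ and $\triangle r_1=\lfloor|\Gamma_1|_{|\Gamma_2|_{\Gamma_1}}/2\rfloor m$, which satisfies (\ref{concon1}) yet forces $\textbf{q}_{21}=-\lceil|\Gamma_1|_{|\Gamma_2|_{\Gamma_1}}/2\rceil\in[-|\Gamma_2|_{\Gamma_1}/2,0)$ and hence $\hat n_2=0\neq n_2$. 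Your reduction is logically sound if one reads Proposition \ref{pr3} as asserting the exact dynamic range (the paper does describe it that way in the introduction), but it is the weaker choice: the ``i.e.'' clause of Proposition \ref{pr3} records only the achievability direction, the failure it would supply lives in the slanted-line framework of \cite{new} rather than in terms of folding-integer determination (an equivalence the paper only argues after Section \ref{sec4}), and since the paper presents Proposition \ref{pr3} as a corollary of Theorem \ref{th2}, outsourcing the converse to it is stylistically circular. Substituting the explicit counterexample above for that one sentence makes your proof fully self-contained; everything else stands.
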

\begin{proof}
If $|\Gamma_2|_{\Gamma_1}=1$, it is easy to see that $m_1\left(1+\left\lfloor m_2/m_1\right\rfloor\left\lfloor m_1/|m_2|_{m_1}\right\rfloor\right)=m_1\Gamma_2=\mbox{lcm}(m_1,m_2)$. Proposition \ref{pr1} has proven that (\ref{concon1}) is a necessary and sufficient condition for accurate determination of the folding integers. In the following, we assume $|\Gamma_2|_{\Gamma_1}\geq2$ and let $N$ be an integer with $0\leq N<m_1\left(1+\left\lfloor m_2/m_1\right\rfloor\left\lfloor m_1/|m_2|_{m_1}\right\rfloor\right)$.
When $\textbf{q}_{21}\geq|\Gamma_2|_{\Gamma_1}/2$, we know $r_1>r_2$ and $(r_1-r_2)/m=n_2|\Gamma_2|_{\Gamma_1}$ from Lemma \ref{lem3}. So,
\begin{equation}
\begin{split}
\hat{n}_2&=\left[\frac{\textbf{q}_{21}}{|\Gamma_2|_{\Gamma_1}}\right]\\
&=\left[\frac{n_2|\Gamma_2|_{\Gamma_1}+(\triangle r_1-\triangle r_2)/m}{|\Gamma_2|_{\Gamma_1}}\right]\\
&=n_2+\left[\frac{(\triangle r_1-\triangle r_2)/m}{|\Gamma_2|_{\Gamma_1}}\right]\\
&=n_2.
\end{split}
\end{equation}
When $\textbf{q}_{21}<-|\Gamma_2|_{\Gamma_1}/2$ and $\frac{\left|\Gamma_2\right|_{\Gamma_1}}{2}\leq \textbf{q}_{21}-\left\lfloor\frac{\textbf{q}_{21}}{\Gamma_1}\right\rfloor\Gamma_1<\left\lfloor\frac{\Gamma_1}{\left|\Gamma_2\right|_{\Gamma_1}}\right\rfloor\left|\Gamma_2\right|_{\Gamma_1}-\frac{\left|\Gamma_2\right|_{\Gamma_1}}{2}$, we know $r_1<r_2$ and $1\leq n_2\leq\left\lfloor\Gamma_1/\left|\Gamma_2\right|_{\Gamma_1}\right\rfloor-1$ from Lemma \ref{lem2}. So,
\begin{equation}
\begin{split}
\hat{n}_2&=\left[\frac{\textbf{q}_{21}-\left\lfloor\textbf{q}_{21}/\Gamma_1\right\rfloor\Gamma_1}{|\Gamma_2|_{\Gamma_1}}\right]\\
&=\left[\frac{n_2|\Gamma_2|_{\Gamma_1}+(\triangle r_1-\triangle r_2)/m}{|\Gamma_2|_{\Gamma_1}}\right]\\
&=n_2.
\end{split}
\end{equation}
When $\textbf{q}_{21}<-|\Gamma_2|_{\Gamma_1}/2$, and $\textbf{q}_{21}-\left\lfloor\frac{\textbf{q}_{21}}{\Gamma_1}\right\rfloor\Gamma_1<\frac{\left|\Gamma_2\right|_{\Gamma_1}}{2}$ or $\textbf{q}_{21}-\left\lfloor\frac{\textbf{q}_{21}}{\Gamma_1}\right\rfloor\Gamma_1\geq\left\lfloor\frac{\Gamma_1}{\left|\Gamma_2\right|_{\Gamma_1}}\right\rfloor\left|\Gamma_2\right|_{\Gamma_1}-\frac{\left|\Gamma_2\right|_{\Gamma_1}}{2}$, we know $n_2=0$ from Lemma \ref{lem2}, and $\hat{n}_2=n_2=0$ in \textbf{Algorithm \ref{alg:Framwork2}}.
When $-|\Gamma_2|_{\Gamma_1}/2\leq \textbf{q}_{21}<|\Gamma_2|_{\Gamma_1}/2$, we have $r_1=r_2$ from Lemma \ref{lem3}. So, $n_1=n_2=\hat{n}_2=0$. Hence, we obtain $\hat{n}_2=n_2$ in \textbf{Algorithm \ref{alg:Framwork2}}. After determining $n_2$, let
\begin{equation}
\begin{split}
\hat{n}_1&=\left[\frac{\hat{n}_2m_2+\tilde{r}_2-\tilde{r}_1}{m_1}\right]\\
&=\left[\frac{N-r_1+\triangle r_2-\triangle r_1}{m_1}\right]\\
&=n_1+\left[\frac{\triangle r_2-\triangle r_1}{m_1}\right]\\
&=n_1.
\end{split}
\end{equation}
Therefore, we can accurately determine $n_i$, i.e., $\hat{n}_i=n_i$, for $i=1,2$ in the above \textbf{Algorithm \ref{alg:Framwork2}}.

We next prove that the dynamic range is indeed $m_1\left(1+\left\lfloor m_2/m_1\right\rfloor\left\lfloor m_1/|m_2|_{m_1}\right\rfloor\right)$. Suppose that the dynamic range is larger than $m_1\left(1+\left\lfloor m_2/m_1\right\rfloor\left\lfloor m_1/|m_2|_{m_1}\right\rfloor\right)$. Let
$N=m_1\left(1+\left\lfloor m_2/m_1\right\rfloor\left\lfloor m_1/|m_2|_{m_1}\right\rfloor\right)$.
Then, we have $r_1=0$ and $r_2=m_1-|m_2|_{m_1}\left\lfloor m_1/|m_2|_{m_1}\right\rfloor$ from (\ref{aaa}). Since $|\Gamma_2|_{\Gamma_1}\geq2$, we assume $\triangle r_1=\left\lfloor|\Gamma_1|_{|\Gamma_2|_{\Gamma_1}}/2\right\rfloor m$ and $\triangle r_2=0$.
It is obvious to see that $\triangle r_1$ and $\triangle r_2$ satisfy (\ref{concon1}). Following \textbf{Algorithm \ref{alg:Framwork2}}, we calculate
\begin{equation}
\begin{split}
\textbf{q}_{21}&=\frac{r_1-r_2}{m}+\frac{\triangle r_1-\triangle r_2}{m}\\
&=|\Gamma_2|_{\Gamma_1}\left\lfloor\frac{\Gamma_1}{|\Gamma_2|_{\Gamma_1}}\right\rfloor-\Gamma_1+\left\lfloor\frac{|\Gamma_1|_{|\Gamma_2|_{\Gamma_1}}}{2}\right\rfloor\\
&=-|\Gamma_1|_{|\Gamma_2|_{\Gamma_1}}+\left\lfloor\frac{|\Gamma_1|_{|\Gamma_2|_{\Gamma_1}}}{2}\right\rfloor\\
&=-\left\lceil\frac{|\Gamma_1|_{|\Gamma_2|_{\Gamma_1}}}{2}\right\rceil\geq-\frac{|\Gamma_2|_{\Gamma_1}}{2}.
\end{split}
\end{equation}
So, we have $-|\Gamma_2|_{\Gamma_1}/2\leq\textbf{q}_{21}<0$, and then we get $\hat{n}_2=0$ in \textbf{Algorithm \ref{alg:Framwork2}}. But from (\ref{aaa}), we know $n_2=\left\lfloor m_1/|m_2|_{m_1}\right\rfloor\neq0$, i.e., $\hat{n}_2\neq n_2$. Hence, we have proven that the dynamic range is $m_1\left(1+\left\lfloor m_2/m_1\right\rfloor\left\lfloor m_1/|m_2|_{m_1}\right\rfloor\right)$.
\end{proof}

Recall that $\tau$ is the remainder error bound, i.e., $|\triangle r_i|\leq\tau$ for $i=1,2$. If
\begin{equation}
\tau<\frac{m|\Gamma_2|_{\Gamma_1}}{4},
\end{equation}
we have
\begin{equation}\label{ap}
|\triangle r_1-\triangle r_2|<\frac{m|\Gamma_2|_{\Gamma_1}}{2}.
\end{equation}
Clearly, (\ref{ap}) implies the sufficiency (\ref{concon1}) in Theorem \ref{th2}. Thus, Proposition \ref{pr3} can be thought of as a corollary of Theorem \ref{th2}. More importantly, we have presented a simple closed-form algorithm, \textbf{Algorithm \ref{alg:Framwork2}}, to determine the folding integers.

\begin{example}
Let $m_1=8\cdot5$ and $m_2=8\cdot17$. When $0\leq N<8\cdot5\cdot17=680$, the robustness bound is $8/4$ from Proposition \ref{pr2}. When $0\leq N<40\cdot(1+3\cdot2)=280$, its robustness bound becomes $16/4$ from Theorem \ref{th2} or Proposition \ref{pr3}.
\end{example}

\section{Extended Robust Remaindering with Two Moduli}\label{sec4}
Similar to Proposition \ref{pr4}, we first obtain an extension of Theorem \ref{th2} if $|\Gamma_2|_{\Gamma_1}\geq2$ in this section, where the exact dynamic range with a closed-form formula is found. A closed-form determination algorithm for the folding integers is then proposed as well.

Let $\sigma_{-1}=\Gamma_2,\sigma_{0}=\Gamma_1$, and for $i\geq 1$,
\begin{equation}\label{defsig}
\sigma_i=|\sigma_{i-2}|_{\sigma_{i-1}}.
\end{equation}

\begin{lemma}\label{lem5}
For $i\geq 1$, we have
\begin{equation}
\sigma_{i-2}=\left\lfloor\frac{\sigma_{i-2}}{\sigma_{i-1}}\right\rfloor\sigma_{i-1}+\sigma_{i}.
\end{equation}
There exists an index $K$ with $K\geq0$ such that $\sigma_{K}>1$ and $\sigma_{K+1}=1$. Moreover, $\sigma_{i-1}$ and $\sigma_{i}$ are co-prime for $0\leq i\leq K+1$, and
\begin{equation}
\sigma_{-1}>\cdots>\sigma_{K}>\sigma_{K+1}=1.
\end{equation}
\end{lemma}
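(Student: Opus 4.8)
The plan is to recognize the recursion $\sigma_i=|\sigma_{i-2}|_{\sigma_{i-1}}$ in (\ref{defsig}) as the ordinary Euclidean algorithm applied to the pair $(\Gamma_2,\Gamma_1)$, and to read off all four assertions from the two classical facts about that algorithm: a single division step preserves the gcd of the two current entries, and the successive remainders strictly decrease until the gcd is reached. I will use only that $\Gamma_1,\Gamma_2$ are coprime with $1<\Gamma_1<\Gamma_2$, as stated at the start of the section.

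First I would dispose of the division identity: $\sigma_{i-2}=\lfloor\sigma_{i-2}/\sigma_{i-1}\rfloor\sigma_{i-1}+\sigma_i$ is merely the defining relation between a quotient, a divisor, and a remainder, and it holds for every $i$ for which $\sigma_{i-1}>0$; the rest of the proof identifies exactly which indices those are. Next I would establish strict monotonicity. By hypothesis $\sigma_{-1}=\Gamma_2>\Gamma_1=\sigma_0$, and since $\Gamma_1>1$ is coprime to $\Gamma_2$ it cannot divide $\Gamma_2$, so $1\le\sigma_1=|\Gamma_2|_{\Gamma_1}<\Gamma_1=\sigma_0$; and for $i\ge1$, whenever $\sigma_i\ge1$ the remainder $\sigma_{i+1}=|\sigma_{i-1}|_{\sigma_i}$ lies in $[0,\sigma_i)$. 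Thus $(\sigma_i)_{i\ge0}$ is a strictly decreasing sequence of nonnegative integers for as long as it stays positive, hence it reaches $0$; I let $K+1$ be the largest index with $\sigma_{K+1}\ge1$. This index is at least $1$ because $\sigma_1\ge1$, which gives $K\ge0$; moreover $\sigma_{K+2}=0$, i.e., $\sigma_{K+1}\mid\sigma_K$, and $\sigma_{-1}>\sigma_0>\cdots>\sigma_K>\sigma_{K+1}\ge1$.

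For the coprimality claim I would run the standard induction: the division identity gives $\gcd(\sigma_{i-1},\sigma_i)=\gcd(\sigma_{i-2},\sigma_{i-1})$ for $1\le i\le K+1$, and the base value $\gcd(\sigma_{-1},\sigma_0)=\gcd(\Gamma_2,\Gamma_1)=1$ then propagates, so $\gcd(\sigma_{i-1},\sigma_i)=1$ for all $0\le i\le K+1$. Taking $i=K+1$ and combining $\gcd(\sigma_K,\sigma_{K+1})=1$ with $\sigma_{K+1}\mid\sigma_K$ (which forces $\gcd(\sigma_K,\sigma_{K+1})=\sigma_{K+1}$) yields $\sigma_{K+1}=1$; together with $\sigma_K>\sigma_{K+1}$ from the chain above, this gives $\sigma_K\ge2>1$, completing the last assertion.

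There is no genuine obstacle here, since this is the textbook analysis of the Euclidean algorithm; the one point that merits attention is the verification that $K\ge0$, i.e., that the algorithm does not terminate at or before $\sigma_0$. This is exactly where the hypotheses $\Gamma_1>1$ and $\gcd(\Gamma_1,\Gamma_2)=1$ are used (to conclude $\sigma_1\ne0$), and it is also what legitimizes the repeated use of the indices $\sigma_1,\ldots,\sigma_{K+1}$ throughout Section \ref{sec4}.
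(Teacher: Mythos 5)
Your proof is correct and follows essentially the same route as the paper: both are the classical Euclidean-algorithm analysis, using the division identity to propagate coprimality from $\gcd(\Gamma_2,\Gamma_1)=1$ and the strict decrease of remainders to force termination. The only (minor) difference is that you justify termination at the value $1$ more carefully — by letting the sequence run to $0$ and then using $\sigma_{K+1}\mid\sigma_K$ together with coprimality — whereas the paper simply iterates ``until'' $\sigma_{K+1}=1$ is reached; your version also makes explicit why $\sigma_1\neq 0$ (hence $K\geq 0$), which the paper leaves implicit.
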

\begin{proof}
From the definition of $\sigma_i$ for $i\geq1$ in (\ref{defsig}), it is easy to see that
\begin{equation}\label{sss}
\sigma_{i-2}=\left\lfloor\frac{\sigma_{i-2}}{\sigma_{i-1}}\right\rfloor\sigma_{i-1}+|\sigma_{i-2}|_{\sigma_{i-1}}=\left\lfloor\frac{\sigma_{i-2}}{\sigma_{i-1}}\right\rfloor\sigma_{i-1}+\sigma_{i}.
\end{equation}
Since $\sigma_0$ and $\sigma_{-1}$ are known co-prime, and
$\sigma_{-1}=\left\lfloor\sigma_{-1}/\sigma_{0}\right\rfloor\sigma_{0}+\sigma_1$
when $i=1$ in (\ref{sss}), we obtain that $\sigma_0$ and $\sigma_1$ are co-prime.
If $\sigma_1=1$, then $K=0$ and $\sigma_{-1}>\sigma_{0}>\sigma_1=1$.
From (\ref{defsig}), we have $\sigma_1<\sigma_0$. So, if $\sigma_1>1$,
since $\sigma_0$ and $\sigma_1$ are co-prime, and $\sigma_{0}=\left\lfloor\sigma_{0}/\sigma_{1}\right\rfloor\sigma_{1}+\sigma_2$ when $i=2$ in (\ref{sss}), we obtain that $\sigma_1$ and $\sigma_2$ are co-prime. If $\sigma_2=1$, then $K=1$ and $\sigma_{-1}>\sigma_{0}>\sigma_1>\sigma_2=1$.
From (\ref{defsig}), we have $\sigma_2<\sigma_1$. So, if $\sigma_2>1$, since $\sigma_1$ and $\sigma_2$ are co-prime, and $\sigma_{1}=\left\lfloor\sigma_{1}/\sigma_{2}\right\rfloor\sigma_{2}+\sigma_3$ when $i=3$ in (\ref{sss}), we obtain that $\sigma_2$ and $\sigma_3$ are co-prime. If $\sigma_3=1$, then $K=2$ and $\sigma_{-1}>\sigma_{0}>\sigma_1>\sigma_2>\sigma_3=1$.
We continue this procedure until we find an index $K$ such that $\sigma_{K}>1$ and $\sigma_{K+1}=1$. Then, from (\ref{defsig}), we have $\sigma_K<\sigma_{K-1}$. Since $\sigma_{K-1}$ and $\sigma_K$ are co-prime, and
$\sigma_{K-1}=\left\lfloor\sigma_{K-1}/\sigma_{K}\right\rfloor\sigma_{K}+\sigma_{K+1}$ when $i=K+1$ in (\ref{sss}), we obtain that $\sigma_K$ and $\sigma_{K+1}$ are co-prime.
Moreover, $\sigma_{-1}>\sigma_{0}>\cdots>\sigma_{K}>\sigma_{K+1}=1$.
\end{proof}

\begin{lemma}\label{lem6}
$|t_1\Gamma_2|_{\Gamma_1}\neq|t_2\Gamma_2|_{\Gamma_1}$ for any pair of integers $t_1,t_2$, where $t_1\neq t_2$ and $0\leq t_1,t_2<\Gamma_1$. Also, $|t_1\Gamma_1|_{\Gamma_2}\neq|t_2\Gamma_1|_{\Gamma_2}$ for any pair of integers $t_1,t_2$, where $t_1\neq t_2$ and $0\leq t_1,t_2<\Gamma_2$.
\end{lemma}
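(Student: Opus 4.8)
The plan is to prove the first statement; the second follows by the identical argument with the roles of $\Gamma_1$ and $\Gamma_2$ interchanged (note nothing in the argument requires $\Gamma_1<\Gamma_2$). The key fact is that $\Gamma_1$ and $\Gamma_2$ are co-prime, so $\Gamma_2$ is invertible modulo $\Gamma_1$. I would argue by contradiction: suppose $|t_1\Gamma_2|_{\Gamma_1}=|t_2\Gamma_2|_{\Gamma_1}$ for some integers $t_1\neq t_2$ with $0\leq t_1,t_2<\Gamma_1$. Then $\Gamma_1$ divides $(t_1-t_2)\Gamma_2$. Since $\gcd(\Gamma_1,\Gamma_2)=1$, it follows that $\Gamma_1$ divides $t_1-t_2$. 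But $0\leq t_1,t_2<\Gamma_1$ forces $|t_1-t_2|<\Gamma_1$, so the only multiple of $\Gamma_1$ in that range is $0$, i.e. $t_1=t_2$, contradicting $t_1\neq t_2$.

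In more detail, the first step is to record the divisibility consequence of the assumed equality: two integers have the same remainder modulo $\Gamma_1$ if and only if their difference is a multiple of $\Gamma_1$, hence $t_1\Gamma_2\equiv t_2\Gamma_2\ (\mathrm{mod}\ \Gamma_1)$ gives $(t_1-t_2)\Gamma_2\equiv 0\ (\mathrm{mod}\ \Gamma_1)$. The second step invokes co-primality of $\Gamma_1$ and $\Gamma_2$ (which is part of the standing hypotheses of this section) to cancel the factor $\Gamma_2$ and conclude $\Gamma_1\mid(t_1-t_2)$. The third step uses the range restriction $0\leq t_1,t_2<\Gamma_1$ to deduce $-\Gamma_1<t_1-t_2<\Gamma_1$, whence $t_1-t_2=0$. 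Equivalently, and perhaps cleaner to state, the map $t\mapsto|t\Gamma_2|_{\Gamma_1}$ is a bijection from $\{0,1,\dots,\Gamma_1-1\}$ to itself, because it is a well-defined self-map of a finite set that we have just shown to be injective; distinctness of images for distinct arguments is exactly the claim.

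There is essentially no obstacle here — the lemma is a standard elementary number-theory fact — so I do not anticipate a hard part. The only point requiring a sentence of care is the symmetric claim: one must observe that the argument nowhere used the ordering $\Gamma_1<\Gamma_2$, only that the two are co-prime, so replacing $\Gamma_1$ by $\Gamma_2$ throughout (and the range bound accordingly) yields $|t_1\Gamma_1|_{\Gamma_2}\neq|t_2\Gamma_1|_{\Gamma_2}$ for $t_1\neq t_2$ with $0\leq t_1,t_2<\Gamma_2$, completing the proof.
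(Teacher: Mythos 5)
Your proof is correct and follows essentially the same route as the paper: assume equal remainders, deduce $\Gamma_1\mid(t_1-t_2)\Gamma_2$, use co-primality to cancel $\Gamma_2$, and conclude $t_1=t_2$ from the range restriction, contradicting $t_1\neq t_2$. The symmetric claim is handled the same way in both arguments.
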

\begin{proof}
Suppose that $|t_1\Gamma_2|_{\Gamma_1}=|t_2\Gamma_2|_{\Gamma_1}=u$ for $0\leq t_1\neq t_2<\Gamma_1$. Then, we have, for some integers $k_1,k_2$,
\begin{equation}\label{eqar1}
 t_1\Gamma_2=k_1\Gamma_1+u\quad\mbox{and}\quad t_2\Gamma_2=k_2\Gamma_1+u.
\end{equation}
From (\ref{eqar1}), we have $(t_1-t_2)\Gamma_2=(k_1-k_2)\Gamma_1$.
Since $-\Gamma_1<t_1-t_2<\Gamma_1$, and $\Gamma_1$ and $\Gamma_2$ are co-prime, we get $t_1=t_2$. This contradicts the assumption that $t_1\neq t_2$. So, $|t_1\Gamma_2|_{\Gamma_1}\neq|t_2\Gamma_2|_{\Gamma_1}$ for any pair of $t_1,t_2$, where $t_1\neq t_2$ and $0\leq t_1,t_2<\Gamma_1$. In the same way, we can prove the latter statement that $|t_1\Gamma_1|_{\Gamma_2}\neq|t_2\Gamma_1|_{\Gamma_2}$ for any pair of $t_1,t_2$, where $t_1\neq t_2$ and $0\leq t_1,t_2<\Gamma_2$.
\end{proof}

Based on Lemma \ref{lem6}, we can define a set of $S_{2,n}$ as
\begin{equation}\label{s2}
S_{2,n}\triangleq\{|t\Gamma_2|_{\Gamma_1}: t=0,1,\cdots,n,\mbox{ where }\Gamma_1>n\geq1\}
\end{equation}
and the minimum distance between any two elements in $S_{2,n}$ as $d_{2,n}$. Let
\begin{equation}\label{n2i}
\ddot{n}_{2,j}\triangleq\mbox{max}\{n: d_{2,n}\geq\sigma_j\},
\end{equation}
where $1\leq j\leq K+1$ and $K$ is defined in Lemma \ref{lem5}. Similarly, define
\begin{equation}\label{s1}
S_{1,n}\triangleq\{|t\Gamma_1|_{\Gamma_2}: t=0,1,\cdots,n,\mbox{ where }\Gamma_2>n\geq1\}
\end{equation}
and the minimum distance between any two elements in $S_{1,n}$ as $d_{1,n}$. Let
\begin{equation}\label{n1i}
\ddot{n}_{1,j}\triangleq\mbox{max}\{n: d_{1,n}\geq\sigma_j\},
\end{equation}
where $1\leq j\leq K+1$ and $K$ is defined in Lemma \ref{lem5}. Next, we obtain the values of $\ddot{n}_{2,j}$ and $\ddot{n}_{1,j}$ for $1\leq j\leq K+1$ as follows.

\begin{lemma}\label{cal}
When $K=0$, we have $\ddot{n}_{2,1}=\Gamma_1-1$. When $K\geq1$, we have
$\ddot{n}_{2,K+1}=\Gamma_1-1$ and for $1\leq j\leq K$,
\begin{equation}\label{n222}
 \ddot{n}_{2,j} =
  \begin{cases}
    \left\lfloor\frac{\Gamma_1}{\sigma_1}\right\rfloor &\quad \text{if } j=1;\\
    \left\lfloor\frac{\Gamma_1}{\sigma_1}\right\rfloor\left\lfloor\frac{\sigma_1}{\sigma_2}\right\rfloor &\quad \text{if } j=2;\\
    \left\lfloor\frac{\sigma_{2p}}{\sigma_{2p+1}}\right\rfloor(\ddot{n}_{2,2p}+1)+\ddot{n}_{2,2p-1} & \quad \text{if } j=2p+1\text{ for }p\geq1;\\
    \left\lfloor\frac{\sigma_{2p+1}}{\sigma_{2p+2}}\right\rfloor\ddot{n}_{2,2p+1}+\ddot{n}_{2,2p}  & \quad \text{if } j=2p+2\text{ for }p\geq1.\\
  \end{cases}
\end{equation}
Also, when $K=0$, we have $\ddot{n}_{1,1}=\Gamma_2-1$. When $K\geq1$, we have $\ddot{n}_{1,K+1}=\Gamma_2-1$ and for $1\leq j\leq K$,
\begin{equation}\label{n111}
 \ddot{n}_{1,j} =
  \begin{cases}
  \left\lfloor\frac{\Gamma_2}{\Gamma_1}\right\rfloor\left\lfloor\frac{\Gamma_1}{\sigma_1}\right\rfloor&\quad \text{if } j=1;\\
  \left\lfloor\frac{\Gamma_2}{\Gamma_1}\right\rfloor\left\lfloor\frac{\Gamma_1}{\sigma_1}
\right\rfloor\left\lfloor\frac{\sigma_1}{\sigma_2}\right\rfloor+\left\lfloor\frac{\sigma_1}{\sigma_2}
\right\rfloor+\left\lfloor\frac{\Gamma_2}{\Gamma_1}\right\rfloor&\quad \text{if } j=2;\\
    \left\lfloor\frac{\sigma_{2p}}{\sigma_{2p+1}}\right\rfloor\ddot{n}_{1,2p}+\ddot{n}_{1,2p-1} & \quad \text{if } j=2p+1\text{ for }p\geq1;\\
    \left\lfloor\frac{\sigma_{2p+1}}{\sigma_{2p+2}}\right\rfloor(\ddot{n}_{1,2p+1}+1)+\ddot{n}_{1,2p}  & \quad \text{if } j=2p+2\text{ for }p\geq1.\\
  \end{cases}
\end{equation}
\end{lemma}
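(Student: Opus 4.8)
The plan is to compute the quantities $\ddot{n}_{2,j}$ and $\ddot{n}_{1,j}$ by analyzing the structure of the sets $S_{2,n}$ under the modular map $t\mapsto|t\Gamma_2|_{\Gamma_1}$, and to run an induction on $j$ that mirrors the Euclidean recursion $\sigma_{i-2}=\lfloor\sigma_{i-2}/\sigma_{i-1}\rfloor\sigma_{i-1}+\sigma_i$ from Lemma \ref{lem5}. First I would record the base facts. Since $\Gamma_1$ and $\Gamma_2$ are co-prime, the map $t\mapsto|t\Gamma_2|_{\Gamma_1}$ is a bijection on $\{0,1,\dots,\Gamma_1-1\}$ (this is exactly Lemma \ref{lem6}), so $S_{2,\Gamma_1-1}=\{0,1,\dots,\Gamma_1-1\}$ has minimum distance $1$; this immediately gives $\ddot{n}_{2,K+1}=\Gamma_1-1$ because $\sigma_{K+1}=1$, and also handles $K=0$. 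For the $j=1$ case, the points $|t\Gamma_2|_{\Gamma_1}$ for $t=0,1,\dots$ step by $|\Gamma_2|_{\Gamma_1}=\sigma_1$ until they would exceed $\Gamma_1$, i.e. the first $\lfloor\Gamma_1/\sigma_1\rfloor+1$ of them lie at $0,\sigma_1,2\sigma_1,\dots$ and are mutually $\ge\sigma_1$ apart, while adding one more point introduces a gap of size $|\Gamma_1|_{\sigma_1}=\sigma_2<\sigma_1$; hence $\ddot{n}_{2,1}=\lfloor\Gamma_1/\sigma_1\rfloor$.

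The heart of the argument is the inductive step describing how $S_{2,n}$ refines as $n$ grows. The key structural claim is that the sorted point set $\{|t\Gamma_2|_{\Gamma_1}: t=0,\dots,\ddot{n}_{2,j}\}$, viewed inside $[0,\Gamma_1)$, is a "three-distance" configuration whose relevant gap length is exactly $\sigma_{j+1}$, and that incrementing $n$ from $\ddot{n}_{2,j}$ onward inserts new points that split the length-$\sigma_j$ gaps into pieces, the dominant new gap becoming $\sigma_{j+1}$. This is the classical three-gap (Steinhaus) phenomenon applied to the sequence $t\Gamma_2 \bmod \Gamma_1$. I would make this precise by induction: assuming the configuration at level $\ddot{n}_{2,j}$ consists of gaps of sizes drawn from $\{\sigma_{j-1}-\text{(something)},\sigma_j\}$ with the minimum being $\sigma_j$, the next $\lfloor\sigma_j/\sigma_{j+1}\rfloor$ "rounds" of insertions (each round adding a block of new $t$'s) subdivide the long gaps, and after exhausting them the minimum distance drops to $\sigma_{j+1}$. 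Counting how many new indices $t$ are consumed in passing from threshold $\sigma_j$ to threshold $\sigma_{j+1}$ yields the recursions in (\ref{n222}); the parity split (the $2p+1$ versus $2p+2$ cases, and the $+1$'s appearing in one branch but not the other) comes from the fact that at alternate stages the newly created extreme point sits at the left end versus the right end of the interval, so the count of new points per subdivision round is $\ddot{n}_{2,j}$ in one parity and $\ddot{n}_{2,j}+1$ in the other. The statements for $\ddot{n}_{1,j}$ are proved identically with the roles of $\Gamma_1,\Gamma_2$ swapped and $\sigma_{-1}=\Gamma_2,\sigma_0=\Gamma_1$ re-indexed accordingly; the asymmetry between the $\ddot{n}_{2,j}$ and $\ddot{n}_{1,j}$ formulas in the $j=1,2$ cases is just the first two steps of the two Euclidean expansions, where $\lfloor\Gamma_2/\Gamma_1\rfloor$ enters for the second sequence because $\sigma_{-1}=\Gamma_2>\Gamma_1=\sigma_0$ whereas for the first sequence $\sigma_0=\Gamma_1$ already plays the role of the larger modulus.

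I would organize the write-up as: (i) the $j=1$ and $j=2$ computations done by hand, giving the explicit closed forms and checking they match (\ref{n222})–(\ref{n111}); (ii) an induction hypothesis stating both the value of $\ddot{n}_{2,j}$ and the geometric description of $S_{2,\ddot{n}_{2,j}}$ (gap multiset and which endpoint is extremal); (iii) the inductive step, where one passes from the $\sigma_j$-threshold to the $\sigma_{j+1}$-threshold by inserting points in $\lfloor\sigma_j/\sigma_{j+1}\rfloor$ rounds and tracking the index count, producing the stated recursion and updating the geometric description with flipped parity. The main obstacle I anticipate is step (iii): making the three-gap subdivision bookkeeping rigorous — precisely identifying, at each stage, the set of gap lengths present, which gaps get subdivided, how many new indices $t$ are spent per subdivision round, and why the minimum distance stays $\ge\sigma_j$ for exactly $n\le\ddot{n}_{2,j}$ and then drops. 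Getting the off-by-one terms (the $(\ddot{n}_{2,2p}+1)$ factor in the odd case versus the absence of such a $+1$ in the even case, and the mirrored pattern for $\ddot{n}_{1,j}$) to come out correctly requires carefully fixing the convention for whether the endpoints $0$ and the rightmost point are counted, and tracking the alternating "which side is the fresh extreme point on" invariant through the induction.
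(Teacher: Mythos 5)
Your proposal follows essentially the same route as the paper: establish $\ddot{n}_{2,K+1}=\Gamma_1-1$ from the bijectivity in Lemma \ref{lem6}, compute $\ddot{n}_{2,1}=\lfloor\Gamma_1/\sigma_1\rfloor$ by observing that the points $|t\sigma_1|_{\Gamma_1}$ march in steps of $\sigma_1$ until wrap-around, and then run a three-gap/interval-subdivision induction that tracks gaps of lengths $\sigma_j$, $\sigma_j+\sigma_{j+1}$, and one distinguished half-open end interval whose alternating role produces the parity split and the $+1$ offsets in the recursions. The bookkeeping you flag as the main obstacle in step (iii) is exactly what the paper does (explicitly for $j=1,2,3,4$ and then "following the process"), so your plan matches both in substance and in its level of rigor.
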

\begin{proof}
When $K=0$, i.e., $\sigma_1=1$, we have $\ddot{n}_{2,1}=\Gamma_1-1$ from the definitions of $\ddot{n}_{2,j}$ in (\ref{n2i}) and $S_{2,n}$ in (\ref{s2}). When $K\geq1$, due to $\sigma_{K+1}=1$ we also easily get $\ddot{n}_{2,K+1}=\Gamma_1-1$. Note that for an integer $t$, $|t\Gamma_2|_{\Gamma_1}=|t|\Gamma_2|_{\Gamma_1}|_{\Gamma_1}=|t\sigma_1|_{\Gamma_1}$. Moreover, $\Gamma_1$ and $\sigma_1$ are co-prime from Lemma \ref{lem5}. So, when $0\leq t\leq\left\lfloor\Gamma_1/\sigma_1\right\rfloor$, we have
$0\leq t\sigma_1<\Gamma_1$,
and therefore, when $1\leq n\leq\left\lfloor\Gamma_1/\sigma_1\right\rfloor$, we have $S_{2,n}=\{t\sigma_1: t=0,1,\cdots,n\}$,
and $d_{2,n}=\sigma_1$. When $t=\left\lfloor\Gamma_1/\sigma_1\right\rfloor+1$, we have
\begin{equation}
\begin{split}
|t\Gamma_2|_{\Gamma_1}&=|t\sigma_1|_{\Gamma_1}\\
&=\left|\left\lfloor\frac{\Gamma_1}{\sigma_1}\right\rfloor\sigma_1+\sigma_1\right|_{\Gamma_1}\\
&=|\Gamma_1-\sigma_2+\sigma_1|_{\Gamma_1}\\
&=\sigma_1-\sigma_2.
\end{split}
\end{equation}
So, $d_{2,\left\lfloor\Gamma_1/\sigma_1\right\rfloor+1}=\mbox{min}(\sigma_2,\sigma_1-\sigma_2)<\sigma_1$, and we obtain
\begin{equation}
\ddot{n}_{2,1}=\left\lfloor\frac{\Gamma_1}{\sigma_1}\right\rfloor.
\end{equation}
When $K=1$, we have $\ddot{n}_{2,2}=\Gamma_1-1$. We next assume $K\geq2$.
One can see that the points in $S_{2,\ddot{n}_{2,1}}$ split $[0,\Gamma_1)$ into $\ddot{n}_{2,1}$ closed intervals $[i\sigma_1,i\sigma_1+\sigma_1]$ with length $\sigma_1$ for $0\leq i\leq\ddot{n}_{2,1}-1$ and one half-open interval $[\ddot{n}_{2,1}\sigma_1,\Gamma_1)$ with length $\sigma_2$, i.e., $S_{2,\ddot{n}_{2,1}}$ is composed of the beginnings and the ends of all the closed intervals with length $\sigma_1$ and the beginning of the half-open interval with length $\sigma_2$.
Each closed interval with length $\sigma_1$ will produce $\lfloor\sigma_1/\sigma_2\rfloor-1$ closed intervals with length $\sigma_2$ and one closed interval with length $\sigma_2+\sigma_3$. So, $S_{2,\ddot{n}_{2,2}}$ is composed of the beginnings and the ends of all the closed intervals with length $\sigma_2$, the beginnings and the ends of all the closed intervals with length $\sigma_2+\sigma_3$, and the beginning of the half-open interval with length $\sigma_2$. Accordingly, we have
\begin{equation}
\begin{split}
\ddot{n}_{2,2}&=\ddot{n}_{2,1}+\ddot{n}_{2,1}\left(\left\lfloor\frac{\sigma_1}{\sigma_2}\right\rfloor-1\right)\\
&=\left\lfloor\frac{\Gamma_1}{\sigma_1}\right\rfloor\left\lfloor\frac{\sigma_1}{\sigma_2}\right\rfloor.
\end{split}
\end{equation}
When $K=2$, we have $\ddot{n}_{2,3}=\Gamma_1-1$. We next assume $K\geq3$. In this stage, we have $\ddot{n}_{2,1}(\lfloor\sigma_1/\sigma_2\rfloor-1)=\ddot{n}_{2,2}-\ddot{n}_{2,1}$ closed intervals with length $\sigma_2$, $\ddot{n}_{2,1}$ closed intervals with length $\sigma_2+\sigma_3$, and one half-open interval with length $\sigma_2$. Each closed interval with length $\sigma_2+\sigma_3$ will produce one closed interval with length $\sigma_2$ and one closed interval with length $\sigma_3$.
Each closed interval with length $\sigma_2$ will produce $\lfloor\sigma_2/\sigma_3\rfloor-1$ closed intervals with length $\sigma_3$ and one closed interval with length $\sigma_3+\sigma_4$. The half-open interval with length $\sigma_2$ will produce $\lfloor\sigma_2/\sigma_3\rfloor$ closed intervals with length $\sigma_3$ and one half-open interval with length $\sigma_4$. So, $S_{2,\ddot{n}_{2,3}}$ is composed of the beginnings and the ends of all the closed intervals with length $\sigma_3$, the beginnings and the ends of all the closed intervals with length $\sigma_3+\sigma_4$, and the beginning of the half-open interval with length $\sigma_4$. Accordingly, we have
\begin{equation}
\begin{split}
\ddot{n}_{2,3}&=\ddot{n}_{2,2}+\ddot{n}_{2,1}+\left(\ddot{n}_{2,1}+\ddot{n}_{2,1}\left(\left\lfloor\frac{\sigma_1}{\sigma_2}\right\rfloor-1\right)\right)\left(\left\lfloor\frac{\sigma_2}{\sigma_3}\right\rfloor-1\right)+\left\lfloor\frac{\sigma_2}{\sigma_3}\right\rfloor\\
&=\left\lfloor\frac{\sigma_2}{\sigma_3}\right\rfloor(\ddot{n}_{2,2}+1)+\ddot{n}_{2,1}.
\end{split}
\end{equation}
When $K=3$, we have $\ddot{n}_{2,4}=\Gamma_1-1$. We next assume $K\geq4$. In this stage, we have $\ddot{n}_{2,3}-\ddot{n}_{2,2}$ closed intervals with length $\sigma_3$, $\ddot{n}_{2,2}$ closed intervals with length $\sigma_3+\sigma_4$, and one half-open interval with length $\sigma_4$. Each closed interval with length $\sigma_3+\sigma_4$ will produce one closed interval with length $\sigma_3$ and one closed interval with length $\sigma_4$. Each closed interval with length $\sigma_3$ will produce $\lfloor\sigma_3/\sigma_4\rfloor-1$ closed intervals with length $\sigma_4$ and one closed interval with length $\sigma_4+\sigma_5$. So, $S_{2,\ddot{n}_{2,4}}$ is composed of the beginnings and the ends of all the closed intervals with length $\sigma_4$, the beginnings and the ends of all the closed intervals with length $\sigma_4+\sigma_5$, and the beginning of the half-open interval with length $\sigma_4$. Accordingly, we have
\begin{equation}
\begin{split}
\ddot{n}_{2,4}&=\ddot{n}_{2,3}+\ddot{n}_{2,2}+\left(\ddot{n}_{2,2}+\ddot{n}_{2,3}-\ddot{n}_{2,2}\right)\left(\left\lfloor\frac{\sigma_3}{\sigma_4}\right\rfloor-1\right)\\
&=\left\lfloor\frac{\sigma_3}{\sigma_4}\right\rfloor\ddot{n}_{2,3}+\ddot{n}_{2,2}.
\end{split}
\end{equation}
Following the process, one can see that we can obtain the values of $\ddot{n}_{2,j}$ as in (\ref{n222}). Similarly, we can obtain the values of $\ddot{n}_{1,j}$ as in (\ref{n111}).
\end{proof}

Similar to Lemma \ref{lem3}, we have the following lemma.
\begin{lemma}\label{lem7}
Let $N$ be an integer with $0\leq N<\mbox{min}(m_2(1+\ddot{n}_{2,j}),m_1(1+\ddot{n}_{1,j}))$ for some $j$, $1\leq j\leq K+1$, and the remainder errors satisfy
\begin{equation}\label{con11hou}
-\frac{\sigma_j}{2}\leq\frac{\triangle r_1-\triangle r_2}{m}<\frac{\sigma_j}{2}.
\end{equation}
We can obtain the following three cases:
\begin{enumerate}
  \item if $\textbf{q}_{21}\geq\sigma_j/2$, we have $r_1>r_2$;
  \item if $\textbf{q}_{21}<-\sigma_j/2$, we have $r_1<r_2$;
  \item if $-\sigma_j/2\leq\textbf{q}_{21}<\sigma_j/2$, we have $r_1=r_2$.
\end{enumerate}
\end{lemma}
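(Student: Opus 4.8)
The plan is to replay the three-case argument of Lemma~\ref{lem3}, with the general gap $\sigma_j$ now playing the role that $\sigma_1=|\Gamma_2|_{\Gamma_1}$ plays there; for $j=1$ the argument literally recovers Lemma~\ref{lem3}. Concretely, I would first establish the implications
\[
r_1>r_2\ \Longrightarrow\ \textbf{q}_{21}\geq\tfrac{\sigma_j}{2},\qquad
r_1<r_2\ \Longrightarrow\ \textbf{q}_{21}<-\tfrac{\sigma_j}{2},
\]
and observe that $r_1=r_2$ makes $\textbf{q}_{21}=(\triangle r_1-\triangle r_2)/m$, which by hypothesis (\ref{con11hou}) lies in $[-\sigma_j/2,\sigma_j/2)$. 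Since the three premises on $r_1$ versus $r_2$ are mutually exclusive and exhaustive and the three resulting ranges for $\textbf{q}_{21}$ are pairwise disjoint, these implications invert to give precisely the three cases in the statement; this inversion step is where the argument is actually finished.

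To prove the implications, the dynamic-range hypothesis, via $N=n_2m_2+r_2$ with $0\leq r_2<m_2$ and $N=n_1m_1+r_1$ with $0\leq r_1<m_1$, forces $0\leq n_2\leq\ddot{n}_{2,j}$ and $0\leq n_1\leq\ddot{n}_{1,j}$. From $n_1m_1+r_1=n_2m_2+r_2$ I get $n_2\Gamma_2-n_1\Gamma_1=(r_1-r_2)/m$, so in particular $(r_1-r_2)/m\in\mathbb{Z}$; reducing this modulo $\Gamma_1$ and modulo $\Gamma_2$ yields $|n_2\Gamma_2|_{\Gamma_1}\equiv(r_1-r_2)/m\ \mbox{mod}\ \Gamma_1$ and $|n_1\Gamma_1|_{\Gamma_2}\equiv(r_2-r_1)/m\ \mbox{mod}\ \Gamma_2$. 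If $r_1>r_2$, then $(r_1-r_2)/m$ is a positive integer with $(r_1-r_2)/m<\Gamma_1$ (since $r_1-r_2\leq r_1<m_1=m\Gamma_1$), so it equals the residue $|n_2\Gamma_2|_{\Gamma_1}$, a nonzero element of the set $S_{2,\ddot{n}_{2,j}}$ in (\ref{s2}) because $0\leq n_2\leq\ddot{n}_{2,j}$. As $0=|0\cdot\Gamma_2|_{\Gamma_1}$ is also in $S_{2,\ddot{n}_{2,j}}$ and, by the definition (\ref{n2i}) of $\ddot{n}_{2,j}$, any two distinct elements of $S_{2,\ddot{n}_{2,j}}$ differ by at least $\sigma_j$, I get $(r_1-r_2)/m\geq\sigma_j$; combining with $(\triangle r_1-\triangle r_2)/m\geq-\sigma_j/2$ from (\ref{con11hou}) gives $\textbf{q}_{21}=(r_1-r_2)/m+(\triangle r_1-\triangle r_2)/m\geq\sigma_j/2$. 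The case $r_1<r_2$ is the mirror image: $(r_2-r_1)/m$ is a positive integer $<\Gamma_2$ (since $r_2-r_1\leq r_2<m_2=m\Gamma_2$), equal to the nonzero residue $|n_1\Gamma_1|_{\Gamma_2}\in S_{1,\ddot{n}_{1,j}}$, so the same minimum-distance argument via (\ref{n1i}) gives $(r_2-r_1)/m\geq\sigma_j$, and the strict bound $(\triangle r_1-\triangle r_2)/m<\sigma_j/2$ forces $\textbf{q}_{21}=-(r_2-r_1)/m+(\triangle r_1-\triangle r_2)/m<-\sigma_j/2$.

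I do not expect a genuine obstacle: this is a faithful transcription of Lemma~\ref{lem3}, with $\sigma_1$ replaced by $\sigma_j$ and the explicit arithmetic there replaced by the abstract minimum-distance property encoded in $\ddot{n}_{2,j}$ and $\ddot{n}_{1,j}$. The points needing a little care are that $0$ indeed lies in $S_{2,\ddot{n}_{2,j}}$ and $S_{1,\ddot{n}_{1,j}}$ and that $\ddot{n}_{2,j},\ddot{n}_{1,j}\geq1$ (immediate, since $d_{2,1}=d_{1,1}=\sigma_1\geq\sigma_j$ by Lemma~\ref{lem5}, and these indices are well-defined by Lemma~\ref{cal}), so that the ``distance from $0$'' estimate applies; that $(r_1-r_2)/m$ and $(r_2-r_1)/m$ are genuinely the residues in $[0,\Gamma_1)$ and $[0,\Gamma_2)$, which rests only on $r_1-r_2<r_1<m_1$ and $r_2-r_1<r_2<m_2$; and that the bound on $n_2$ used in the $r_1>r_2$ branch comes from the factor $m_2(1+\ddot{n}_{2,j})$ of the dynamic range while the bound on $n_1$ used in the $r_1<r_2$ branch comes from $m_1(1+\ddot{n}_{1,j})$, so both terms of the minimum are genuinely used.
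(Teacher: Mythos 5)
Your proposal is correct and follows essentially the same route as the paper: bound $n_1,n_2$ from the dynamic range, identify $(r_1-r_2)/m$ (resp.\ $(r_2-r_1)/m$) with the residue $|n_2\Gamma_2|_{\Gamma_1}\in S_{2,\ddot{n}_{2,j}}$ (resp.\ $|n_1\Gamma_1|_{\Gamma_2}\in S_{1,\ddot{n}_{1,j}}$), and use the minimum-distance-to-$0$ property $d_{\cdot,\cdot}\geq\sigma_j$ together with (\ref{con11hou}) to separate the three ranges of $\textbf{q}_{21}$. Your explicit remarks on the exhaustive-and-disjoint inversion step and on why the integer $(r_1-r_2)/m$ lies in $[0,\Gamma_1)$ are details the paper leaves implicit, but the argument is the same.
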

\begin{proof}
According to $0\leq N<\mbox{min}(m_2(1+\ddot{n}_{2,j}),m_1(1+\ddot{n}_{1,j}))$, we have
\begin{equation}\label{shuyu}
0\leq n_2\leq\ddot{n}_{2,j}\quad\mbox{and}\quad 0\leq n_1\leq\ddot{n}_{1,j}.
\end{equation}
Since $n_2\Gamma_2-n_1\Gamma_1=(r_1-r_2)/m$, we have
\begin{equation}\label{eqr1}
|n_2\Gamma_2|_{\Gamma_1}\equiv\frac{r_1-r_2}{m}\mbox{ mod }\Gamma_1.
\end{equation}
When $r_1>r_2$, then (\ref{eqr1}) becomes
\begin{equation}\label{cone}
|n_2\Gamma_2|_{\Gamma_1}=\frac{r_1-r_2}{m}.
\end{equation}
From (\ref{shuyu}), we know $|n_2\Gamma_2|_{\Gamma_1}\in S_{2,\ddot{n}_{2,j}}$. Moreover, since $0\in S_{2,\ddot{n}_{2,j}}$ and $d_{2,\ddot{n}_{2,j}}\geq\sigma_j$, one can see that
\begin{equation}\label{bbas}
\frac{r_1-r_2}{m}\geq\sigma_j.
\end{equation}
Hence, based on (\ref{bbas}) and (\ref{con11hou}), we get
\begin{equation}\label{qq21}
\frac{\tilde{r}_1-\tilde{r}_2}{m}=\frac{r_1-r_2}{m}+\frac{\triangle r_1-\triangle r_2}{m}\geq\frac{\sigma_j}{2}.
\end{equation}
When $r_1=r_2$, we have $\tilde{r}_1-\tilde{r}_2=\triangle r_1-\triangle r_2$. In this case, we have
\begin{equation}
-\frac{\sigma_j}{2}\leq\frac{\tilde{r}_1-\tilde{r}_2}{m}<\frac{\sigma_j}{2}.
\end{equation}
Since $n_1\Gamma_1-n_2\Gamma_2=(r_2-r_1)/m$, we have
\begin{equation}\label{eqr22}
|n_1\Gamma_1|_{\Gamma_2}\equiv\frac{r_2-r_1}{m}\mbox{ mod }\Gamma_2.
\end{equation}
When $r_1<r_2$, then (\ref{eqr22}) becomes
\begin{equation}\label{cone1}
|n_1\Gamma_1|_{\Gamma_2}=\frac{r_2-r_1}{m}.
\end{equation}
From (\ref{shuyu}), we know $|n_1\Gamma_1|_{\Gamma_2}\in S_{1,\ddot{n}_{1,j}}$. Moreover, since $0\in S_{1,\ddot{n}_{1,j}}$ and $d_{1,\ddot{n}_{1,j}}\geq\sigma_j$, one can see that
\begin{equation}\label{444}
\frac{r_2-r_1}{m}\geq\sigma_j.
\end{equation}
Hence, based on (\ref{444}) and (\ref{con11hou}), we get
\begin{equation}
\frac{\tilde{r}_1-\tilde{r}_2}{m}=\frac{r_1-r_2}{m}+\frac{\triangle r_1-\triangle r_2}{m}<-\frac{\sigma_j}{2}.
\end{equation}
Moreover, $\sigma_j\geq1$ for $1\leq j\leq K+1$, where $K$ is defined in Lemma \ref{lem5}. Therefore, we obtain the above three cases and complete the proof.
\end{proof}

Let $N$ be an integer with $0\leq N<\mbox{min}(m_2(1+\ddot{n}_{2,j}),m_1(1+\ddot{n}_{1,j}))$ for some $j$, $1\leq j\leq K+1$, and $\tilde{r}_i$ be its erroneous remainders for $i=1,2$.
We then have the following algorithm.
\begin{algorithm}[H]
  \caption{\!:}
  \label{alg:Framwork22}
  \begin{algorithmic}[1]
  \State Calculate $\ddot{n}_{2,j},\ddot{n}_{1,j}$ according to Lemma \ref{cal}, and then calculate the corresponding sets $S_{2,\ddot{n}_{2,j}}, S_{1,\ddot{n}_{1,j}}$ from (\ref{s2}) and (\ref{s1}).
  \State Calculate $\textbf{q}_{21}\triangleq(\tilde{r}_1-\tilde{r}_2)/m$.
  \State (\romannumeral1:) When $\textbf{q}_{21}\geq\sigma_j/2$, we find an element denoted by $s_2$ from $S_{2,\ddot{n}_{2,j}}$ as follows. If there exists an element $x$ in $S_{2,\ddot{n}_{2,j}}$ satisfying
  \begin{equation}\label{a1}
  -\frac{\sigma_j}{2}\leq \textbf{q}_{21}-x<\frac{\sigma_j}{2},
  \end{equation}
  let $s_2=x$. Otherwise, let $s_2$ be the element in $S_{2,\ddot{n}_{2,j}}$ that has the minimum distance to $\textbf{q}_{21}$. Then,
  calculate
  \begin{equation}\label{qc1}
  \hat{n}_2\equiv s_2\bar{\Gamma}_{21}\mbox{ mod }\Gamma_1
  \end{equation}
  and
  \begin{equation}
  \hat{n}_1=\left[\frac{\hat{n}_2m_2+\tilde{r}_2-\tilde{r}_1}{m_1}\right],
  \end{equation}
  where $\bar{\Gamma}_{21}$ is the modular multiplicative inverse of $\Gamma_2$ modulo $\Gamma_1$, i.e., $1\equiv\Gamma_2\bar{\Gamma}_{21}\mbox{ mod }\Gamma_1$.

  \hspace{-0.8cm}(\romannumeral2:) When $\textbf{q}_{21}<-\sigma_j/2$, we find an element denoted by $s_1$ from $S_{1,\ddot{n}_{1,j}}$ as follows. If there exists an element $y$ in $S_{1,\ddot{n}_{1,j}}$ satisfying that
  \begin{equation}\label{a2}
  -\frac{\sigma_j}{2}\leq \textbf{q}_{21}+y<\frac{\sigma_j}{2},
  \end{equation}
  let $s_1=y$. Otherwise, let $s_1$ be the element in $S_{1,\ddot{n}_{1,j}}$ that has the minimum distance to $-\textbf{q}_{21}$. Then,
  calculate
  \begin{equation}\label{qc2}
   \hat{n}_1\equiv s_1\bar{\Gamma}_{12}\mbox{ mod }\Gamma_2
   \end{equation}
   and
    \begin{equation}
  \hat{n}_2=\left[\frac{\hat{n}_1m_1+\tilde{r}_1-\tilde{r}_2}{m_2}\right],
  \end{equation}
   where $\bar{\Gamma}_{12}$ is the modular multiplicative inverse of $\Gamma_1$ modulo $\Gamma_2$, i.e., $1\equiv\Gamma_1\bar{\Gamma}_{12}\mbox{ mod }\Gamma_2$.

\hspace{-0.8cm}(\romannumeral3:) When $-\sigma_j/2\leq \textbf{q}_{21}<\sigma_j/2$, we let $\hat{n}_1=\hat{n}_2=0$.
  \end{algorithmic}
\end{algorithm}

Then, we have the following result.
\begin{theorem}\label{th1}
For some $j$, $1\leq j\leq K+1$, if the remainder errors satisfy
\begin{equation}\label{concon}
-\frac{\sigma_j}{2}\leq\frac{\triangle r_1-\triangle r_2}{m}<\frac{\sigma_j}{2},
\end{equation}
then the dynamic range of $N$ is $\mbox{min}(m_2(1+\ddot{n}_{2,j}),m_1(1+\ddot{n}_{1,j}))$, and the folding integers can be accurately determined in \textbf{Algorithm \ref{alg:Framwork22}}, i.e., $\hat{n}_i=n_i$ for $i=1,2$.
\end{theorem}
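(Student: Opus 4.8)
The plan is to follow the same two-part structure used for Theorem \ref{th2}: first show that \textbf{Algorithm \ref{alg:Framwork22}} recovers the folding integers correctly whenever $0\le N<\min(m_2(1+\ddot{n}_{2,j}),m_1(1+\ddot{n}_{1,j}))$ and (\ref{concon}) holds, and then exhibit a specific $N$ at the boundary (together with an admissible error pattern) for which the reconstruction fails, thereby pinning down the dynamic range exactly. The case $j=K+1$ should be handled separately, since then $\sigma_{K+1}=1$, $d_{2,\ddot{n}_{2,K+1}}\ge 1$ forces $\ddot{n}_{2,K+1}=\Gamma_1-1$ and $\ddot{n}_{1,K+1}=\Gamma_2-1$, so $\min(m_2\Gamma_1,m_1\Gamma_2)=\mathrm{lcm}(m_1,m_2)$; here the claim should reduce to (or be derived from) Proposition \ref{pr1} with $L=2$, since $m=\gcd(m_1,m_2)$ and condition (\ref{concon}) becomes exactly (\ref{ccc}).

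For the correctness part with $1\le j\le K$, I would invoke Lemma \ref{lem7} to split into the three cases according to $\textbf{q}_{21}$. In case (iii), $r_1=r_2$ forces $n_1=n_2=0$ (because $N<\min(m_1,m_2)$ in that range is NOT guaranteed — so actually one must argue $n_2\Gamma_2=n_1\Gamma_1$ with $0\le n_2\le\ddot n_{2,j}<\Gamma_1$ and coprimality of $\Gamma_1,\Gamma_2$ give $n_1=n_2=0$), matching $\hat n_1=\hat n_2=0$. In case (i), $r_1>r_2$ and (\ref{cone}) gives $|n_2\Gamma_2|_{\Gamma_1}=(r_1-r_2)/m\in S_{2,\ddot n_{2,j}}$; the key point is that this true value $x_0:=|n_2\Gamma_2|_{\Gamma_1}$ satisfies $-\sigma_j/2\le \textbf{q}_{21}-x_0<\sigma_j/2$ by (\ref{concon}), and by the defining minimum-distance property $d_{2,\ddot n_{2,j}}\ge\sigma_j$ of $S_{2,\ddot n_{2,j}}$ no other element of the set can lie within that half-open window of width $\sigma_j$ — so the $s_2$ selected by the algorithm equals $x_0$. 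Then (\ref{qc1}) inverts $\Gamma_2$ modulo $\Gamma_1$ to get $\hat n_2\equiv n_2\pmod{\Gamma_1}$, and since $0\le n_2\le\ddot n_{2,j}<\Gamma_1$ this yields $\hat n_2=n_2$; finally the rounding formula for $\hat n_1$ is identical to the one verified at the end of the proof of Theorem \ref{th2}. Case (ii) is the mirror image, using $S_{1,\ddot n_{1,j}}$, (\ref{eqr22})–(\ref{cone1}), and $0\le n_1\le\ddot n_{1,j}<\Gamma_2$.

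For the exactness part I would take $N_0:=\min(m_2(1+\ddot n_{2,j}),m_1(1+\ddot n_{1,j}))$ and consider the two sub-cases depending on which term attains the minimum. If it is $m_2(1+\ddot n_{2,j})$, then at $N=N_0$ one has $n_2=\ddot n_{2,j}+1$, so $|n_2\Gamma_2|_{\Gamma_1}$ equals the element just outside $S_{2,\ddot n_{2,j}}$ whose arrival drops the minimum distance below $\sigma_j$; choosing the remainder error so that $\textbf{q}_{21}$ lands exactly halfway toward a pre-existing element of $S_{2,\ddot n_{2,j}}$ (as was done in Theorem \ref{th2}'s exactness argument with $\triangle r_1=\lfloor|\Gamma_1|_{|\Gamma_2|_{\Gamma_1}}/2\rfloor m$) makes the algorithm select the wrong $s_2$, hence the wrong $\hat n_2$, while (\ref{concon}) still holds. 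The symmetric construction on the $S_{1,\cdot}$ side handles the other sub-case. The main obstacle I anticipate is this last step: one must verify that at the boundary value $N_0$ the offending new modular residue is genuinely at distance $<\sigma_j$ from an \emph{earlier} element — i.e. that $\ddot n_{2,j}$ (resp. $\ddot n_{1,j}$), as computed by the recursion in Lemma \ref{cal}, is truly the \emph{largest} $n$ with $d_{2,n}\ge\sigma_j$, and then to pick an error $\triangle r_i$ with $|\triangle r_i|$ small enough to satisfy (\ref{concon}) yet large enough to straddle the midpoint between the true residue and that neighbor. This is where the detailed interval-splitting bookkeeping from Lemma \ref{cal} must be used; everything else is a routine transcription of the Theorem \ref{th2} argument with $|\Gamma_2|_{\Gamma_1}$ replaced by $\sigma_j$ and the single ``$/|\Gamma_2|_{\Gamma_1}$'' rounding replaced by the modular inversion against the set $S_{2,\ddot n_{2,j}}$ or $S_{1,\ddot n_{1,j}}$.
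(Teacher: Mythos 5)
Your proposal matches the paper's proof essentially step for step: the correctness half uses Lemma \ref{lem7} to branch on $\textbf{q}_{21}$, the minimum-distance property $d_{2,\ddot n_{2,j}},d_{1,\ddot n_{1,j}}\ge\sigma_j$ to show the true residue difference is the unique element of $S_{2,\ddot n_{2,j}}$ (resp.\ $S_{1,\ddot n_{1,j}}$) in the half-open window of width $\sigma_j$, and then modular inversion plus rounding to recover $n_2,n_1$ (with $n_1=n_2=0$ in the middle case following from coprimality as you note); the exactness half takes $N$ equal to the smaller of the two bounds and perturbs the appropriate remainder to the midpoint toward the nearby set element $w$, exactly as the paper does. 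The only remark worth making is that your anticipated ``main obstacle'' dissolves: $\ddot n_{2,j}$ is \emph{defined} in (\ref{n2i}) as the largest $n$ with $d_{2,n}\ge\sigma_j$ (Lemma \ref{cal} merely computes that maximum), so the existence of an element of $S_{2,\ddot n_{2,j}}$ within distance $\sigma_j$ of the new residue at the boundary value is immediate from the definition and needs none of the interval-splitting bookkeeping.
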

\begin{proof}
Let $N$ be an integer with $0\leq N<\mbox{min}(m_2(1+\ddot{n}_{2,j}),m_1(1+\ddot{n}_{1,j}))$. Then, we have
\begin{equation}
0\leq n_2\leq\ddot{n}_{2,j}<\Gamma_1\quad\mbox{and}\quad 0\leq n_1\leq\ddot{n}_{1,j}<\Gamma_2.
\end{equation}
By B\'{e}zout's lemma in
$n_i\Gamma_i-n_l\Gamma_l=(r_l-r_i)/m\mbox{ for }1\leq i\neq l\leq2$,
the folding integers $n_i$ for $i=1,2$ can be determined by
\begin{equation}\label{qiun1}
n_1\equiv\frac{r_2-r_1}{m}\bar{\Gamma}_{12}\mbox{ mod }\Gamma_2\quad\mbox{ and }\quad n_2\equiv\frac{r_1-r_2}{m}\bar{\Gamma}_{21}\mbox{ mod }\Gamma_1,
\end{equation}
where $\bar{\Gamma}_{ij}$ is the modular multiplicative inverse of $\Gamma_i$ modulo $\Gamma_j$, i.e., $1\equiv\Gamma_i\bar{\Gamma}_{ij}\mbox{ mod }\Gamma_j$.
When $\textbf{q}_{21}\geq\sigma_j/2$, we know $r_1>r_2$ based on Lemma \ref{lem7}. From (\ref{qq21}) and (\ref{concon}), we have
\begin{equation}\label{ss1}
-\frac{\sigma_j}{2}\leq \textbf{q}_{21}-\frac{r_1-r_2}{m}=\frac{\triangle r_1-\triangle r_2}{m}<\frac{\sigma_j}{2}.
\end{equation}
One can see from (\ref{cone}) that $(r_1-r_2)/m=|n_2\Gamma_2|_{\Gamma_1}\in S_{2,\ddot{n}_{2,j}}$. Next, we prove that $(r_1-r_2)/m$ is a unique element in $S_{2,\ddot{n}_{2,j}}$ to satisfy (\ref{ss1}). For any element $s\in S_{2,\ddot{n}_{2,j}}$ with $s\neq(r_1-r_2)/m$,
\begin{equation}
\textbf{q}_{21}-s=\textbf{q}_{21}-\frac{r_1-r_2}{m}+\frac{r_1-r_2}{m}-s.
\end{equation}
Since $|(r_1-r_2)/m-s|\geq\sigma_j$, we have $\textbf{q}_{21}-s\geq\sigma_j/2$ or $\textbf{q}_{21}-s<-\sigma_j/2$. Therefore, we can find a unique element $s_2$ in $S_{2,\ddot{n}_{2,j}}$ satisfying (\ref{a1}) in \textbf{Algorithm \ref{alg:Framwork22}}, and $s_2=|n_2\Gamma_2|_{\Gamma_1}=(r_1-r_2)/m$.
From (\ref{qiun1}), we have $\hat{n}_2=n_2$ in (\ref{qc1}), and
\begin{equation}
\left[\frac{\hat{n}_2m_2+\tilde{r}_2-\tilde{r}_1}{m_1}\right]=n_1+\left[\frac{\triangle r_2-\triangle r_1}{m}\right]=n_1.
\end{equation}
Similarly, when $\textbf{q}_{21}<-\sigma_j/2$, we know $r_1<r_2$ based on Lemma \ref{lem7}. From (\ref{qq21}) and (\ref{concon}), we have
\begin{equation}\label{ss2}
-\frac{\sigma_j}{2}\leq \textbf{q}_{21}+\frac{r_2-r_1}{m}=\frac{\triangle r_1-\triangle r_2}{m}<\frac{\sigma_j}{2}.
\end{equation}
One can see from (\ref{cone1}) that $(r_2-r_1)/m=|n_1\Gamma_1|_{\Gamma_2}\in S_{1,\ddot{n}_{1,j}}$. Next, we prove that $(r_2-r_1)/m$ is a unique element in $S_{1,\ddot{n}_{1,j}}$ to satisfy (\ref{ss2}). For any element $s\in S_{1,\ddot{n}_{1,j}}$ with $s\neq(r_2-r_1)/m$,
\begin{equation}
\textbf{q}_{21}+s=\textbf{q}_{21}+\frac{r_2-r_1}{m}+s-\frac{r_2-r_1}{m}.
\end{equation}
Since $|s-(r_2-r_1)/m|\geq\sigma_j$, we have $\textbf{q}_{21}+s\geq\sigma_j/2$ or $\textbf{q}_{21}+s<-\sigma_j/2$. Therefore, we can find a unique element $s_1$ in $S_{1,\ddot{n}_{1,j}}$ satisfying (\ref{a2}) in \textbf{Algorithm \ref{alg:Framwork22}}, and $s_1=|n_1\Gamma_1|_{\Gamma_2}=(r_2-r_1)/m$. From (\ref{qiun1}), we have $\hat{n}_1=n_1$ in (\ref{qc2}), and
\begin{equation}
\left[\frac{\hat{n}_1m_1+\tilde{r}_1-\tilde{r}_2}{m_2}\right]=n_2+\left[\frac{\triangle r_1-\triangle r_2}{m}\right]=n_2.
\end{equation}
Finally, when $-\sigma_j/2\leq \textbf{q}_{21}<\sigma_j/2$, we have $r_1=r_2$ based on Lemma \ref{lem7}. Then, we know $n_1=n_2=0$. So, $\hat{n}_1=\hat{n}_2=n_1=n_2=0$. Therefore, we can accurately determine $n_i$, i.e., $\hat{n}_i=n_i$, for $i=1,2$ in the above \textbf{Algorithm \ref{alg:Framwork22}}.

Next, we prove that the dynamic range is indeed $\mbox{min}(m_2(1+\ddot{n}_{2,j}),m_1(1+\ddot{n}_{1,j}))$. Without loss of generality, we assume $m_2(1+\ddot{n}_{2,j})<m_1(1+\ddot{n}_{1,j})$. Suppose that the dynamic range is larger than $m_2(1+\ddot{n}_{2,j})$. Let $N=m_2(1+\ddot{n}_{2,j})$, and we have $r_2=0$. Then, from the definition of $\ddot{n}_{2,j}$ in (\ref{n2i}), there exists an element $w$ in $S_{2,\ddot{n}_{2,j}}$ such that $|mw-r_1|<m\sigma_j$. Let $\triangle r_2=0$ and $\triangle r_1=(mw-r_1)/2$.
One can see that $\triangle r_1$ and $\triangle r_2$ satisfy (\ref{concon}). Due to $w\geq\sigma_j$, we have $\textbf{q}_{21}=(r_1+\triangle r_1)/m\geq\sigma_j/2$, and then $-\sigma_j/2<\textbf{q}_{21}-w=-\triangle r_1/m<\sigma_j/2$. For any other element $s\in S_{2,\ddot{n}_{2,j}}$, we get $\textbf{q}_{21}-s>\sigma_j/2$ or $\textbf{q}_{21}-s<-\sigma_j/2$ according to $|w-s|\geq\sigma_j$. So, $w$ is a unique element in $S_{2,\ddot{n}_{2,j}}$ to satisfy (\ref{a1}) in \textbf{Algorithm \ref{alg:Framwork22}}. However, the obtained element $w\in S_{2,\ddot{n}_{2,j}}$ does not equal $(r_1-r_2)/m$, since $(r_1-r_2)/m=|(1+\ddot{n}_{2,j})\Gamma_2|_{\Gamma_1}$ does not belong to $S_{2,\ddot{n}_{2,j}}$. Hence, $\hat{n}_2\neq n_2$ in (\ref{qc1}), and we have proven that the dynamic range is $\mbox{min}(m_2(1+\ddot{n}_{2,j}),m_1(1+\ddot{n}_{1,j}))$.
\end{proof}

\begin{corollary}\label{cor1}
For some $j$, $1\leq j\leq K+1$, if the remainder error bound $\tau$ satisfies
\begin{equation}
\tau<\frac{m\sigma_j}{4},
\end{equation}
then the dynamic range of $N$ is $\mbox{min}(m_2(1+\ddot{n}_{2,j}),m_1(1+\ddot{n}_{1,j}))$, and the folding integers can be accurately determined in \textbf{Algorithm \ref{alg:Framwork22}}, i.e., $\hat{n}_i=n_i$ for $i=1,2$.
\end{corollary}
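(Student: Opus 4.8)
The plan is to derive Corollary \ref{cor1} directly from Theorem \ref{th1} by checking that the uniform bound $\tau < m\sigma_j/4$ forces the hypothesis (\ref{concon}). First I would note that $|\triangle r_i| \le \tau$ for $i = 1,2$ gives, by the triangle inequality, $|\triangle r_1 - \triangle r_2| \le |\triangle r_1| + |\triangle r_2| \le 2\tau < m\sigma_j/2$, so that $-\sigma_j/2 < (\triangle r_1 - \triangle r_2)/m < \sigma_j/2$; this open interval is contained in the half-open interval $[-\sigma_j/2, \sigma_j/2)$ appearing in (\ref{concon}), and the strict inequality $\tau < m\sigma_j/4$ is exactly what guarantees the containment at the endpoints. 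Hence every error pattern admissible under the bound $\tau$ satisfies the hypothesis of Theorem \ref{th1}.

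With (\ref{concon}) in hand, Theorem \ref{th1} applies verbatim: for every $N$ with $0 \le N < \min(m_2(1+\ddot{n}_{2,j}), m_1(1+\ddot{n}_{1,j}))$, \textbf{Algorithm \ref{alg:Framwork22}} outputs $\hat{n}_i = n_i$ for $i = 1,2$, and then $\hat{N}$ formed via (\ref{reconN}) obeys $|\hat{N} - N| \le \tau$, so robustness holds throughout this range. It remains to argue that the dynamic range is no larger, i.e., that robustness fails at $N = \min(m_2(1+\ddot{n}_{2,j}), m_1(1+\ddot{n}_{1,j}))$. For this I would reuse the counterexample built in the proof of Theorem \ref{th1} --- say $N = m_2(1+\ddot{n}_{2,j})$ when $m_2(1+\ddot{n}_{2,j}) \le m_1(1+\ddot{n}_{1,j})$, so that $r_2 = 0$ and $r_1/m = |(1+\ddot{n}_{2,j})\Gamma_2|_{\Gamma_1} \notin S_{2,\ddot{n}_{2,j}}$ --- but I would observe that the failure already occurs with arbitrarily small remainder errors, indeed even with zero error: since $r_1/m$ does not belong to $S_{2,\ddot{n}_{2,j}}$, the relevant branch of \textbf{Algorithm \ref{alg:Framwork22}} is forced to pick some $s_2 \ne r_1/m$ (or to set $\hat{n}_2 = 0$), whence $\hat{n}_2 \ne n_2$ and $\hat{N} \ne N$. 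Thus the failure persists under the strictly tighter budget $\tau < m\sigma_j/4$, and the symmetric case $m_1(1+\ddot{n}_{1,j}) < m_2(1+\ddot{n}_{2,j})$ is handled identically with the roles of the two moduli swapped.

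The only point requiring care --- and the step I expect to be the main (mild) obstacle --- is this last one: the counterexample as written in the proof of Theorem \ref{th1} uses $\triangle r_1 = (mw - r_1)/2$ with only $|mw - r_1| < m\sigma_j$ guaranteed, so those particular errors need not respect $\tau < m\sigma_j/4$. The resolution is the observation above that no such large errors are actually needed; the counterexample can be taken with errors as small as one likes (the zero-error case already suffices), which is what reconciles exactness of the dynamic range with the smaller error bound. Everything else in the corollary is an immediate consequence of Theorem \ref{th1}.
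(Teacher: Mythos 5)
Your proof is correct and its core is exactly the paper's argument: the paper's entire proof of Corollary~\ref{cor1} is the triangle-inequality step $|\triangle r_1-\triangle r_2|\leq 2\tau<m\sigma_j/2$ followed by an appeal to Theorem~\ref{th1}. Where you go beyond the paper is in your last paragraph: you correctly observe that the exactness of the dynamic range does not transfer automatically, because the counterexample in the proof of Theorem~\ref{th1} uses $\triangle r_1=(mw-r_1)/2$ with only $|mw-r_1|<m\sigma_j$ guaranteed, which need not respect the tighter budget $\tau<m\sigma_j/4$. Your repair --- that at $N=m_2(1+\ddot{n}_{2,j})$ one has $(r_1-r_2)/m=|(1+\ddot{n}_{2,j})\Gamma_2|_{\Gamma_1}\notin S_{2,\ddot{n}_{2,j}}$ by Lemma~\ref{lem6}, so \textbf{Algorithm~\ref{alg:Framwork22}} returns some $\hat{n}_2\leq\ddot{n}_{2,j}<n_2$ even with zero remainder error --- is sound under the paper's operational reading of ``dynamic range'' (failure of the proposed reconstruction), and it is a point the paper's two-line proof leaves implicit. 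In short: same route, with a legitimate extra patch on the dynamic-range half of the claim.
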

\begin{proof}
Since $|\triangle r_i|\leq\tau<m\sigma_j/4$ for $i=1,2$, we have
\begin{equation}
|\triangle r_1-\triangle r_2|<\frac{m\sigma_j}{2}.
\end{equation}
So, from Theorem \ref{th1}, the corollary is proved.
\end{proof}

\begin{remark}\label{rem1}
Since $\ddot{n}_{2,K+1}=\Gamma_1-1$ and $\ddot{n}_{1,K+1}=\Gamma_2-1$ as in Lemma \ref{cal}, we have $\mbox{min}(m_2(1+\ddot{n}_{2,K+1}),m_1(1+\ddot{n}_{1,K+1}))=\mbox{lcm}(m_1,m_2)$. So,
when $j=K+1$, i.e., $\sigma_{j}=1$, Theorem \ref{th1} coincides with Proposition \ref{pr1}, and Corollary \ref{cor1} coincides with Proposition \ref{pr2}. In other words, when the dynamic range increases to the maximum, i.e., the lcm of the two moduli, the robustness bound decreases to the quarter of the gcd of the two moduli.
\end{remark}

Next, we prove that when $j=1$, Theorem \ref{th1} coincides with Theorem \ref{th2}.

\begin{corollary}\label{cor2}
Theorem \ref{th2} is a special case of Theorem \ref{th1} when $j=1$.
\end{corollary}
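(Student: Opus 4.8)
The plan is to check, term by term, that instantiating Theorem~\ref{th1} at $j=1$ reproduces the hypothesis, the dynamic range, and (up to an obvious simplification) the reconstruction step of Theorem~\ref{th2}. First I would match the hypotheses. From $\sigma_{-1}=\Gamma_2$, $\sigma_0=\Gamma_1$, and $\sigma_i=|\sigma_{i-2}|_{\sigma_{i-1}}$ one gets $\sigma_1=|\sigma_{-1}|_{\sigma_0}=|\Gamma_2|_{\Gamma_1}$, so the error condition (\ref{concon}) with $j=1$ is literally (\ref{concon1}). If $|\Gamma_2|_{\Gamma_1}=1$ then $\sigma_1=1$, hence $K=0$ and $j=1=K+1$; by Remark~\ref{rem1} Theorem~\ref{th1} then coincides with Proposition~\ref{pr1}, which is exactly what Theorem~\ref{th2} asserts in this case. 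So I may assume $|\Gamma_2|_{\Gamma_1}\geq2$, equivalently $K\geq1$ by Lemma~\ref{lem5}, and then $j=1$ is an admissible index in Theorem~\ref{th1}.

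Next I would evaluate the dynamic range. By the first lines of (\ref{n222}) and (\ref{n111}) in Lemma~\ref{cal}, together with $m_i=m\Gamma_i$ and $|m_2|_{m_1}=m|\Gamma_2|_{\Gamma_1}$, one has $\ddot{n}_{2,1}=\lfloor\Gamma_1/\sigma_1\rfloor=\lfloor m_1/|m_2|_{m_1}\rfloor$ and $\ddot{n}_{1,1}=\lfloor\Gamma_2/\Gamma_1\rfloor\lfloor\Gamma_1/\sigma_1\rfloor=\lfloor m_2/m_1\rfloor\lfloor m_1/|m_2|_{m_1}\rfloor$. Hence $m_1(1+\ddot{n}_{1,1})$ is precisely the dynamic range claimed in Theorem~\ref{th2}, and it remains to see that it is the smaller of the two candidates. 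This I would read off directly from identity (\ref{aaa}) in the proof of Lemma~\ref{lem1}, which rewrites $m_1(1+\lfloor m_2/m_1\rfloor\lfloor m_1/|m_2|_{m_1}\rfloor)$ as $m_2\lfloor m_1/|m_2|_{m_1}\rfloor+\bigl(m_1-|m_2|_{m_1}\lfloor m_1/|m_2|_{m_1}\rfloor\bigr)$; since the second summand lies strictly between $0$ and $m_2$ (as already noted after (\ref{aaa})), this quantity is strictly less than $m_2(1+\lfloor m_1/|m_2|_{m_1}\rfloor)=m_2(1+\ddot{n}_{2,1})$. Therefore $\min(m_2(1+\ddot{n}_{2,1}),m_1(1+\ddot{n}_{1,1}))=m_1(1+\ddot{n}_{1,1})=m_1(1+\lfloor m_2/m_1\rfloor\lfloor m_1/|m_2|_{m_1}\rfloor)$, matching Theorem~\ref{th2}.

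Since the error hypothesis and the dynamic range of Theorem~\ref{th1} at $j=1$ thus coincide with those of Theorem~\ref{th2}, the robustness conclusion $\hat{n}_i=n_i$ is the same statement; the only remaining point is that Theorem~\ref{th1} invokes \textbf{Algorithm~\ref{alg:Framwork22}} while Theorem~\ref{th2} invokes \textbf{Algorithm~\ref{alg:Framwork2}}. To reconcile them I would note that for $j=1$ the set $S_{2,\ddot{n}_{2,1}}$ is, by the computation in the proof of Lemma~\ref{cal}, the arithmetic progression $\{t|\Gamma_2|_{\Gamma_1}:0\leq t\leq\lfloor\Gamma_1/|\Gamma_2|_{\Gamma_1}\rfloor\}$, so the search in \textbf{Algorithm~\ref{alg:Framwork22}} collapses to rounding $\textbf{q}_{21}/|\Gamma_2|_{\Gamma_1}$, and from $\sigma_1\bar{\Gamma}_{21}\equiv\Gamma_2\bar{\Gamma}_{21}\equiv1$ modulo $\Gamma_1$ the formula $\hat{n}_2\equiv s_2\bar{\Gamma}_{21}$ becomes the explicit expression for $\hat{n}_2$ in \textbf{Algorithm~\ref{alg:Framwork2}}; the branch $\textbf{q}_{21}<-\sigma_1/2$ is treated the same way after subtracting the appropriate multiple of $\Gamma_1$. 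I expect this last bookkeeping — lining up the two algorithms branch by branch, in particular matching the $S_{1,\ddot{n}_{1,1}}$-based step of \textbf{Algorithm~\ref{alg:Framwork22}} with the $\lfloor\textbf{q}_{21}/\Gamma_1\rfloor$-based step of \textbf{Algorithm~\ref{alg:Framwork2}} — to be the only non-mechanical part; the matching of hypotheses and dynamic ranges is a direct substitution using $\sigma_1=|\Gamma_2|_{\Gamma_1}$, Lemma~\ref{cal}, and (\ref{aaa}).
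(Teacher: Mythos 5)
Your proposal is correct and follows the same overall structure as the paper's proof: dispose of the $\sigma_1=1$ case via Remark \ref{rem1}, read off $\ddot{n}_{2,1}=\lfloor\Gamma_1/\sigma_1\rfloor$ and $\ddot{n}_{1,1}=\lfloor\Gamma_2/\Gamma_1\rfloloor\lfloor\Gamma_1/\sigma_1\rfloor$ from Lemma \ref{cal} (note: $\lfloor\Gamma_2/\Gamma_1\rfloor\lfloor\Gamma_1/\sigma_1\rfloor$), and reduce everything to showing $m_1(1+\ddot{n}_{1,1})<m_2(1+\ddot{n}_{2,1})$. You differ in two small but worthwhile ways. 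First, you obtain that inequality from the identity (\ref{aaa}) together with the already-established bound $0<m_1-|m_2|_{m_1}\lfloor m_1/|m_2|_{m_1}\rfloor<m_2$, whereas the paper multiplies through by $\Gamma_1/\Gamma_2$ and bounds $\tfrac{\Gamma_1}{\Gamma_2}\lfloor\Gamma_2/\Gamma_1\rfloor\le1$; both are one-line arguments and equally valid. Second, you explicitly reconcile \textbf{Algorithm \ref{alg:Framwork22}} at $j=1$ with \textbf{Algorithm \ref{alg:Framwork2}} by observing that $S_{2,\ddot{n}_{2,1}}=\{t\sigma_1:0\le t\le\lfloor\Gamma_1/\sigma_1\rfloor\}$ is an arithmetic progression, so that the nearest-element search plus the inversion $s_2\bar{\Gamma}_{21}\bmod\Gamma_1$ collapses to the rounding $[\textbf{q}_{21}/|\Gamma_2|_{\Gamma_1}]$. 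The paper's proof silently restricts attention to the dynamic-range claim and never addresses the fact that the two theorems invoke different algorithms, so your extra bookkeeping actually closes a gap the paper leaves implicit rather than introducing one; just make sure that when you write it out you carry the $r_1<r_2$ branch through completely (matching the $S_{1,\ddot{n}_{1,1}}$ search against the $\lfloor\textbf{q}_{21}/\Gamma_1\rfloor$ reduction and the $n_2=0$ subcase of Lemma \ref{lem2}), as that is the only place where the correspondence is not a one-line substitution.
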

\begin{proof}
If $\sigma_1=1$, i.e., $K=0$, as described in Remark \ref{rem1}, Theorem \ref{th2} is a special case of Theorem \ref{th1} when $j=1$. In the following, we assume $\sigma_1>1$, and we only need to prove $\mbox{min}(m_2(1+\ddot{n}_{2,1}),m_1(1+\ddot{n}_{1,1}))=m_1(1+\lfloor\Gamma_2/\Gamma_1\rfloor\lfloor\Gamma_1/\sigma_1\rfloor)$. Since $\ddot{n}_{2,1}=\left\lfloor\Gamma_1/\sigma_1\right\rfloor$ and $\ddot{n}_{1,1}=\left\lfloor\Gamma_2/\Gamma_1\right\rfloor\left\lfloor\Gamma_1/\sigma_1\right\rfloor$ in Lemma \ref{cal}, we next prove $m_1(1+\ddot{n}_{1,1})<m_2(1+\ddot{n}_{2,1})$.
It is readily seen that
\begin{equation}
\begin{split}
m_1\left(1+\left\lfloor\frac{\Gamma_2}{\Gamma_1}\right\rfloor\left\lfloor\frac{\Gamma_1}{\sigma_1}\right\rfloor\right)&=m_2\frac{\Gamma_1}{\Gamma_2}\left(1+\left\lfloor\frac{\Gamma_2}{\Gamma_1}\right\rfloor\left\lfloor\frac{\Gamma_1}{\sigma_1}\right\rfloor\right)\\
&=m_2\left(\frac{\Gamma_1}{\Gamma_2}+\frac{\Gamma_1}{\Gamma_2}\left\lfloor\frac{\Gamma_2}{\Gamma_1}\right\rfloor\left\lfloor\frac{\Gamma_1}{\sigma_1}\right\rfloor\right)\\
&<m_2\left(1+\left\lfloor\frac{\Gamma_1}{\sigma_1}\right\rfloor\right).
\end{split}
\end{equation}
Therefore, $\mbox{min}(m_2(1+\ddot{n}_{2,1}),m_1(1+\ddot{n}_{1,1}))=m_1(1+\lfloor\Gamma_2/\Gamma_1\rfloor\lfloor\Gamma_1/\sigma_1\rfloor)$, and we complete the proof.
\end{proof}

\begin{example}\label{ex2}
Let $m_1=13\cdot18$ and $m_2=13\cdot29$. The lcm of the moduli is $\mbox{lcm}(m_1,m_2)=6786$. According to Lemma \ref{cal} and Corollary \ref{cor1}, we have the following result in Table \ref{table1}, where the last row, i.e., Level \uppercase\expandafter{\romannumeral1}, is the known result in Proposition \ref{pr2}.
\begin{table}[H]
\centering  
\begin{tabular}{lllccl}  
\hline
level& value of $\sigma_j$ &robustness bound&$\ddot{n}_{1,j}$&$\ddot{n}_{2,j}$ &dynamic range\\ \hline  
\uppercase\expandafter{\romannumeral5}&$\sigma_1=11$ &$\tau<(13\cdot11)/4=35.75$ &1&1&$0\leq N<468$ \\         
\uppercase\expandafter{\romannumeral4}&$\sigma_2=7$ &$\tau<(13\cdot7)/4=22.75$ &3&1&$0\leq N<754$\\        
\uppercase\expandafter{\romannumeral3}&$\sigma_3=4$ &$\tau<(13\cdot4)/4=13$ &4&3&$0\leq N<1170$ \\
\uppercase\expandafter{\romannumeral2}&$\sigma_4=3$ &$\tau<(13\cdot3)/4=9.75$ &8&4&$0\leq N<1885$ \\
\uppercase\expandafter{\romannumeral1}&$\sigma_5=1$ &$\tau<(13\cdot1)/4=3.25$ &28&17&$0\leq N<6786$ \\
\hline
\end{tabular}
\caption{The relationship between the dynamic range and the robustness bound.}
\label{table1}
\end{table}
\end{example}

Let us recall the intuitive explantation of robust reconstruction by using the method of integer position representation on the two dimensional remainder plane introduced in \textit{Part} $B$ of Section \ref{sec2}. Via the CRT, we know that the integers within $[0,\mbox{lcm}(m_1,m_2))$ and their remainders $(r_1,r_2)$ are isomorphic, i.e., different integers within $[0,\mbox{lcm}(m_1,m_2))$ have different position representations on the remainder plane.
Since $N=n_im_i+r_i$ for $i=1,2$, we have
\begin{equation}\label{line}
r_2=r_1+(n_1m_1-n_2m_2).
\end{equation}
So, all the integers $N$ (or equivalently $(r_1, r_2)$) from $0$ to $\mbox{lcm}(m_1,m_2)-1$ are connected by the lines (\ref{line}) with the slope of $1$, as depicted in Fig. \ref{figone}. Moreover, due to $0\leq N<\mbox{lcm}(m_1,m_2)$, we know
$0\leq n_1<\Gamma_2$ and $0\leq n_2<\Gamma_1$.
Then, the folding integers $n_i$ for $i=1,2$ are determined by the value of $r_2-r_1$ as in (\ref{qiun1}).
Therefore, the integers on a slanted line (\ref{line}) have the same folding integers, and every such slanted line corresponds to a unique pair of folding integers. The idea of finding the closest slanted line to the erroneous remainders $(\tilde{r}_1,\tilde{r}_2)$ in \cite{new} is equivalent to that of determining the folding integers in \cite{wjwang2010} and also this paper, i.e.,
the closest slanted line to $(\tilde{r}_1,\tilde{r}_2)$ is the line that contains the true remainders $(r_1,r_2)$, which means that the folding integers are accurately determined.
As shown in Fig. \ref{figtwo}, the robustness bound depends on the minimum distance between the set of slanted lines, and the distance between the slanted lines can be determined by their horizontal or vertical distance. So, all the integers within $[0,\mbox{min}(m_2(1+\ddot{n}_{2,j}),m_1(1+\ddot{n}_{1,j})))$ in Corollary \ref{cor1} are connected by $\ddot{n}_{2,j}+\ddot{n}_{1,j}-1$ slanted lines on the remainder plane, which include the identity line (i.e., $r_1=r_2$) denoted by $S$, $\ddot{n}_{2,j}-1$ slanted lines (i.e., $r_1>r_2$) denoted by $S_2$ below the identity line, and $\ddot{n}_{1,j}-1$ slanted lines (i.e., $r_1<r_2$) denoted by $S_1$ above the identity line. One can see that
$md_{2,\ddot{n}_{2,j}}$ is the minimum (horizontal) distance between the set $S_2\bigcup S$ of slanted lines, and $md_{1,\ddot{n}_{1,j}}$ is the minimum (vertical) distance between the set $S_1\bigcup S$ of slanted lines, where $d_{2,\ddot{n}_{2,j}}\geq\sigma_j$ and $d_{1,\ddot{n}_{1,j}}\geq\sigma_j$ are obtained in (\ref{s2}) and (\ref{s1}), respectively. Since the two sets $S_1,S_2$ are separated by the identity line in $S$ on the remainder plane, the minimum distance between all of the slanted lines is greater than or equal to $\sigma_j$. This gives an intuitive explantation of Corollary \ref{cor1}.

\section{Multi-Modular Systems and Generalization}\label{sec5}
In this section, the above newly obtained two-modular results are first applied to robust reconstruction for multi-modular systems by using cascade or parallel architectures, and then generalized from integers to real numbers.

\subsection{Robust Reconstruction for Multi-Modular Systems}
Let $m_1,m_2,\cdots,m_L$ be $L$ moduli and split into two groups:
$\{m_{1,1},\cdots,m_{1,L_1}\}\mbox{ and }\{m_{2,1},\cdots,m_{2,L_2}\}$, where $L>2$ and the two groups do not have to be disjoint, i.e., $L_1+L_2\geq L$. Let $N$ be an integer with $0\leq N<\mbox{lcm}(m_1,m_2,\cdots,m_L)$, and we can uniquely reconstruct $N$ in the following cascade process.
For $i=1,2$
and Group $i$, we first write
\begin{equation}\label{eachgroup}
\left\{\begin{array}{ll}
N_i=h_{i,k}m_{i,k}+r_{i,k}\\
0\leq N_i<\eta_i\triangleq\mbox{lcm}(m_{i,1},m_{i,2},\cdots,m_{i,L_i})\\
1\leq k \leq L_i,
\end{array}\right.
\end{equation}
and then regard $N_i$ as the remainders of the following system of congruences:
\begin{equation}\label{twostage}
\left\{\begin{array}{ll}
N=l_1\eta_1+N_1\\
N=l_2\eta_2+N_2\\
0\leq N<\mbox{lcm}(\eta_1,\eta_2)=\mbox{lcm}(m_1,m_2,\cdots,m_L).
\end{array}\right.
\end{equation}
Without loss of generality, we assume $\eta_1<\eta_2$. Replacing $N_1$ and $N_2$ in (\ref{twostage}) by (\ref{eachgroup}), we have, for $1\leq k\leq L_i$ and $i=1,2$,
\begin{equation}\label{one}
N=\left(l_i\frac{\eta_i}{m_{i,k}}+h_{i,k}\right)m_{i,k}+r_{i,k}.
\end{equation}
Assume that the remainders $r_{i,k}$ have errors:
\begin{equation}
0\leq\tilde{r}_{i,k}<m_{i,k}\quad\mbox{and}\quad|\tilde{r}_{i,k}-r_{i,k}|\leq\tau_i,
\end{equation}
where $\triangle r_{i,k}\triangleq\tilde{r}_{i,k}-r_{i,k}$ denotes the remainder error, and
$\tau_i$ denotes the remainder error bound for the remainders in the $i$-th group for $i=1,2$.
One can see from (\ref{one}) that if we can accurately determine $h_{i,k}$ and $l_i$, we can accurately determine the folding integers of $N$ modulo $m_{i,k}$.
Therefore, we first apply the robust CRT (Proposition \ref{pr2} in \cite{wjwang2010} or multi-stage robust CRT in \cite{xiaoxia1}) to each group in (\ref{eachgroup}), and obtain accurate $h_{i,k}$ and robust reconstructions $\hat{N}_i$ for $1\leq k\leq L_i$ and $i=1,2$. With these robust reconstructions from the two groups, the above newly obtained two-modular results are then applied across the two groups in (\ref{twostage}).

In what follows, let us consider without loss of generality a special case when the remaining integers of the moduli in each group factorized by their gcd are pairwise co-prime, i.e., for $i=1,2$ and Group $i$, moduli $m_{i,k}=m^{(i)}\Gamma_{i,k}$ for $1\leq k\leq L_i$, where $\Gamma_{i,1},\Gamma_{i,2},\cdots,\Gamma_{i,L_i}$ are pairwise co-prime. Denote by $m$ the gcd of $\eta_1$ and $\eta_2$, where $\eta_i$ is the lcm of all the moduli in Group $i$ and $\eta_i=m^{(i)}\prod_{k=1}^{L_i}\Gamma_{i,k}$ for $i=1,2$. We write $\eta_1=m\Gamma_1$ and $\eta_2=m\Gamma_2$, where $\Gamma_1$ and $\Gamma_2$ are co-prime and $\Gamma_1<\Gamma_2$. Then, $\ddot{n}_{2,j}$ and $\ddot{n}_{1,j}$ can be calculated according to Lemma \ref{cal}, and we have the following result.

\begin{theorem}\label{th22}
Let $N$ be an integer with $0\leq N<\mbox{min}(\eta_2(1+\ddot{n}_{2,j}),\eta_1(1+\ddot{n}_{1,j}))$ for some $j$, $1\leq j\leq K+1$. If the remainder error bounds $\tau_1$ and $\tau_2$ satisfy
\begin{equation}\label{92con}
\tau_1<\frac{m^{(1)}}{4},\;\; \tau_2<\frac{m^{(2)}}{4},\;\; \mbox{and }\tau_1+\tau_2<\frac{m\sigma_j}{2},
\end{equation}
we can accurately determine the folding integers of $N$ modulo $m_{i,k}$ for $1\leq k\leq L_i$ and $i=1,2$.
\end{theorem}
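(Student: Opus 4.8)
The plan is to execute the cascade of (\ref{eachgroup})--(\ref{twostage}) in two stages, verifying at each stage the hypotheses of a result already established. In the first stage I would treat the two groups separately. Fix $i\in\{1,2\}$. In Group $i$ the moduli are $m_{i,k}=m^{(i)}\Gamma_{i,k}$ with gcd $m^{(i)}$ and pairwise co-prime quotients $\Gamma_{i,1},\dots,\Gamma_{i,L_i}$, and $0\le N_i<\eta_i=\mbox{lcm}(m_{i,1},\dots,m_{i,L_i})$; so the hypothesis $\tau_i<m^{(i)}/4$ lets Proposition \ref{pr2} determine every within-group folding integer $h_{i,k}$ exactly. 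Feeding these exact $h_{i,k}$ into the averaging formula (\ref{reconN}) yields $\hat{N}_i=N_i+\big[\frac{1}{L_i}\sum_{k=1}^{L_i}\triangle r_{i,k}\big]$ with $|\hat{N}_i-N_i|\le\tau_i$; write $\triangle N_i\triangleq\hat{N}_i-N_i$, so that $|\triangle N_i|\le\tau_i$.

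In the second stage I would view $\hat{N}_1,\hat{N}_2$ as erroneous remainders of $N$ with respect to the two moduli $\eta_1=m\Gamma_1$ and $\eta_2=m\Gamma_2$ (co-prime $\Gamma_1<\Gamma_2$, $m=\mbox{gcd}(\eta_1,\eta_2)$), with true remainders $N_1,N_2$ and remainder errors $\triangle N_1,\triangle N_2$. This is precisely the two-modular situation of Theorem \ref{th1}. Its error hypothesis (\ref{concon}) holds because
\[
\left|\frac{\triangle N_1-\triangle N_2}{m}\right|\;\le\;\frac{|\triangle N_1|+|\triangle N_2|}{m}\;\le\;\frac{\tau_1+\tau_2}{m}\;<\;\frac{\sigma_j}{2},
\]
by the third inequality of (\ref{92con}), and $|x|<\sigma_j/2$ gives $-\sigma_j/2\le x<\sigma_j/2$. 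Since the assumed range $0\le N<\mbox{min}(\eta_2(1+\ddot{n}_{2,j}),\eta_1(1+\ddot{n}_{1,j}))$ is exactly the dynamic range furnished by Theorem \ref{th1} for the moduli pair $(\eta_1,\eta_2)$, running Algorithm \ref{alg:Framwork22} on $(\hat{N}_1,\hat{N}_2)$ outputs the correct folding integers $l_1,l_2$ of $N$ modulo $\eta_1,\eta_2$.

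It then remains only to glue the stages together: by (\ref{one}) the folding integer of $N$ modulo $m_{i,k}$ equals $l_i\,\eta_i/m_{i,k}+h_{i,k}$ (note $m_{i,k}\mid\eta_i$), and both summands have been recovered exactly — $l_i$ in the second stage, $h_{i,k}$ in the first — so all folding integers of $N$ modulo $m_{i,k}$, for $1\le k\le L_i$ and $i=1,2$, are accurately determined, which is the assertion.

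I expect no deep obstacle; the work is almost entirely in checking that the inputs passed between stages meet the hypotheses of Proposition \ref{pr2} and Theorem \ref{th1}. The one place to be careful is the interface in the second stage: one must confirm that the reconstructed values $\hat{N}_i$, which may deviate from the exact residues $N_i$ by up to $\tau_i$, are still legitimate inputs for the two-modular machinery — i.e. that the membership and distance arguments in the proof of Theorem \ref{th1} go through with $\tilde{r}_i=\hat{N}_i$, $r_i=N_i$, $\triangle r_i=\triangle N_i$. Since $\tau_i<m^{(i)}/4\le\eta_i/4$ keeps these deviations tiny relative to $\eta_i$, this causes no genuine difficulty, and the rest of the argument is the bookkeeping of the substitution $(m_1,m_2,\triangle r_1,\triangle r_2)\mapsto(\eta_1,\eta_2,\triangle N_1,\triangle N_2)$ into the already-proven Theorem \ref{th1}.
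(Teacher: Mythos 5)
Your proposal is correct and follows essentially the same route as the paper: apply Proposition \ref{pr2} within each group under $\tau_i<m^{(i)}/4$ to get exact $h_{i,k}$ and robust estimates $\hat{N}_i$ with $|\hat{N}_i-N_i|\leq\tau_i$, then feed $\hat{N}_1,\hat{N}_2$ as erroneous remainders modulo $\eta_1,\eta_2$ into Theorem \ref{th1} (whose hypothesis (\ref{concon}) follows from $\tau_1+\tau_2<m\sigma_j/2$) to recover $l_1,l_2$, and combine via (\ref{one}). Your explicit check that the cross-group error difference satisfies the half-open condition of Theorem \ref{th1} is a detail the paper leaves implicit, but the argument is the same.
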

\begin{proof}
For $i=1,2$, according to Proposition \ref{pr2}, when $\tau_i<m^{(i)}/4$, we can accurately determine $h_{i,k}$ for $1\leq k\leq L_i$ in the system of congruence equations (\ref{eachgroup}), and thereby obtain robust reconstructions $\hat{N}_i$, i.e., $|\triangle N_i|=|\hat{N}_i-N_i|\leq\tau_i<m^{(i)}/4$. Then, in the system of congruence equations (\ref{twostage}), $\hat{N}_1$ and $\hat{N}_2$ become the erroneous remainders of $N$ modulo $\eta_1$ and $\eta_2$, respectively. So, from Theorem \ref{th1}, we can accurately determine $l_i$ for $i=1,2$, when $\tau_1+\tau_2<m\sigma_j/2$. Thus, by (\ref{one}), if (\ref{92con}) holds, the folding integers of $N$ modulo $m_{i,k}$ for $1\leq k\leq L_i$ and $i=1,2$ can be accurately determined.
\end{proof}

Recall that $\tau$ is the remainder error bound for all the remainders $r_{i,k}$, i.e., $|\triangle r_{i,k}|\leq\tau$ for $1\leq k\leq L_i$ and $i=1,2$. According to Theorem \ref{th22}, the following corollary is immediate.
\begin{corollary}\label{c222222}
Let $N$ be an integer with $0\leq N<\mbox{min}(\eta_2(1+\ddot{n}_{2,j}),\eta_1(1+\ddot{n}_{1,j}))$ for some $j$, $1\leq j\leq K+1$. If the remainder error bound $\tau$ satisfies
\begin{equation}\label{xiac}
\tau<\frac{\mbox{min}(m^{(1)},m^{(2)},m\sigma_j)}{4},
\end{equation}
we can accurately determine the folding integers of $N$ modulo $m_{i,k}$ for $1\leq k\leq L_i$ and $i=1,2$.
\end{corollary}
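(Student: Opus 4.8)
The plan is to derive Corollary \ref{c222222} directly from Theorem \ref{th22} by showing that the single hypothesis $\tau < \min(m^{(1)}, m^{(2)}, m\sigma_j)/4$ implies the three conditions collected in \eqref{92con}. Since all remainder errors $\triangle r_{i,k}$ are bounded by the common bound $\tau$, we may take $\tau_1 = \tau_2 = \tau$ as valid per-group error bounds (a uniform bound is in particular a bound for each group), so it suffices to verify the three inequalities of \eqref{92con} with $\tau_1 = \tau_2 = \tau$.

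First I would handle the two co-prime-base conditions: from $\tau < \min(m^{(1)}, m^{(2)}, m\sigma_j)/4 \leq m^{(1)}/4$ we get $\tau_1 = \tau < m^{(1)}/4$, and likewise $\tau_2 = \tau < m^{(2)}/4$. These are exactly the first two requirements of \eqref{92con}, and they are what Proposition \ref{pr2} needs in order to accurately determine the intra-group folding integers $h_{i,k}$ and thereby produce robust reconstructions $\hat N_i$ of each $N_i$ with $|\hat N_i - N_i| \leq \tau$. Next, for the cross-group condition, note $\tau_1 + \tau_2 = 2\tau < 2\cdot \frac{m\sigma_j}{4} = \frac{m\sigma_j}{2}$, using $\tau < m\sigma_j/4$. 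Hence all three parts of \eqref{92con} hold, and Theorem \ref{th22} applies verbatim: the folding integers of $N$ modulo $m_{i,k}$ for $1 \leq k \leq L_i$, $i = 1,2$, are accurately determined.

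The argument is essentially a one-line specialization, so there is no real obstacle; the only thing to be careful about is the factor-of-two bookkeeping in the third inequality — the bound $m\sigma_j/4$ on the individual error, rather than $m\sigma_j/2$, is precisely what is needed so that the sum $\tau_1+\tau_2$ clears the threshold $m\sigma_j/2$ required by the two-modular result across the groups. I would also remark that once the folding integers are known, a robust estimate $\hat N$ of $N$ with $|\hat N - N| \leq \tau$ follows immediately from the averaging reconstruction \eqref{reconN} applied across all $L$ (counted with multiplicity $L_1+L_2$) congruences, exactly as in the two-modular case. This completes the proof of Corollary \ref{c222222}.
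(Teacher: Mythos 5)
Your proposal is correct and follows essentially the same route as the paper: set $\tau_1=\tau_2=\tau$, observe that (\ref{xiac}) gives $\tau<m^{(1)}/4$, $\tau<m^{(2)}/4$, and $\tau_1+\tau_2=2\tau<m\sigma_j/2$, and then invoke Theorem \ref{th22}. The additional remark about the final averaging reconstruction is fine but not needed for the statement being proved.
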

\begin{proof}
From (\ref{xiac}), we have $|\triangle r_{1,k}|\leq\tau< m^{(1)}/4$ for $1\leq k\leq L_1$, $|\triangle r_{2,k}|\leq\tau< m^{(2)}/4$ for $1\leq k\leq L_2$, and $2\tau<m\sigma_j/2$. From Theorem \ref{th22}, the corollary is proven.
\end{proof}
\begin{remark}
For a general case, i.e., the remaining integers of the moduli in each group factorized by their gcd are not necessarily pairwise co-prime, we need to use the multi-stage robust CRT introduced in \cite{xiaoxia1} to obtain accurate $h_{i,k}$ and robust reconstructions $\hat{N}_i$ for $1\leq k\leq L_i$ and $i=1,2$ in the system of congruence equations (\ref{eachgroup}).
\end{remark}

\begin{example}\label{ex3}
Let $m_1=60\cdot2, m_2=60\cdot5, m_3=70\cdot3, m_4=70\cdot7$. From the single stage robust CRT in \cite{xiaoxia1}, the dynamic range is the lcm of all the moduli, i.e., $\mbox{lcm}(m_1,m_2,m_3,m_4)=29400$, and the robustness bound is $10/4$. We split the moduli into two groups: $\{m_1,m_2\}$ and $\{m_3,m_4\}$, and use the cascade architecture to reconstruct $N$. One can see that $m^{(1)}=60$, $m^{(2)}=70$, $\eta_1=\mbox{lcm}(m_1,m_2)=600=30\cdot20$, $\eta_2=\mbox{lcm}(m_3,m_4)=1470=30\cdot49$, and $m=30$.
Then, from the two-stage robust CRT in \cite{xiaoxia1}, the dynamic range is still the lcm of all the moduli, i.e., $29400$, while the robustness bound becomes $30/4$. Let $j=2$ in Corollary \ref{c222222}, we have $\sigma_2=2$ and $\ddot{n}_{2,2}=8, \ddot{n}_{1,2}=22$ from Lemma \ref{cal}, and the robustness bound can reach $60/4$ when $0\leq N<\mbox{min}(\eta_2(1+\ddot{n}_{2,2}),\eta_1(1+\ddot{n}_{1,2}))=13230$.
\end{example}

\subsection{Generalization to Reals}
The above studies are all for integers. As we know, the robust CRT in Propositions \ref{pr1} and \ref{pr2} were naturally generalized to real numbers in \cite{wjwang2010}. The results in Propositions \ref{pr3} and \ref{pr4} were also applicable to real numbers in \cite{new}. In this section, therefore, we generalize the above Theorem \ref{th1} and Corollary \ref{cor1} to real numbers. In the following, we use boldface symbols to denote the corresponding real variables of non-boldface integer variables.

Let $\textbf{N}$ be a nonnegative real number, and $\textbf{m}_i=\textbf{m}\Gamma_i$ for $i=1,2$ be the real-valued moduli, where $\Gamma_1<\Gamma_2$ are co-prime integers. Then, $\textbf{N}$ can be expressed as
\begin{equation}\label{bbbb}
\textbf{N}=n_i\textbf{m}\Gamma_i+\textbf{r}_i,\quad i=1,2,
\end{equation}
where $n_i$ is an unknown integer (or folding integer) and $\textbf{r}_i$ is the real-valued remainder with $0\leq\textbf{r}_i<\textbf{m}_i$ for $i=1,2$.
Similarly, we assume that the remainders have errors:
\begin{equation}
0\leq\tilde{\textbf{r}}_i<\textbf{m}_i\quad\mbox{and}\quad\triangle\textbf{r}_i\triangleq\tilde{\textbf{r}}_i-\textbf{r}_i.
\end{equation}
We then have the following result.
\begin{theorem}\label{th3}
For some $j$, $1\leq j\leq K+1$, if the remainder errors satisfy
\begin{equation}\label{conconcon}
-\frac{\sigma_j}{2}\leq\frac{\triangle \textbf{r}_1-\triangle \textbf{r}_2}{\textbf{m}}<\frac{\sigma_j}{2},
\end{equation}
then the dynamic range of $\textbf{N}$ is $\mbox{min}(\textbf{m}_2(1+\ddot{n}_{2,j}),\textbf{m}_1(1+\ddot{n}_{1,j}))$, and the folding integers can be accurately determined in \textbf{Algorithm \ref{alg:Framwork22}}, i.e., $\hat{n}_i=n_i$ for $i=1,2$. In particular, if the remainder error bound $\bm{\tau}$ satisfies
\begin{equation}
|\triangle \textbf{r}_i|\leq\bm{\tau}<\frac{\textbf{m}\sigma_j}{4},
\end{equation}
then the dynamic range of $\textbf{N}$ is $\mbox{min}(\textbf{m}_2(1+\ddot{n}_{2,j}),\textbf{m}_1(1+\ddot{n}_{1,j}))$, and the folding integers can be accurately determined in \textbf{Algorithm \ref{alg:Framwork22}}, i.e., $\hat{n}_i=n_i$ for $i=1,2$.
\end{theorem}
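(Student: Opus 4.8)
The plan is to reduce Theorem~\ref{th3} to the integer case established in Theorem~\ref{th1} by the standard ``factoring out the real scale'' argument used for the robust CRT in \cite{wjwang2010}. Observe that the key integer identity behind Theorem~\ref{th1} is purely combinatorial: writing $\textbf{N}=n_i\textbf{m}\Gamma_i+\textbf{r}_i$ for $i=1,2$ and subtracting gives
\begin{equation}
n_2\Gamma_2-n_1\Gamma_1=\frac{\textbf{r}_1-\textbf{r}_2}{\textbf{m}},
\end{equation}
which is exactly relation used in Lemma~\ref{lem7} and in \eqref{qiun1}, except that the right-hand side is now a real number rather than an integer. Thus the algebraic skeleton (B\'ezout's lemma, the sets $S_{2,n}$ and $S_{1,n}$, the distances $d_{2,n},d_{1,n}$, and the formulas of Lemma~\ref{cal}) depends only on $\Gamma_1,\Gamma_2$ and is untouched by passing to reals.

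First I would run the analogue of Lemma~\ref{lem7}: from $0\leq\textbf{N}<\min(\textbf{m}_2(1+\ddot{n}_{2,j}),\textbf{m}_1(1+\ddot{n}_{1,j}))$ deduce $0\leq n_2\leq\ddot{n}_{2,j}<\Gamma_1$ and $0\leq n_1\leq\ddot{n}_{1,j}<\Gamma_2$, exactly as in \eqref{shuyu}, since these bounds come from dividing the range inequality by $\textbf{m}_i$ and the $\ddot{n}$'s are still integers. Then the three-case dichotomy according to the sign of $\textbf{q}_{21}=(\tilde{\textbf{r}}_1-\tilde{\textbf{r}}_2)/\textbf{m}$ carries over verbatim: when $\textbf{r}_1>\textbf{r}_2$ the quantity $(\textbf{r}_1-\textbf{r}_2)/\textbf{m}=|n_2\Gamma_2|_{\Gamma_1}$ is an \emph{integer} lying in $S_{2,\ddot{n}_{2,j}}$ (because $n_2\Gamma_2-n_1\Gamma_1$ is an integer), so it is separated from every other element of that set by at least $d_{2,\ddot{n}_{2,j}}\geq\sigma_j$; combined with \eqref{conconcon} this forces $(\textbf{r}_1-\textbf{r}_2)/\textbf{m}\geq\sigma_j$ and hence $\textbf{q}_{21}\geq\sigma_j/2$, and symmetrically for $\textbf{r}_1<\textbf{r}_2$ and $\textbf{r}_1=\textbf{r}_2$.

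Next I would verify that \textbf{Algorithm~\ref{alg:Framwork22}} still recovers the $n_i$ correctly. The uniqueness argument of Theorem~\ref{th1}'s proof applies unchanged: in case (i), $(\textbf{r}_1-\textbf{r}_2)/\textbf{m}$ is the unique element $s$ of $S_{2,\ddot{n}_{2,j}}$ with $-\sigma_j/2\leq\textbf{q}_{21}-s<\sigma_j/2$, because any other $s$ differs from it by at least $\sigma_j$; then \eqref{qc1} gives $\hat n_2\equiv s_2\bar\Gamma_{21}\equiv n_2\bmod\Gamma_1$, and since $0\le n_2<\Gamma_1$ we get $\hat n_2=n_2$. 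Finally
\begin{equation}
\hat n_1=\left[\frac{\hat n_2\textbf{m}_2+\tilde{\textbf{r}}_2-\tilde{\textbf{r}}_1}{\textbf{m}_1}\right]
=\left[\frac{\textbf{N}-\textbf{r}_1+\triangle\textbf{r}_2-\triangle\textbf{r}_1}{\textbf{m}_1}\right]
=n_1+\left[\frac{\triangle\textbf{r}_2-\triangle\textbf{r}_1}{\textbf{m}_1}\right]=n_1,
\end{equation}
where the rounding vanishes because $|\triangle\textbf{r}_2-\triangle\textbf{r}_1|/\textbf{m}_1\le\sigma_j/\Gamma_1<1$ (as $\sigma_j<\Gamma_1$ for $j\ge1$); the mirror computation handles case (ii). Once correctness holds, the ``dynamic range is exactly this min'' claim is the tightness construction at the end of Theorem~\ref{th1}'s proof: take $\textbf{N}$ equal to the min, pick the bad error pattern $\triangle\textbf{r}_2=0$, $\triangle\textbf{r}_1=(\textbf{m}w-\textbf{r}_1)/2$ where $w\in S_{2,\ddot n_{2,j}}$ is the element guaranteed by the definition of $\ddot n_{2,j}$; this still satisfies \eqref{conconcon} and makes the algorithm output $w\neq(\textbf{r}_1-\textbf{r}_2)/\textbf{m}$, so $\hat n_2\neq n_2$. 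The second, ``in particular'' statement follows since $|\triangle\textbf{r}_i|\le\bm\tau<\textbf{m}\sigma_j/4$ implies $|\triangle\textbf{r}_1-\triangle\textbf{r}_2|<\textbf{m}\sigma_j/2$, which is \eqref{conconcon}.

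The only genuine subtlety — and the main thing to get right rather than a real obstacle — is to keep track of which quantities remain integers: $n_1,n_2,\Gamma_1,\Gamma_2,\ddot n_{i,j},\sigma_j$ and the residues in $S_{i,n}$ are all integers, and crucially $(\textbf{r}_1-\textbf{r}_2)/\textbf{m}$ is an integer whenever $\textbf{r}_1\ge\textbf{r}_2$ (and likewise $(\textbf{r}_2-\textbf{r}_1)/\textbf{m}$), so all the separation-by-$\sigma_j$ and rounding-kills-the-error arguments go through; only $\textbf{N},\textbf{m},\textbf{m}_i,\textbf{r}_i,\tilde{\textbf{r}}_i,\triangle\textbf{r}_i,\textbf{q}_{21},\bm\tau$ are genuinely real. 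With this bookkeeping in place the proof of Theorem~\ref{th1} transfers line by line, so I would simply state that the argument is identical to that of Theorem~\ref{th1} with the integers $r_i,\tilde r_i,\triangle r_i,m,m_i,N$ replaced by their real counterparts, noting the one-line rounding estimate above as the only place real-valuedness must be checked.
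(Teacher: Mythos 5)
Your proposal is correct and follows essentially the same route as the paper: the paper's own proof likewise reduces Theorem~\ref{th3} to Lemma~\ref{lem7}, Theorem~\ref{th1}, and Corollary~\ref{cor1} by observing that $n_2\Gamma_2-n_1\Gamma_1=(\textbf{r}_1-\textbf{r}_2)/\textbf{m}$ remains an integer identity and that the whole combinatorial apparatus ($S_{i,n}$, $d_{i,n}$, $\ddot{n}_{i,j}$, $\sigma_j$) depends only on $\Gamma_1,\Gamma_2$. Your write-up is in fact more explicit than the paper's (which simply says the earlier proofs carry over); the only nit is that your rounding estimate should read $|\triangle\textbf{r}_2-\triangle\textbf{r}_1|/\textbf{m}_1\leq\sigma_j/(2\Gamma_1)<1/2$ rather than ``$<1$'', since the rounding function needs the argument in $[-1/2,1/2)$.
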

\begin{proof}
From $0\leq\textbf{N}<\mbox{min}(\textbf{m}_2(1+\ddot{n}_{2,j}),\textbf{m}_1(1+\ddot{n}_{1,j}))$, we have
\begin{equation}
0\leq n_2\leq\ddot{n}_{2,j}<\Gamma_1\quad\mbox{ and }\quad0\leq n_1\leq\ddot{n}_{1,j}<\Gamma_2.
\end{equation}
By B\'{e}zout's lemma in
\begin{equation}
n_i\Gamma_i-n_l\Gamma_l=\frac{\textbf{r}_l-\textbf{r}_i}{\textbf{m}}\quad\mbox{for }1\leq i\neq l\leq2,
\end{equation}
the folding integers $n_i$ for $i=1,2$ can be determined by
\begin{equation}\label{qiun11}
n_1\equiv\frac{\textbf{r}_2-\textbf{r}_1}{\textbf{m}}\bar{\Gamma}_{12}\mbox{ mod }\Gamma_2\quad\mbox{ and }\quad n_2\equiv\frac{\textbf{r}_1-\textbf{r}_2}{\textbf{m}}\bar{\Gamma}_{21}\mbox{ mod }\Gamma_1,
\end{equation}
where $\bar{\Gamma}_{ij}$ is the modular multiplicative inverse of $\Gamma_i$ modulo $\Gamma_j$, i.e., $1\equiv\Gamma_i\bar{\Gamma}_{ij}\mbox{ mod }\Gamma_j$. Since
\begin{equation}
\textbf{q}_{21}\triangleq\frac{\tilde{\textbf{r}}_1-\tilde{\textbf{r}}_2}{\textbf{m}}=\frac{\textbf{r}_1-\textbf{r}_2}{\textbf{m}}+\frac{\triangle \textbf{r}_1-\triangle \textbf{r}_2}{\textbf{m}},
\end{equation}
it is easy to see from the proof of Lemma \ref{lem7} that the result in Lemma \ref{lem7} also holds for real numbers. Then, following the proofs of Theorem \ref{th1} and Corollary \ref{cor1}, we obtain the final result as desired.
\end{proof}
\begin{remark}
When $j=1$ in Theorem \ref{th3}, a generalization of Theorem \ref{th2} to real numbers is obtained. In this case, the simple \textbf{Algorithm \ref{alg:Framwork2}} is also applicable to real numbers.
\end{remark}

\section{Simulations}\label{sec6}
In this section, we present some simulation results to demonstrate the performance of our proposed algorithms. Let us first consider the case in Example \ref{ex2}, i.e., $m_1=13\cdot18$ and $m_2=13\cdot29$. In this simulation, we consider the remainder error bound $\tau$ from $0$ to $50$, and the remainder errors are uniformly distributed on $[-\tau,\tau]$.
For each level in Table \ref{table1},
the unknown integer $N$ is chosen uniformly at random and less than the corresponding dynamic range, and $2000000$ trials are implemented for each of $\tau$. We use \textbf{Algorithm \ref{alg:Framwork2}} to get the estimate $\hat{N}$ for Level $\uppercase\expandafter{\romannumeral5}$, and \textbf{Algorithm \ref{alg:Framwork22}} for other levels. Fig. \ref{fig3} shows
the mean absolute error $E(|\hat{N}-N|)$ between the estimate $\hat{N}$ and the true $N$ for each level in Table \ref{table1}. One can see that the simulation result matches well with the theoretical analysis in Theorem \ref{th1} or Corollary \ref{cor1}. When the remainder error bound is less than or equal to the robustness bound, the curve of the mean absolute error $E(|\hat{N}-N|)$ is below the curve of the remainder error bound. When the remainder error bound is larger than the robustness bound, the mean absolute error $E(|\hat{N}-N|)$ starts to deviate from the previous line trend and finally breaks the linear error bound, i.e., robust reconstruction may not hold. In Fig. \ref{fig5}, we show the mean relative error $E(|\hat{N}-N|/N)$ between the estimate $\hat{N}$ and the true $N$ for each level in Table \ref{table1}.

\begin{figure}[H]
\centerline{\includegraphics[width=1\columnwidth,draft=false]{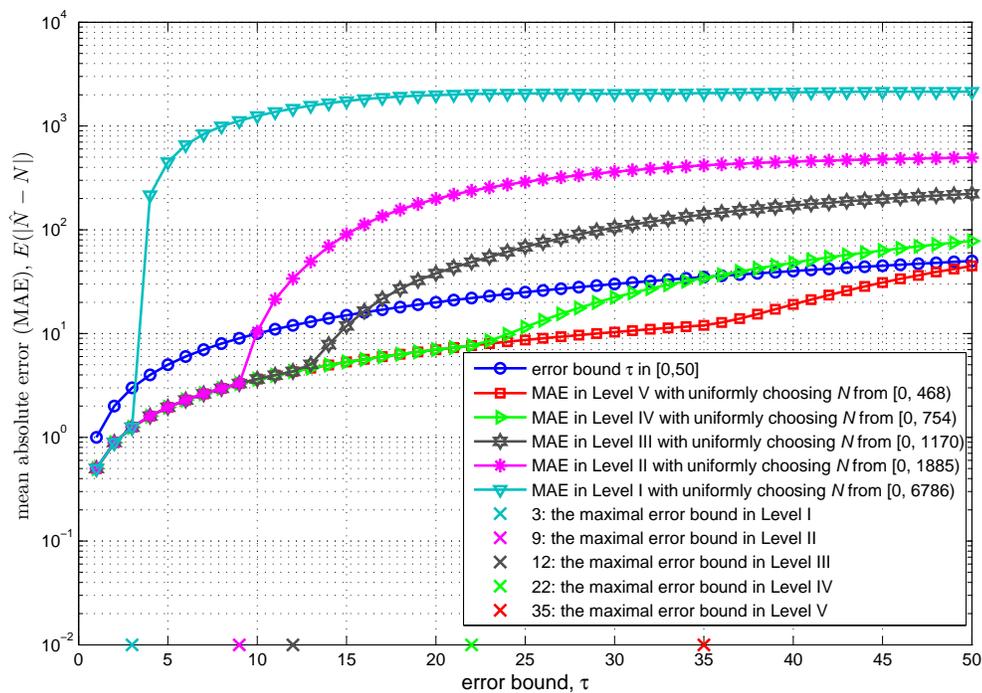}}
\vspace{-0.5cm}
\caption{Mean absolute error and theoretical error bound
 in Table \ref{table1}.}
 \label{fig3}
\end{figure}

\begin{figure}[H]
\centerline{\includegraphics[width=0.96\columnwidth,draft=false]{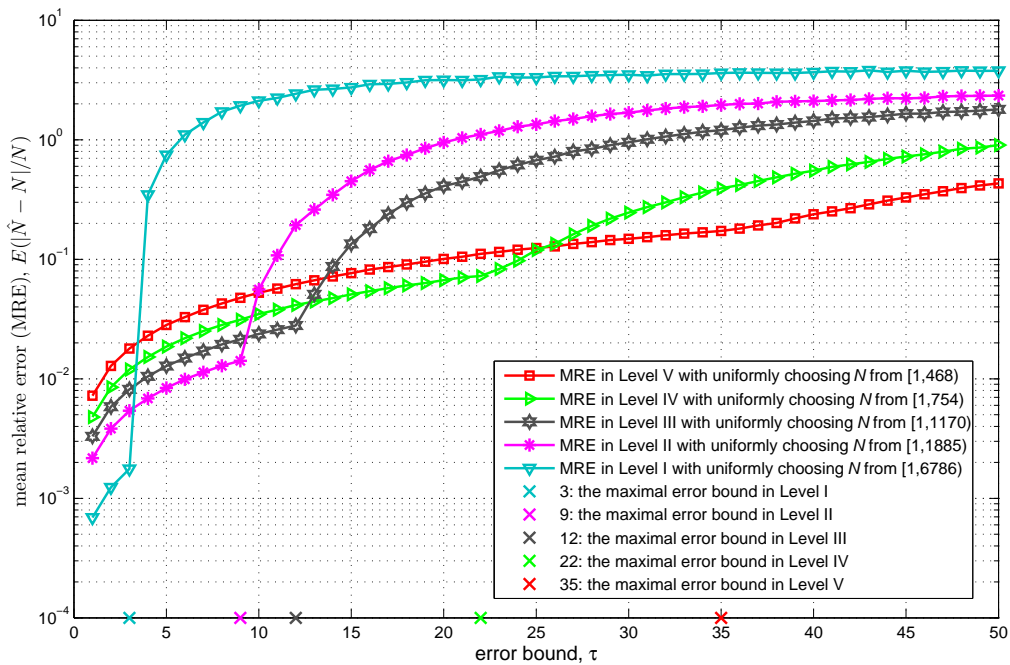}}
\vspace{-0.5cm}
\caption{Mean relative error and theoretical error bound in Table \ref{table1}.}
 \label{fig5}
\end{figure}

Next, for the same two moduli in Example \ref{ex2}, we check the correctness of the obtained dynamic range for Level \uppercase\expandafter{\romannumeral5}, Level \uppercase\expandafter{\romannumeral4}, Level \uppercase\expandafter{\romannumeral3}, Level \uppercase\expandafter{\romannumeral2} in Table \ref{table1}, respectively.
For each level, we let $N$ take some values close to the obtained dynamic range, as in Table \ref{my-label}, and
the remainder errors are uniformly distributed within the robustness bound. $2000000$ trials are implemented for each of $N$. It is shown in Table \ref{my-label} that when $N$ reaches the dynamic range, the mean absolute error $E(|\hat{N}-N|)$ begins to be much larger than the robustness bound, i.e., robustness does not hold.
\begin{table}[H]
\centering
\begin{tabular}{cc|cc|cc|cc}
\hline
\multicolumn{2}{c|}{Level \uppercase\expandafter{\romannumeral5}}                     & \multicolumn{2}{c|}{Level \uppercase\expandafter{\romannumeral4}}                    & \multicolumn{2}{c|}{Level \uppercase\expandafter{\romannumeral3}}                   & \multicolumn{2}{c}{Level \uppercase\expandafter{\romannumeral2}} \\ \hline
\multicolumn{1}{c}{$N$} & \multicolumn{1}{c|}{$E(|\hat{N}-N|)$} & \multicolumn{1}{c}{$N$} & \multicolumn{1}{c|}{$E(|\hat{N}-N|)$} & \multicolumn{1}{c}{$N$} & \multicolumn{1}{c|}{$E(|\hat{N}-N|)$} & \multicolumn{1}{c}{$N$} & \multicolumn{1}{c}{$E(|\hat{N}-N|)$} \\ \hline
465                     & 11.9995                       & 751                    & 7.6525                        & 1167                     & 4.3073                        & 1882                    & 3.3028                       \\
466                     & 11.9988                        & 752                    & 7.6496                        &1168                     &4.3080                        &1883                     & 3.2990                       \\
467                     & 11.9874                        & 753                     & 7.6573                        & 1169                     & 4.3072                        & 1884                    & 3.3012                       \\
\textbf{468}                     & \textbf{397.0580}                        & \textbf{754}                     & \textbf{675.4278}                        & \textbf{1170}                     & \textbf{1.0378e+03}                        & \textbf{1885}                    & \textbf{1.8344e+03}                       \\
469                     & 396.9849                       & 755                     & 675.4135                        & 1171                    & 1.0377e+03                       & 1886                     & 1.8345e+03                       \\
470                    & 397.0385                        & 756                    & 675.2734                        & 1172                     & 1.0374e+03                        & 1887                     & 1.8341e+03                       \\
\hline
\end{tabular}
\caption{Mean absolute error for some neighbors of the dynamic range in Table \ref{table1}.}
\label{my-label}
\end{table}

Finally, we compare the robustness among the single stage and two-stage robust CRT algorithms introduced in \cite{xiaoxia1}, and the proposed \textbf{Algorithm \ref{alg:Framwork22}} in this paper, for the case in Example \ref{ex3}. In the simulation, we consider the remainder error bound $\tau$ from $0$ to $25$, and the remainder errors are uniformly distributed on $[-\tau,\tau]$. The unknown integer $N$ is chosen uniformly at random and less than $13230$, and $2000000$ trials are implemented for each of $\tau$. The curves of the mean absolute error $E(|\hat{N}-N|)$ for the single stage robust CRT algorithm, the two-stage robust CRT algorithm, and \textbf{Algorithm \ref{alg:Framwork22}} in this paper are shown in Fig. \ref{fig4}. The simulation result demonstrates the improvement of the robustness bound for \textbf{Algorithm \ref{alg:Framwork22}}.

\begin{figure}[H]
\centerline{\includegraphics[width=1\columnwidth,draft=false]{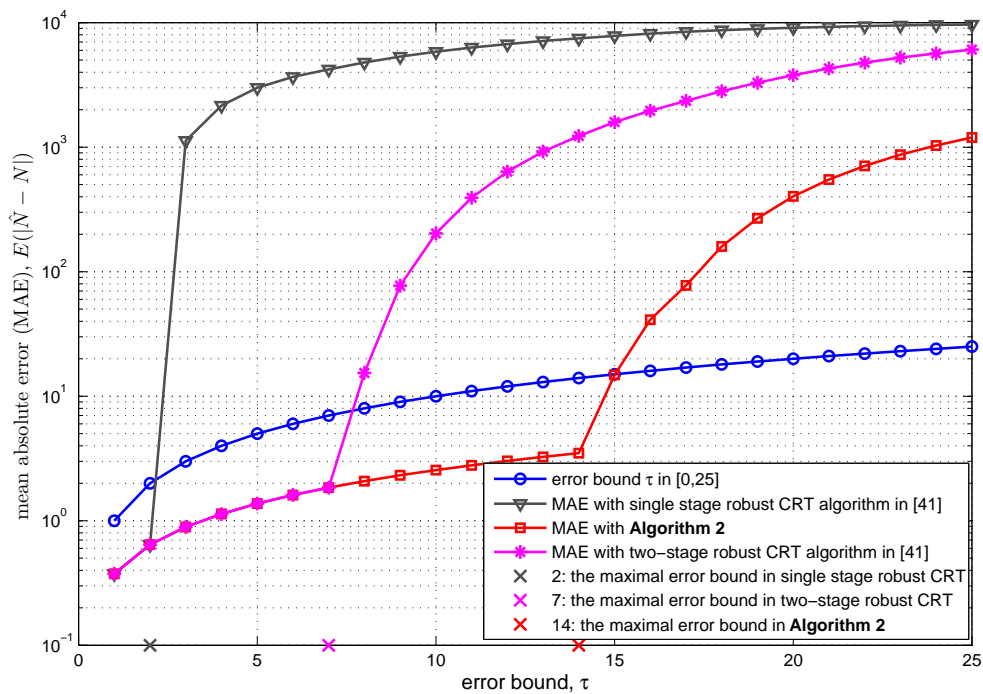}}
\vspace{-0.5cm}
\caption{Mean absolute error and theoretical error bound comparison using single stage and two-stage robust CRT algorithms, and \textbf{Algorithm \ref{alg:Framwork22}} in Example \ref{ex3}.}
 \label{fig4}
\end{figure}

\section{Conclusion}\label{sec7}
In this paper, we investigated a robust reconstruction problem of a large number from its erroneous remainders with several moduli, namely the robust remaindering problem. A relationship between the dynamic range and the robustness bound for two-modular systems was studied in this paper. Compared with the results in \cite{new}, we obtained a general condition on the remainder errors for the robustness to hold based on the idea of accurately determining the folding integers,
derived the exact dynamic range with a closed-form formula, and proposed simple closed-form reconstruction algorithms. We then considered the robust reconstruction for multi-modular systems by applying the newly obtained two-modular results, and generalized these two-modular results from integers to reals. We finally presented some simulations to verify our proposed theory.


\begin{thebibliography}{1}
\bibitem{CRT3}
J. H. McClellan and C. M. Rader,
{\em Number Theory in Digital Signal Processing},
 Englewood Cliffs, NJ: Prentice-Hall, 1979.

\bibitem{CRT1}
H. Krishna, B. Krishna, K.-Y. Lin, and J.-D. Sun,
{\em Computational Number Theory and Digital Signal Processing: Fast Algorithms and Error Control Techniques},
Boca Raton, FL: CRC, 1994.

\bibitem{CRT2}
C. Ding, D. Pei, and A. Salomaa,
{\em Chinese Remainder Theorem: Applications in Computing, Coding,
Cryptography}, Singapore: World Scientific, 1999.

\bibitem{wxu1994}
W. Xu, E. C. Chang, L. K. Kwoh, H. Lim, and W. C. A. Heng, ``Phase unwrapping of SAR interferogram with multi-frequency
or multi-baseline,'' in {\em Proc. IGARSS}, 1994, pp. 730-732.

\bibitem{jorgensen2000}
D. P. Jorgensen, T. R. Shepherd, and A. S. Goldstein, ``A dual-pulse repetition frequency scheme for mitigating
velocity ambiguities of the NOAA P-3 airborne Doppler radar,''
{\em J. Atmos. Ocean. Technol.}, vol. 17, no. 5, pp. 585-594,
May 2000.

\bibitem{gwang2004}
G. Wang, X.-G. Xia, V. C. Chen, and R. L. Fiedler,
``Detection, location,
and imaging of fast moving targets using multifrequency antenna
array SAR,''
{\em IEEE Trans. Aerosp. Electron. Syst.}, vol. 40, no. 1, pp.
pp. 345-355,  Jan. 2004.

\bibitem{mruegg2007}
 M. Ruegg, E. Meier, and D Nuesch, ``Capabilities of dual-frequency millimeter wave SAR with monopulse processing
for ground moving target indication,''
{\em IEEE Trans. Geosci. Remote Sens.}, vol. 45, no. 3, pp. 539-553, Mar. 2007.

\bibitem{gli1-2007}
G. Li, J. Xu, Y.-N. Peng, and X.-G. Xia,
``Location and imaging of moving targets using non-uniform
linear antenna array,''
{\em IEEE Trans. Aerosp. Electron. Syst.}, vol. 43, no. 3,
pp. 1214-1220, Jul. 2007.

\bibitem{gli2-2008}
G. Li, H. Meng, X.-G. Xia, and Y.-N. Peng,
``Range and velocity estimation of moving targets using
multiple stepped-frequency pulse trains,''
{\em Sensors}, vol. 8, pp. 1343-1350, 2008.

\bibitem{yimin2008}
Y. D. Zhang and M. G. Amin,
``MIMO radar exploiting narrowband frequency-hopping waveforms,''
in {\em Proc. 16th European Signal Processing Conference (EUSIPCO 2008)},
Lausanne, Switzerland, August 25-29, 2008.

\bibitem{yimin2015}
S. Qin, Y. D. Zhang, and M. G. Amin,
``Sparsity-based multi-target localization exploiting multi-frequency coprime array,''
in {\em IEEE China Summit and International Conference on Signal and Information Processing},
Chengdu, China, July 12-15, 2015.


\bibitem{wkqi2009}
W.-K. Qi, Y.-W. Dang, and W.-D. Yu,
``Deblurring velocity ambiguity of distributed space-borne
SAR based on Chinese remainder theorem,''
{\em J. Electron. $\&$ Inform. Tech.}, vol. 31, no. 10, pp. 2493-2496,
 Oct. 2009.

\bibitem{xwli1-2011}
X. W. Li and X.-G. Xia,
``Location and imaging of elevated moving
target using multi-frequency velocity
SAR with cross-track interferometry,''
{\em IEEE Trans. Aerosp. Electron. Syst.}, vol. 47,
no. 2, pp. 1203-1212, Apr. 2011.

\bibitem{grslet2013}
Z. Yuan, Y. Deng, F. Li, R. Wang, G. Liu, and X. Han,
``Multichannel InSAR DEM reconstruction through improved closed-form robust Chinese remainder theorem,'' {\em IEEE Geosci. Remote Sens. Lett.}, vol. 10, no. 6, pp. 1314-1318, Nov. 2013.

\bibitem{aa2015}
A. Akhlaq, R. G. McKilliam, and R. Subramanian, ``Basic Construction for range estimation by phase unwrapping,''
{\em IEEE Signal Process. Lett.}, vol.
22, no. 11, pp. 2152-2156, Nov. 2015.


\bibitem{fala2-2011}
K. Falaggis, D. P. Towers, and C. E. Towers, ``Unified theory of phase unwrapping approaches in multiwavelength interferometry,'' {\em Proc. SPIE 8011, 22nd Congress of the International Commission for Optics: Light for the Development of the World}, vol. 8011, Puebla, Mexico, Aug. 2011.

\bibitem{fala1-2013}
K. Falaggis, D. P. Towers, and C. E. Towers, ``Method of excess fractions with application to absolute distance metrology: analytical solution,'' {\em Applied Optics}, vol. 52, no. 23, pp. 5758-5765, Aug. 2013.

\bibitem{tang}
S. Tang, X. Zhang, and D. Tu, ``Micro-phase measuring profilometry: Its sensitivity analysis and phase unwrapping,'' {\em Opt. Lasers Eng.}, vol. 72, pp. 47-57, Sep. 2015.

\bibitem{dad3-2003}
S. Chessa and P. Maestrini, ``Dependable and secure data storage and retrival in mobile, wireless networks,'' in {\em Proc. Int. Conf. Dependable Syst. Netw.}, pp. 207-216, 2003.

\bibitem{dad4-2012}
G. Campobello, A. Leonardi, and S. Palazzo, ``Improving energy saving and reliability in wireless sensor networks using a simple CRT-based packet-forwarding solution,'' {\em IEEE/ACM Trans. Netw.}, vol. 20, pp. 191-205, Feb. 2012.

\bibitem{deng}
P. Deng and Y. Cui,
``An improved distance estimation algorithm based on generalized CRT,''
in {\em IEEE Vehicular Technology
Conference},
Quebec City, Sep. 3-6, 2012.

\bibitem{chessa2012}
S. Chessa and P. Maestrini, ``Robust distributed storage of residue encoded data,'' {\em IEEE Trans. Inf. Theory}, vol. 58, pp. 7280-7294, Dec. 2012.

\bibitem{dad1-2013}
G. Campobello, S. Serrano, L. Galluccio, and S. Palazzo, ``Applying the Chinese remainder theorem to data aggregation in wireless sensor networks,'' {\em IEEE Commun. Lett.}, vol. 17, pp. 1000-1003, May 2013.

\bibitem{yishengsu}
Y.-S. Su, ``Topology-transparent scheduling via the Chinese remainder theorem,'' {\em IEEE/ACM Trans. Netw.}, vol. 23, pp. 1416-1429, Oct. 2015.

\bibitem{haft}
T. Hafting, M. Fyhn, S. Molden, M.-B. Moser, and E. I. Moser, ``Microstructure of a spatial map in the entorhinal cortex,'' {\em Nature}, vol. 436, pp. 801-806, Jun. 2005.

\bibitem{fiete}
I. R. Fiete, Y. Burak, and T. Brookings, ``What grid cells convey about rat location,'' {\em J. Neurosci.}, vol. 28, pp. 6858-6871, Jul. 2008.

\bibitem{fiete2}
I. Fiete, D. J. Schwab, and N. M. Tran, ``A binary Hopfield network with $1/log(n)$ information rate and applications to grid cell decoding,'' arXiv:1407.6029, Jul. 2014.

\bibitem{stemm}
M. Stemmler, A. Mathis, and A. V. M. Herz, ``Connecting multiple spatial scales to decode the population activity of grid cells,'' {\em Sci. Adv.}, vol. 1, e1500816, Dec. 2015.


\bibitem{xia1-2007}
X.-G. Xia and G. Wang, ``Phase unwrapping and a robust Chinese remainder theorem,''
{\em IEEE Signal Process. Lett.}, vol.
14, no. 4, pp. 247-250, Apr. 2007.

\bibitem{xwli2-2009}
X. W. Li, H. Liang, and X.-G. Xia, ``A robust Chinese remainder theorem with its applications in frequency estimation
from undersampled waveforms,''
{\em IEEE Trans. Signal Process.}, vol. 57, no. 11, pp. 4314-4322, Nov. 2009.

\bibitem{wjwang2010}
W. J. Wang and X.-G. Xia,
``A closed-form robust Chinese remainder theorem and its performance analysis,''
{\em IEEE Trans. Signal Process.}, vol. 58, no. 11, pp. 5655-5666,
Nov. 2010.

\bibitem{guangwu}
G. Xu, ``On solving a generalized Chinese remainder theorem in the presence of remainder errors,'' arXiv:1409.0121v3, Sep. 2014.

\bibitem{binyang2014}
B. Yang, W. J. Wang, X.-G. Xia, and Q. Yin,
``Phase detection based range estimation with a
dual-band robust Chinese remainder theorem,''
{\em Science China}-{\em Information Sciences}, vol. 57, no. 2, pp. 1-9, Feb. 2014.

\bibitem{xiaoxia1}
L. Xiao, X.-G. Xia, and W. J. Wang, ``Multi-stage robust Chinese remainder theorem,'' {\em IEEE Trans. Signal Process.}, vol. 62, pp. 4772-4785, Sep. 2014.

\bibitem{xiaoxia2}
L. Xiao and X.-G. Xia, ``Error correction in polynomial remainder codes with non-pariwise coprime moduli and robust Chinese remainder theorem for polynomials,'' {\em IEEE Trans. Commun.}, vol. 63, pp. 605-616, Mar. 2015.


\bibitem{new}
B. Parhami, ``Digital arithmetic in nature: continuous-digit RNS,'' {\em The Computer J.}, vol. 58, pp. 1214-1223, May 2015.

\bibitem{wenchao2013}
W. C. Li, X. Z. Wang, X. M. Wang, and B. Moran,
``Distance estimation using wrapped phase measurements in noise,''
{\em IEEE Trans. Signal Process.}, vol. 61, no. 7, pp. 1676-1688,
Apr. 2013.

\bibitem{wenjie2015}
W. J. Wang, X. P. Li, W. Wang, and X.-G. Xia,
``Maximum likelihood estimation based robust Chinese remainder theorem for real numbers and its fast algorithm,''
{\em IEEE Trans. Signal Process.}, vol. 63, no. 13, pp. 3317-3331,
Jul. 2015.

\bibitem{kls1-1992}
H. Krishna, K. Y. Lin, and J. D. Sun, ``A coding theory approach to error control in redundant residue number systems. Part I: Theory and signal error correction,'' {\em IEEE Trans. Circuits Syst.}, vol. 39, pp. 8-17, Jan. 1992.

\bibitem{kls2-1992}
J. D. Sun and H. Krishna, ``A coding theory approach to error control in redundant residue number systems. Part II: Multiple error detection and correction,'' {\em IEEE Trans. Circuits Syst.}, vol. 39, pp. 18-34, Jan. 1992.

\bibitem{lle2-2000}
T. Keller, T. H. Liew, and L. Hanzo, ``Adaptive redundant residue number system coded multicarrier modulation,'' {\em IEEE J. Areas Commun.}, vol. 18, pp. 2292-2301, Nov. 2000.

\bibitem{lly2-2002}
L. L. Yang and L. Hanzo, ``Performance of a residue number system based parallel communications system using orthogonal signaling: Part I---System outline,'' {\em IEEE Trans. Veh. Technol.}, vol. 51, pp. 1528-1540, Nov. 2002.

\bibitem{lly3-2002}
L. L. Yang and L. Hanzo, ``A residue number system based parallel communication scheme using orthogonal signaling: Part II---Multipath fading channels,'' {\em IEEE Trans. Veh. Technol.}, vol. 51, pp. 1547-1559, Nov. 2002.

\bibitem{lle-2006}
T. H. Liew, L. L. Yang, and L. Hanzo, ``Systematic redundant residue number system codes: Analytical upper bound and iterative decoding performance over AWGN and Rayleigh channels,'' {\em IEEE Trans. Commun.}, vol. 54, pp. 1006-1016, Jun. 2006.

\bibitem{vgoh2008}
V. T. Goh and M. U. Siddiqi, ``Multiple error detection and correction based on redundant residue number system,'' {\em IEEE Trans. Commun.}, vol. 56, pp. 325-330, Mar. 2008.

\bibitem{sundaram}
S. Sundaram and C. N. Hadjicostis, ``Fault-tolerant convolution via Chinese remainder codes constructed from non-coprime moduli,'' {\em IEEE Trans. Signal Process.}, vol. 56, pp. 4244-4254, Sep. 2008.

\bibitem{goldreich}
O. Goldreich, D. Ron, and M. Sudan, ``Chinese remaindering with errors,''
{\em IEEE Trans. Inform. Theory}, vol. 46, no. 7, pp. 1330-1338, Jul. 2000.

\bibitem{guruswa}
V. Guruswami, A. Sahai, and M. Sudan, ``Soft-decision decoding of
Chinese remainder codes,'' in {\em Proc. 41st IEEE Symp.
Foundations Computer Science}, Redondo Beach, CA, 2000, pp. 159-168.

\bibitem{boneh}
D. Boneh, ``Finding smooth integers in short intervals using CRT decoding,''
{\em J. Comp. and Syst. Sci.}, vol. 64, pp. 768-784, Jun. 2002.

\bibitem{shparlinski}
I. E. Shparlinski and R. Steinfeld, ``Noisy Chinese remaindering in
the Lee norm,'' {\em J. Complex.}, vol. 20, pp. 423-437, 2004.




\end{thebibliography}
\end{document}